\newcolumntype{L}[1]{>{\raggedright\arraybackslash}p{#1}}
\newcolumntype{C}[1]{>{\centering\arraybackslash}m{#1}}
\newcolumntype{R}[1]{>{\raggedleft\arraybackslash}p{#1}}
\tikzstyle{square}  = [rectangle, minimum size = 7.5mm, thick, draw = black]
\tikzstyle{cycle}  = [rectangle, minimum size = 7.5mm, thick, draw = black]
\theoremstyle{plain}
\newtheorem{thm}{Theorem}[section]
\newtheorem{lem}[thm]{Lemma}
\newtheorem{prop}[thm]{Proposition}
\newtheorem{corol}[thm]{Corollary}
\DeclareMathAlphabet{\mathcal}{OMS}{cmsy}{m}{n}
\theoremstyle{definition}
\newtheorem{lemma}{Lemma}[section]
\newtheorem{definition}{Definition}[section]
\newtheorem*{claim}{Claim}
\newcommand{\CInterval}{\mathcal{I}_c}
\newcommand{\VInterval}{\mathcal{I}_v}
\newcommand{\Boundary}{\partial}
\newcommand{\Interior}{\mathcal{I}}
\theoremstyle{remark}
\newtheorem*{rem}{Remark}
\newtheorem*{example}{Example}
\newcommand{\Neighbor}{\mathcal{N}}
\newcommand{\GDependencyGraph}{G_D}
\newcommand{\DependencyGraph}{G}
\newcommand{\GBipartiteGraph}{G_B}
\newcommand{\Subspace}{V}
\newcommand{\SubspaceSet}{\vvec{V}} 
\newcommand{\GDependencyGraphB}{G_D(G_B)}
\newcommand{\EventSet}{\vvec{A}}
\newcommand{\EventSetB}{\vvec{B}}
\newcommand{\Event}{A}
\newcommand{\Neg}[1]{\overline{#1}}
\renewcommand{\vec}[1]{\mathbf{#1}}
\renewcommand{\Pr}{\mathbb{P}}
\newcommand{\R}{\mathbb{R}}
\newcommand{\bigplus}{\sum}
\newcommand{\EventVariable}{event-variable graph}
\renewcommand{\vec}[1]{\boldsymbol{#1}}
\newcommand{\vvec}[1]{\vec{#1}}
\newcommand\abs[1]{\left|#1\right|}
\DeclareMathOperator*{\ddim}{\dim} 
\DeclareMathOperator*{\RR}{\mathbb{R}} 
\newcommand{\lloc}{\ast}   
\def\vecp{\vec{p}}
\begin{document}
\title{Quantum Lov{\'a}sz Local Lemma: Shearer's Bound is Tight}

\author{
Kun He$^{*}$, Qian Li $^{\dag}$, Xiaoming Sun$^{\dag}$ and Jiapeng Zhang $^{\ddag}$
}

\footnotetext[1]{The Key Lab of Data Engineering and Knowledge Engineering, MOE, Renmin University of China, Beijing, China. 
Email:\texttt{hekun.threebody@foxmail.com}}

\footnotetext[2]{Institute of Computing Technology, Chinese Academy of Sciences. University of Chinese Academy of Sciences. Beijing, China.
Email:\texttt{liqian,sunxiaoming@ict.ac.cn}}
\footnotetext[3]{University of Southern California. Email:\texttt{jiapengz@usc.edu}}
\date{}
\maketitle

\begin{abstract}
The Lov{\'a}sz Local Lemma (LLL) is a very powerful tool in combinatorics and probability theory to show the possibility of avoiding all bad events under some weakly dependent conditions. 
In a seminal paper, Ambainis, Kempe, and Sattath (JACM 2012) introduced a quantum version LLL (QLLL) which shows the possibility of avoiding all ``bad" Hamiltonians under some weakly dependent condition, and applied QLLL to the random k-QSAT problem. Sattath, Morampudi, Laumann, and Moessner (PNAS 2015) extended Ambainis, Kempe, and Sattath's result and showed that Shearer's bound is a sufficient condition for QLLL, and conjectured that Shearer's bound is indeed the tight condition for QLLL.

In this paper, we affirm this conjecture. Precisely, we prove that Shearer's bound is tight for QLLL, i.e., the relative dimension of the smallest satisfying subspace is completely characterized by the independent set polynomial. 
Our result implies the tightness of Gily{\'e}n and Sattath's algorithm (FOCS 2017), and also implies that the lattice gas partition function fully characterizes quantum satisfiability for almost all Hamiltonians with large enough qudits (Sattath, Morampudi, Laumann and Moessner, PNAS 2015).

The commuting LLL (CLLL), which focuses on commuting local Hamiltonians, is also investigated here. We prove that the tight regions of CLLL and QLLL are different in general. This result indicates that it is possible to design an algorithm for CLLL which is still efficient beyond Shearer's bound.

\end{abstract}

\newpage

\section{Introduction}
\noindent\textbf{Classical Lov{\'a}sz Local Lemma}\quad The \emph{Lov{\'a}sz Local Lemma} (or LLL) is a very powerful tool in combinatorics and probability theory to show the possibility of avoiding all ``bad" events under some ``weakly dependent" condition. 
Formally, given a set $\EventSet=\{A_1,\cdots,A_m\}$ of bad events in a probability space, the LLL is a condition on the probabilities of $\EventSet$ and the dependency among $\EventSet$  under which $\Pr(\cap_{\Event\in\EventSet}\Neg{\Event}) >0$. The dependency among events is characterized by an undirected graph, called dependency graph. Precisely, a dependency graph is an undirected graph $\GDependencyGraph=([m],E_D)$ such that for any vertex $i$, $\Event_i$ is independent of $\{\Event_j: j\notin \Gamma_i \cup \{i\}\}$, where $\Gamma_i$ stands for the set of neighbors of $i$ in $\GDependencyGraph$. In this setting, finding the conditions under which $\Pr(\cap_{\Event\in\EventSet}\Neg{\Event}) >0$ is reduced to the following problem: given a graph $\GDependencyGraph$, determine its abstract interior $\mathcal{I}(\GDependencyGraph)$ which is the set of vectors $\vec{p}$ such that $\Pr\left(\cap_{\Event\in \EventSet} \Neg{\Event} \right)>0$ for any event set $\EventSet$ with dependency graph $\GDependencyGraph$ and probability vector $\vec{p}$. Local solutions to this problem, including the first LLL proved in 1975 by Erd{\H{o}}s and Lov{\'a}sz ~\cite{erdos1975problems}, are referred as abstract LLLs. 

The most frequently used abstract LLL is as follows:
\begin{thm}[\cite{spencer1977asymptotic}]\label{thm:asymmetric}
Given a dependency graph $\DependencyGraph_D=([m],E_D)$ and a probability vector $\vec{p}\in [0,1]^n$, if there exist real numbers $x_1,...,x_n \in (0,1)$ such that $p_i \leq x_i \prod_{j\in  \Gamma_i} (1-x_j)$ for any $i\in[m]$, then $\vec{p}\in \mathcal{I}(\DependencyGraph_D)$.
\end{thm}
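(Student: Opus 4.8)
\medskip

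\noindent The plan is to follow the classical inductive argument behind this form of the lemma (Spencer). The crux is the following claim: for every $i\in[m]$ and every $S\subseteq[m]\setminus\{i\}$ with $\Pr\big(\bigcap_{j\in S}\Neg{\Event_j}\big)>0$,
\[
\Pr\Big(\Event_i \;\Big|\; \bigcap_{j\in S}\Neg{\Event_j}\Big)\le x_i .
\]
I would prove this by induction on $|S|$, simultaneously carrying along the auxiliary statement that $\Pr\big(\bigcap_{j\in S}\Neg{\Event_j}\big)>0$ for every $S\subseteq[m]$, which is what makes all the conditional probabilities below well defined.

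\medskip

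\noindent The base case $S=\emptyset$ is immediate: $\Pr(\Event_i)=p_i\le x_i\prod_{j\in\Gamma_i}(1-x_j)\le x_i$ since every factor lies in $(0,1)$. For the inductive step, split $S$ into $S_1:=S\cap\Gamma_i$ and $S_2:=S\setminus\Gamma_i$ and use
\[
\Pr\Big(\Event_i \;\Big|\; \bigcap_{j\in S}\Neg{\Event_j}\Big)
=\frac{\Pr\big(\Event_i\cap\bigcap_{j\in S_1}\Neg{\Event_j}\mid\bigcap_{k\in S_2}\Neg{\Event_k}\big)}
{\Pr\big(\bigcap_{j\in S_1}\Neg{\Event_j}\mid\bigcap_{k\in S_2}\Neg{\Event_k}\big)} .
\]
For the numerator, drop the conjunction over $S_1$ and invoke the independence of $\Event_i$ from $\{\Event_k:k\in S_2\}$ (legitimate since $S_2\cap(\Gamma_i\cup\{i\})=\emptyset$) to get the upper bound $\Pr(\Event_i)=p_i$. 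For the denominator, enumerate $S_1=\{j_1,\dots,j_r\}$, telescope it into $\prod_{\ell=1}^{r}\big(1-\Pr(\Event_{j_\ell}\mid \Neg{\Event_{j_1}}\cap\cdots\cap\Neg{\Event_{j_{\ell-1}}}\cap\bigcap_{k\in S_2}\Neg{\Event_k})\big)$, and apply the induction hypothesis to each factor: each of these conditioning sets is contained in $S$, has size at most $|S|-1$, and does not contain $j_\ell$, so each factor is $\ge 1-x_{j_\ell}$, giving the lower bound $\prod_{j\in S_1}(1-x_j)\ge\prod_{j\in\Gamma_i}(1-x_j)$. Dividing yields $\Pr(\Event_i\mid\bigcap_{j\in S}\Neg{\Event_j})\le p_i/\prod_{j\in\Gamma_i}(1-x_j)\le x_i$, as wanted. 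The auxiliary positivity statement is maintained in the same induction: peeling off one event $i_0\in S$, $\Pr\big(\bigcap_{j\in S}\Neg{\Event_j}\big)=\Pr\big(\Neg{\Event_{i_0}}\mid\bigcap_{j\in S\setminus\{i_0\}}\Neg{\Event_j}\big)\cdot\Pr\big(\bigcap_{j\in S\setminus\{i_0\}}\Neg{\Event_j}\big)$, where the first factor is $\ge 1-x_{i_0}>0$ by the claim for the smaller set and the second is positive by induction.

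\medskip

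\noindent To finish, telescope over all events:
\[
\Pr\Big(\bigcap_{i=1}^{m}\Neg{\Event_i}\Big)=\prod_{i=1}^{m}\Pr\Big(\Neg{\Event_i}\;\Big|\;\bigcap_{j<i}\Neg{\Event_j}\Big)\ge\prod_{i=1}^{m}(1-x_i)>0 ,
\]
each factor being bounded via the claim; hence $\vec{p}\in\mathcal{I}(\GDependencyGraph)$. Since nothing in the argument used the event set beyond its dependency graph and probability vector, this is exactly the assertion of the theorem. I expect the only genuinely delicate point to be the bookkeeping in the inductive step — keeping every conditioning set strictly smaller than $S$ while simultaneously preserving the positivity statement, so that all the conditional probabilities are legitimate; once the split $S=S_1\cup S_2$ and the order of telescoping are fixed, the remaining estimates are routine.
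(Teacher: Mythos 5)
Your argument is the standard inductive proof of the asymmetric Lov\'asz Local Lemma (Spencer's form), and it is correct: the split $S=S_1\cup S_2$, the use of mutual independence of $\Event_i$ from $\{\Event_k:k\in S_2\}$ to bound the numerator by $p_i$, the telescoping of the denominator with the induction hypothesis on strictly smaller conditioning sets, and the joint inductive carrying of the positivity statement are all exactly the right moves. The paper cites this theorem from the literature and does not include its own proof, so there is nothing to compare against; your proposal faithfully reproduces the classical argument it is attributed to.
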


Shearer~\cite{shearer1985problem} provided the exact characterization of $\mathcal{I}(\DependencyGraph_D)$ with the independence polynomial defined as follows.

\begin{definition}[Multivariate independence polynomial]\label{def: indpoly}
Let $G_D=([m],E_D)$, $\vec{x}=(x_i:i\in [m])$, and $\mathsf{Ind}(G_D)$ be the set of all independent sets of $G_D$. 
For each vertex set $S\subseteq [m]$, define
$$I(G_D,\vec{x},S)\triangleq \sum_{T\subseteq S:T\in \mathsf{Ind}(G_D)}(-1)^{|T|}\prod_{i\in T}x_i.$$ 
We call $I(G_D,\vec{x}) \triangleq I(G_D,\vec{x},[m])$
the \emph{multivariate independence polynomial} of $G_D$.
\end{definition}

\begin{definition}[Shearer's bound]\label{def:shearerbound}
Given a dependency graph $G_D=([m],E_D)$,
a probability vector $\vecp=(p_1,\cdots,p_m)\in [0,1]^{m}$ is called \emph{beyond the Shearer's bound} of $G_D$ if there exists a vertex set $S\subseteq [m]$ such that $I(G_D,\vecp,S)\leq 0$. Otherwise we say $\vecp$ is \emph {in the Shearer's bound} of $G_D$.
\end{definition}

The tight criterion under which the abstract LLL holds provided by Shearer is as follows.
Given any set of events $\vec{A} = \{A_1,\cdots,A_m\}$, 
we use $\vec{A}\sim (G_D,(p_1,\cdots,p_m))$ to denote that
$\vec{A}$ has $G_D$ as dependency graph and satisfies $\Pr(A_i)= p_i$ for any $i\in [m]$.

\begin{thm}[\cite{shearer1985problem}]\label{thm:shearer1985problem}
For a dependency graph $G_D =([m],E_D)$ and a probability vector $\vecp=(p_1,\cdots,p_m)\in [0,1]^{m}$, the following conditions are equivalent:
\begin{enumerate}
  \item $\vecp$ is in the Shearer's bound of $G_D$.
  \item for any probability space $\Omega$, and any event set $\vec{A} = \{A_i\subseteq\Omega:i\in [m]\}$ where $\vec{A}\sim (G_D,\vecp)$, we have $\Pr(\cap_{A\in \vec{A}}\Neg{A})\geq I(G_D,\vecp)>0$.
\end{enumerate}
In addition, let $\delta=I(G_D,\vecp)$
if $\vec{p}\in \mathcal{I}(\DependencyGraph_D)$
and be $\delta=0$ otherwise.
Then there exists a set of events $\vec{A}\sim (G_D,\vecp)$
such that $\Pr(\cap_{A\in \vec{A}}\Neg{A}) = \delta$.
\end{thm}
\noindent In other words, $\vec{p}\in \mathcal{I}(\DependencyGraph_D)$ if and only if $\vecp$ is in the Shearer's bound of $G_D$.

\emph{Variable version Lov{\'a}sz Local Lemma} (or VLLL), which focuses on variable-generated events, is another important version of LLL. In this version, there is a set of mutually independent random variables $\mathcal{X}=\{X_1,\cdots,X_n\}$, and each event $A_i$ can be fully determined by some subset $\mathcal{X}_i\subseteq\mathcal{X}$ of variables. 
An event-variable graph is a bipartite graph $\GBipartiteGraph=([m],[n],E_B)$ such that for any $X_j \in \mathcal{X}_i$, there is an edge $(i,j) \in [m]\times [n]$. Similar to the abstract-LLL, the VLLL is for solving the following problem: given a bipartite graph $\GBipartiteGraph$, determine its variable interior $\mathcal{VI}(\GBipartiteGraph)$ which is the set of vectors $\vec{p}$ such that $\Pr\left(\cap_{\Event\in \EventSet} \Neg{\Event} \right)>0$ for any variable-generated event system $\EventSet$ with \EventVariable ~$\GBipartiteGraph$ and probability vector $\vec{p}$.

The VLLL is important because many problems in which LLL has applications naturally conform with the variable setting, including hypergraph coloring \cite{mcdiarmid1997hypergraph}, satisfiability \cite{gebauer2016local,gebauer2009lovasz}, counting solutions to CNF formulas \cite{moitra2016approximate}, acyclic edge coloring \cite{giotis2017acyclic}. Moreover, a major line of research on constructive LLLs is based on the variable model \cite{moser2010constructive,pegden2014extension,kolipaka2011moser,soda23mt}.

A key problem around the VLLL is whether Shearer's bound is tight for VLLL \cite{kolipaka2011moser}. Precisely, given a bipartite graph $\GBipartiteGraph=(U,V,E)$, its \emph{base graph} is defined as the graph $\GDependencyGraphB=(U,E')$ such that for any two nodes $u_i, u_j \in U$, there is an edge $(u_i, u_j) \in E'$ if and only if $u_i$ and $u_j$ share some common neighbor in $\GBipartiteGraph$. Observe that $\GDependencyGraphB$ is a dependency graph of the variable-generated event system with \EventVariable ~$\GBipartiteGraph$. Thus we have $\mathcal{I}(\GDependencyGraphB)\subseteq \mathcal{VI}(\GBipartiteGraph)$. If $\mathcal{I}(\GDependencyGraphB)\neq\mathcal{VI}(\GBipartiteGraph)$, we say that Shearer's bound is not tight for $\GBipartiteGraph$, or $\GBipartiteGraph$ has a gap. The first example of gap existence is a bipartite graph whose base graph is a cycle of length 4 \cite{kolipaka2011moser}. Then, He et al.~ \cite{he2017variable} showed that Shearer's bound is generally not tight for VLLL. More precisely, Shearer's bound is tight if the base graph $G_D(G_B)$ is a tree, while not tight if $G_D(G_B)$ has an induced cycle of length at least 4. The remaining case when $G_D(G_B)$ has only 3-cliques is partially solved.

\vspace{1em}
\noindent\textbf{Quantum Satisfiability and Quantum Lovasz Local Lemma}\quad Most systems of physical interest can be described by local Hamiltonians $H=\sum_i H_i$ where each $k$-local term $H_i$ acts nontrivially on at most $k$ qudits. We say $H$ is frustration free if the ground state of $H$ is also the ground state of every $H_i$. Let $\Pi_i$ be the projection operator on the excited states of $H_i$ and $\Pi=\sum\Pi_i$, then it is easy to see that the frustration freeness of $H$ and $\Pi$ are the same. Henceforth, we only care about the Hamiltonians that are projectors. Determining whether a given $\Pi$ is frustration free (or satisfiable, in computer science language), known as the quantum satisfiability problem, is a central pillar in quantum complexity theory, and has many applications in quantum many-body physics.

Unfortunately, the quantum satisfiability problem has been shown to be QMA$_1$-complete~\cite{bravyi2011efficient}, which is widely believed to be intractable in general even for quantum computing. This makes it highly desirable to search for efficient heuristics and algorithms in order to, at least,
partially answer this question.

In a seminal paper, by generalizing the notations of probability and independence as described in the following table, Ambainis, Kempe, and Sattath~\cite{ambainis2012quantum} introduced a quantum version LLL (or QLLL), which is a sufficient condition on the relative dimensions and the dependency graph under which the Hamiltonian is guaranteed to be frustration free. Here, the relative dimension of a Hamiltonian is defined as that of the subspace it projects. Utilizing the QLLL, they \cite{ambainis2012quantum} greatly improved the known critical density for random $k$-QSAT from $\Omega(1)$~\cite{Laumann2010On} to $\Omega(2^k/k^2)$, almost meeting the best known upper bound of $O(2^k)$~\cite{Laumann2010On}.

\vspace{1em}
\begin{tabular}{lll}
Probability space $\Omega$ & $\rightarrow$ &  Vector space $V$\\
Event $A$ & $\rightarrow$ & Subspace $A\subseteq V$\\
Complement $\overline{A}=\Omega\backslash A$ & $\rightarrow$ & Orthogonal completementary subspace $A^{\perp}$\\
Probability $\Pr(A)$ & $\rightarrow$ & Relative dimension $\R(A):=\frac{\dim(A)}{\dim(V)}$\\
Disjunction $A\vee B$ & $\rightarrow$ & $A+B=\{a+b|a\in A, b\in B\}$\\
Conjunction $A\wedge B$ & $\rightarrow$ & $A\cap B$\\
Independence $\Pr(A\wedge B) = P(A)\cdot P(B)$ & $\rightarrow$ & $\R(A\cap B)=\R(A)\cdot \R(B)$\\
Conditioning $\Pr(A|B)=\frac{\Pr(A\wedge B)}{\Pr(B)}$ & $\rightarrow$ & $\R(A|B):=\frac{\R(A\cap B)}{\R(B)}$
\end{tabular}
\vspace{1em}
\\

Then, Sattath et al.~\cite{pnas} leveraged Shearer's technique to the QLLL and showed that Shearer's bound is still a sufficient condition for QLLL. Here, the interaction bipartite graph, which is the quantum analog of the classical event-variable graph, is used to characterize the dependency between Hamiltonians and qudits. In an interaction  bipartite graph, the left vertices represent Hamiltonians, the right vertices represent qudits, and an edge between a left vertex and a right vertex means the corresponding Hamiltonian acts on the corresponding qudit. Remarkably, the probability threshold of Shearer's bound turns out to be the first negative
fugacity of  the hardcore lattice gas partition function, which has been extensively studied in classical statistical mechanics. Utilizing tools in classical statistical
 mechanics, they concretely apply QLLL to evaluate the critical threshold for various regular lattices. In contrast to the classical VLLL \cite{he2017variable} which generally goes beyond Shearer's bound, Sattath et al. \cite{pnas} conjectured that Shearer's bound is tight for QLLL, which, if true, would have important physical significance and several striking consequences ~\cite{pnas}.

In the past few years, as a special case of the quantum satisfiability problem, the commuting local Hamiltonian problem (CLH), where $[\Pi_i,\Pi_j]=0$ for all $i$ and $j$, has attracted considerable attention~\cite{bravyi2005commutative,aharonov2011complexity,schuch2011complexity,gottesman2009quantum,aharonov2018complexity}. Commuting Hamiltonians are somewhat ``halfway" between classical and quantum, and are capable of exhibiting intriguing multi-particle entanglement phenomena, such as the well-known toric code~\cite{kitaev2003fault}. CLH interests people not only because the commutation restriction is natural and often made in physics, but also because it may help us to understand the centrality of non-commutation in quantum mechanics. CLH can be viewed as a generalization of the classical SAT, thus CLH is at least NP-hard, and as a sufficient condition, the commuting version LLL (or CLLL) is desirable and would have various applications.

The QLLL and CLLL provide sufficient conditions for frustration freeness. A natural question is whether there is an efficient way to prepare a frustration-free state under the condition of QLLL or CLLL. A series of results showed that the answer is affirmative if all local Hamiltonians commute \cite{schwarz2013information, Cubitt2012A, sattath2015constructive}. Recently, Gily{\'e}n and Sattath \cite{gilyen2016preparing} improved the previous constructive results by designing an algorithm that works efficiently under Shearer's bound for non-commuting terms as well under the condition that the Hamiltonian has a uniform inverse polynomial gap. Here, a uniform gap is the minimum energy gap among the system and all its subsystems \cite{gilyen2016preparing}.

Therefore, the following two closely related problems beg answers:

\begin{enumerate}
\item Tight region for QLLL: complete characterization of the interior of QLLL, $\mathcal{QI}(G_B)$, for a given interaction bipartite graph $\GBipartiteGraph$. Here the interior $\mathcal{QI}(G_B)$ is the set of vectors $\vec{r}$ such that any local Hamiltonians with relative dimensions $\vec{r}$  and interaction bipartite graph $\GBipartiteGraph$ are frustration free. As Shearer's bound has been shown to be a sufficient condition for QLLL~\cite{pnas}, a fundamental open question here is whether Shearer's bound is tight. If it is tight, there are several striking consequences. First, the tightness implies that Gily{\'e}n and Sattath's algorithm \cite{gilyen2016preparing} converges up to the tight region assuming a uniform inverse-polynomial spectral gap of the Hamiltonian.
      Second, the geometrization theorem \cite{laumann2009phase} says that given the interaction bipartite graph, dimensions of qudits, and dimensions of local Hamiltonians, either all such Hamiltonian are frustration free, or almost all such Hamiltonians are not. If Shearer's bound turns out to be tight for the QLLL, by geometrization theorem we know that the quantum satisfiability for almost all Hamiltonians with large enough qudits can be completely characterized by the lattice gas partition function. The lattice gas critical exponents can be directly applied to count of the ground state entropy of almost all quantum Hamiltonians in the frustration free regime. Thus, the tightness means a lot for transferring insights from classical statistical mechanics into the quantum complexity domain ~\cite{pnas}.

\item Tight region for CLLL: complete characterization of the interior of CLLL, $\mathcal{CI}(G_B)$, for a given interaction bipartite graph $\GBipartiteGraph$. Here the interior $\mathcal{CI}(G_B)$ is the set of vectors $\vec{r}$ such that any \emph{commuting} Hamiltonians with relative dimensions $\vec{r}$ and interaction bipartite graph $\GBipartiteGraph$ are frustration free. Obviously, the interior of the CLLL is a superset of that of the QLLL for any $G_B$. An interesting question that remains is whether the containment is proper. There are a series of results on the algorithms for preparing a frustration-free state for commuting Hamiltonians under the conditions of QLLL ~\cite{schwarz2013information, Cubitt2012A, sattath2015constructive}. Thus if the containment turns out to be proper, it might be possible to design a more specialized algorithm for commuting Hamiltonians that is still efficient beyond the conditions of QLLL, e.g., Shearer's bound. The tight region for CLLL requires characterization not only due to the various applications in CLH, but also because it may help us to understand the role of non-commutation in the quantum world.


\end{enumerate}
\vspace{1em}

\subsection{Results and Discussion}
	We provide a complete answer to the first problem. Specifically, we show that Shearer's bound is tight for QLLL. We also study CLLL and partially answer the second problem. Precisely, we show that in contrast to QLLL, the interior of CLLL goes beyond Shearer's bound generally. The main results are listed and discussed as follows.

In this work, the interaction bipartite graph of Hamiltonians and the classical event-variable graph are both denoted by the bipartite graph $\GBipartiteGraph=([m],[n],E_B)$. We call the vertices in $[m]$ the left vertices and those in $[n]$ the right vertices. Usually, we will index the left vertices with  ``$i$" and the right vertices with ``$j$". In $\GBipartiteGraph$, there may be two vertices with the same index $k$: one is a left vertex and the other is a right vertex. In this paper, there will never be ambiguity in identifying which vertex is which from the context.

\subsubsection{Tight Region for QLLL} In this paper, we first prove the tightness of Shearer's bound for QLLL, which affirms the conjecture in \cite{pnas,Morampudi2018Many}. Precisely,

\begin{thm}[Informal]\label{Shearer'sboundistightforQLLL}
	Given an interaction bipartite graph $G_B = ([m],[n],E_B)$ and a rational vector $\vec{r} \in [0,1]^m$, consider the Hamiltonians $\Pi=\sum_i \Pi_i$ with relative dimensions $\vec{r}$ and interaction bipartite graph $G_B$.

	\begin{itemize}
		\item If $\vec{r}\in \mathcal{I}(\GDependencyGraphB)$, then  $\R(\emph{ker } \Pi)\geq I(G_D(G_B),\vec{r})>0$ \emph{~\cite{pnas}} for all such Hamiltonians. For qudits of proper dimensions, this lower bound can be achieved by almost all such Hamiltonians acting on these qudits.
		Moreover, there exists a $\vec{d}_0$ such that for all qudits with dimensions $\vec{d} \geq \vec{d}_0$, we have $\R(\emph{ker } \Pi) \leq I(G_D(G_B),\vec{r})+\epsilon$ for almost all such Hamiltonians, where $\epsilon>0$ can be arbitrarily small as $\vec{d}_0$ goes to infinity.
		\item Otherwise, for qudits of proper dimensions, almost all such Hamiltonians acting on these qudits are not frustration free. Furthermore, there exists a $\vec{d}_0$ such that for all qudits with dimensions $\vec{d} \geq \vec{d}_0$, we have $\R(\emph{ker } \Pi) \leq \epsilon$ for almost all such Hamiltonians, where $\epsilon>0$ can be arbitrarily small as $\vec{d}_0$ goes to infinity.
	\end{itemize}
\end{thm}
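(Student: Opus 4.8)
The plan is to establish the two directions separately, with the crux being the construction of explicit (or generic) Hamiltonians that match Shearer's bound from above. The easy half of the first bullet is already due to Sattath et al.~\cite{pnas}: if $\vec{r}\in\mathcal{I}(\GDependencyGraphB)$ then $R(\ker\Pi)\geq I(G_D(G_B),\vec{r})>0$ for \emph{all} conforming Hamiltonians, so there is nothing new to prove there. The substantive work is the matching upper bound $R(\ker\Pi)\leq I(G_D(G_B),\vec{r})+\epsilon$ for almost all Hamiltonians on large enough qudits, and, in the ``beyond Shearer'' regime, the statement that almost all conforming Hamiltonians fail to be frustration free and in fact have $R(\ker\Pi)\le\epsilon$.

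For the upper bound I would proceed by a \emph{genericity / dimension-counting} argument. Fix the interaction bipartite graph $G_B=([m],[n],E)$ and qudit dimensions $\vec d$. The space of tuples of projectors $(\Pi_1,\dots,\Pi_m)$ with prescribed relative ranks $\vec r$, each $\Pi_i$ acting on the qudits adjacent to $i$, forms a product of Grassmannians — an irreducible algebraic variety. The quantity $\dim(\ker\Pi)=\dim\bigl(\bigcap_i \ker\Pi_i\bigr)$ is lower semicontinuous under Zariski perturbation, so its \emph{generic} value is its minimum over the variety; hence it suffices to exhibit \emph{one} choice of Hamiltonians on qudits of dimension $\vec d$ achieving $R(\ker\Pi)\le I(G_D(G_B),\vec r)+\epsilon$, and then genericity upgrades this to ``almost all''. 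The natural candidate is a \emph{tensor-product / product-state} construction: pick, for each qudit $j$, a sufficiently large dimension $d_j$, and build the excited subspaces of the $\Pi_i$ out of combinatorial data mirroring the hardcore lattice gas — roughly, on a $d_j$-dimensional qudit one keeps a $(1-r_i^{1/|{\rm nbrs}|})$-fraction ``unexcited'' in each tensor factor, so that the intersection behaves multiplicatively across independent sets and the inclusion–exclusion count of the surviving subspace dimension converges, as $\vec d\to\infty$, to $I(G_D(G_B),\vec r)$. Here one should lean on Shearer's theorem (Theorem~\ref{thm:shearer1985problem}) applied to the base graph to identify the limiting fraction as exactly the independence polynomial, and on the fact that classical event systems meeting Shearer's bound with equality exist, lifting them to the quantum setting via the probability-to-relative-dimension dictionary in the table.

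For the second bullet (beyond Shearer's bound), the idea is the same genericity principle run in reverse: by Definition~\ref{def:shearerbound} there is an induced subgraph $G_D(V')$ on which the independence polynomial is $\le 0$. Restricting attention to the Hamiltonians supported on the corresponding sub-collection, I would show that already a generic choice there forces $\bigcap_{i\in V'}\ker\Pi_i$ to have relative dimension at most $\epsilon$ (indeed $0$ once the qudits are large, by the classical tightness of Shearer combined with the dimension count, since a nonpositive independence polynomial means the naive inclusion–exclusion lower bound is vacuous and generically no extra kernel survives). Monotonicity of $\ker\Pi\subseteq\bigcap_{i\in V'}\ker\Pi_i$ then gives $R(\ker\Pi)\le\epsilon$ for the full system, and the geometrization theorem~\cite{laumann2009phase} promotes ``generic'' to ``almost all''. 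The main obstacle I anticipate is the explicit construction and its asymptotic analysis in the first bullet: one must design the excited subspaces so that \emph{all} intersection patterns — not just pairwise — track the lattice-gas inclusion–exclusion, control the error terms uniformly in the combinatorial structure of $G_B$, and verify that the constructed point is a smooth (or at least generic) point of the Grassmannian product so that semicontinuity actually bites. Handling the rationality of $\vec r$ versus the real threshold of Shearer's bound, and choosing $\vec d_0$ uniformly, are the remaining technical points.
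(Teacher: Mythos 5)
Your high-level strategy — exhibit one conforming tuple of projectors achieving the target relative kernel dimension, then use genericity (the geometrization theorem / Zariski semicontinuity of $\dim\ker$) to upgrade ``exists'' to ``almost all'' — is exactly the skeleton the paper uses, so that part is fine (though note that $\dim\ker$ is \emph{upper}, not lower, semicontinuous; the conclusion that the generic value is the minimum is still correct). The problem is the construction you propose to feed into it. A ``tensor-product / product-state'' construction, where each excited subspace $\Pi_i$ is a product of subspaces on its adjacent qudits, is automatically a family of \emph{commuting} projectors — in fact, it is essentially a classical variable-generated event system in disguise. But this paper proves (Theorem \ref{thm:Cyclegapful} and its corollaries) that the tight region of CLLL \emph{strictly} exceeds Shearer's bound whenever the base graph contains an induced cycle of length $\geq 4$. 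So on, say, the $4$-cycle at the Shearer boundary, \emph{every} commuting/product construction with relative dimensions $\vec r$ will still be frustration free with $R(\ker\Pi)>0$, and your construction can never touch $I(G_D(G_B),\vec r)$. The same obstruction defeats the ``lift the extremal classical event system'' idea: the Shearer-extremal events live on an abstract probability space attached to the dependency graph $G_D(G_B)$, and the whole point of the VLLL/CLLL gap is that they generally cannot be realized as variable-generated (hence qudit-generated commuting) events with interaction bipartite graph $G_B$.

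The paper's construction instead proceeds by induction on the number of Hamiltonians, and it is genuinely non-commuting. One picks a right vertex $n$ with $\mathcal N(n)=[t]$, $t\ge 2$, decomposes $\mathcal H_n=\bigoplus_{i\in[t]}\mathcal H_n^i$ with $\dim\mathcal H_n^i$ proportional to $r_i\cdot I(G_B([m]\setminus\Gamma_i^+))$, confines $V_i^{loc}$ ($i\le t$) to the slice $\mathcal H_{\mathcal N(i)\setminus\{n\}}\otimes\mathcal H_n^i$ with the rescaled relative dimension $r_i\cdot\frac{\dim\mathcal H_n}{\dim\mathcal H_n^i}$, and crucially takes the $V_i$ for $i>t$ to be \emph{Haar-random} subspaces of $\mathcal H_{\mathcal N(i)}$. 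Proposition \ref{le:property of ind by def}(c) (the identity $I(G_B)=I(G_B([m]\setminus[t]))-\sum_{l=1}^t r_l\cdot I(G_B([m]\setminus\Gamma_l^+))$) then guarantees that the independence polynomial of the subsystem restricted to each slice $\mathcal H_{[n-1]}\otimes\mathcal H_n^i$ equals the original $I(G_B,\vec r)\le 0$, so the induction hypothesis applies slice by slice and a union bound finishes the existence proof. Without this recursive slicing and without the genuinely random (entangled, non-commuting) subspaces, there is no reason for the intersection pattern to match the independence polynomial, and on cyclic base graphs it provably cannot. You would also need the $\vec d\to\infty$ padding argument in the final step of Theorem \ref{Shearer'sboundistightforQLLL} (discretize $\vec r$ to a finite grid $\mathcal R$, then absorb the remainder dimensions $\mathcal H_j^b$ into the projectors), which your sketch gestures at but does not supply.
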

In contrast to the classical VLLL which goes beyond Shearer's bound generally, QLLL is another example of the difference between the classical world and the quantum world. As mentioned above, Theorem \ref{Shearer'sboundistightforQLLL} means that the position of the first negative fugacity zero of the lattice gas partition function is exactly the critical threshold of quantum satisfiability for almost all Hamiltonians with large enough qudits, and the relative dimension of the smallest satisfying subspace is exactly characterized by the independent set polynomial. Additionally, Theorem \ref{Shearer'sboundistightforQLLL} also implies the tightness of Gily{\'e}n and Sattath's algorithm \cite{gilyen2016preparing}, which efficiently prepares  a frustration free state in the Shearer's bound assuming a uniform inverse polynomial spectral gap.

Independently, Siddhardh Morampudi and Chris Laumann showed that Shearer's bound is tight for a large class of graphs \cite{Morampudi2018Many}. Our result shows that Shearer's bound is tight for any graph.

Finally, the $\vec{d}_0$ that we obtain is tremendously large (see the formal statement of Theorem \ref{Shearer'sboundistightforQLLL} in Section \ref{sec: qlll}). We are curious about how small $\vec{d}_0$ can be, and particularly whether $\vec{d}_0$ can be polynomially bounded by the vector $\vec{r}$. This open problem is important especially for the computational aspects of QLLL.

It seems \cite{Bravyi2010,Laumann2010On,ambainis2012quantum,pnas} that QLLL has three ranges: for sufficiently small relative dimensions, there is a classical (unentangled) satisfying state, and when the relative dimensions of Hamiltonians increase the states need to become entangled in order to satisfy all Hamiltonians, just before the system becomes unsatisfiable. As only two ranges are studied in Theorem \ref{Shearer'sboundistightforQLLL}, namely the satisfiable region and the unsatisfiable region, it is another important open problem to investigate when the satisfying state must be entangled.

\subsubsection{Tight Region for CLLL}

We partially depict the interior of CLLL. Precisely, we obtain the following results.

\vspace{1ex}
\noindent\underline{Solitary qudits are classical.} 
Given $G_B$, we call a right vertex $j\in[n]$ solitary if for any  $i_1,i_2\in\mathcal{N}(j)$ we have $\mathcal{N}(i_1)\cap \mathcal{N}(i_2)=\{j\}$. Here, $\mathcal{N}(j)$ stands for the set of neighbors of $j$ in $G_B$. We prove that all qudits $\mathcal{H}_j$ where $j$ is solitary can be restricted to be classical variables without changing the interior (Theorem \ref{thm:solitaryisclassical}). As a corollary, if all right vertices in $G_B$ are solitary, then the interior of CLLL equals to that of VLLL. In particular, when $G_B$ is a cycle of length at least 6, we have the tight region of CLLL goes beyond Shearer's bound, as that of VLLL does \cite{he2017variable}.  

\vspace{1ex}
\noindent\underline{Leveraging tools for VLLL to CLLL.} We leverage two tools developed in \cite{he2017variable} for deciding whether Shearer's bound is tight for VLLL on a given $G_B$ to CLLL. The first tool, namely Theorem \ref{Conj:GapGeom}, enables us to prove Shearer's bound is tight for CLLL on a given $G_B$ just by constructing a commuting local Hamiltonians and without computing the interior of CLLL or the Shearer's bound. The second tool is a  set of reduction rules with which we can infer whether Shearer's bound is tight for CLLL on a given interaction bipartite graph from known graphs.

\vspace{1ex}
\noindent\underline{An almost complete characterization of graphs on which Shearer's bound is tight for CLLL.}  
Based on the above results, we prove that Shearer's bound is not tight for CLLL on many interaction bipartite graphs. Precisely, given an interaction bipartite graph $G_B$, we show that Shearer's bound is tight for CLLL if its base graph is a tree, and not tight if its base graph has an induced cycle of length at least 4. This gives an almost complete characterization of bipartite graphs on which Shearer's bound is tight for CLLL except when the base graph has only 3-cliques.

\vspace{1em}

\noindent\textbf{Organization.}\quad Section \ref{sec:definitionsandnotations} provides some basic notations. In Section \ref{sec:ShearerBoundisTight}, we prove that Shearer's bound is tight for QLLL. In Section \ref{sec:tightregionforclll}, we investigate the tight region of CLLL.

\section{Notations for Hamiltonians, qudits and instances}\label{sec:definitionsandnotations}
Let $\mathcal{H}_1,\mathcal{H}_2,\cdots, \mathcal{H}_n$ be $n$ qudits. Then the Hilbert space of the quantum system is an $n$th-order tensor product  $\mathcal{H}_1\otimes\mathcal{H}_2\otimes\cdots\otimes \mathcal{H}_n$ over $\mathbb{C}$. For any $S\subseteq [m]$, let $\mathcal{H}_{S}\triangleq \bigotimes_{i\in S}\mathcal{H}_i$ denote the Hilbert space of the qudits in $S$. For example,  $\mathcal{H}_{\{1,2\}} = \mathcal{H}_{1}\otimes\mathcal{H}_{2}$.
For simplicity, we assume that $\mathcal{H}_{\emptyset}$ satisfies $\mathcal{H}_{\emptyset}\otimes\mathcal{H} = \mathcal{H}$ for each Hilbert Space $\mathcal{H}$.
For any $j \in [n]$,  let $\dim(\mathcal{H}_{j})$ be the dimension of $\mathcal{H}_{j}$ and $\dim(\mathcal{H}_{1},\cdots, \mathcal{H}_{n})$ be $(\dim(\mathcal{H}_{1}),\dim(\mathcal{H}_{2}),\cdots,\dim(\mathcal{H}_{n}))$.

In this paper, the terms ``subspaces", ``Hamiltonians", and ``projectors" will be used interchangeably. We will use $U,V,W$ to denote subspaces.
A vector space $V$ is said to be direct sum of its subspaces $V_{1},\dots, V_{k}$, denoted as $V=V_{1}\oplus V_{2}\oplus \dots\oplus V_{k}$, if $V=V_{1}+V_{2}+\dots+V_{k}$ and $V_i\cap \sum_{\ell\neq i}V_\ell=\{0\}$ for any $1\leq i\leq k$. 
Given a Hilbert space $\mathcal{H}$ and
a subspace $V\subseteq\mathcal{H}$, let $\Pi_V$ be the projector onto $V$. The relative dimension of $V$ to $\mathcal{H}$ is defined as 
$$\mathbb{R}(V,\mathcal{H})\triangleq\frac{\mathrm{tr}(\Pi_V)}{\ddim(\mathcal{H})}=\frac{\ddim(V)}{\ddim(\mathcal{H})}.$$ 
For simplicity, we will omit  ``to $\mathcal{H}$'' and denote $\mathbb{R}(V,\mathcal{H})$ as use $\mathbb{R}(V)$ if $\mathcal{H}$ is clear from the context.
Throughout this paper, we are only interested in finite-dimensional quantum systems and restrict our attention to rational relative dimensions.

Given a bipartite graph $G_B=([m],[n],E_B)$, let $\mathcal{N}(G_B,i)$ (or $\mathcal{N}(i)$ if $G_B$ is implicit) denote the neighbors of vertex $i$ in $G_B$ if which side this vertex belongs to is clear from the context. We say two left vertices $i_1,i_2\in [m]$ are neighboring or adjacent if $\mathcal{N}(i_1)\cap \mathcal{N}(i_2) \not = \emptyset$. We say a left vertex $i\in [m]$ and a right vertex $j\in [n]$ are neighboring or adjacent if $j \in \mathcal{N}(i)$.
Given a set $S$, let $\mathcal{N}(S)$ denote $\bigcup_{i\in S}\mathcal{N}(i)$.
We say a set of local Hamiltonians $\vvec{V}=\{V_1,\cdots,V_m\}$ conforms with $G_B$, denoted by $\vvec{V}\sim G_B$, if for any $i\in[m]$, $\Pi_{V_i}$ acts trivially on the qudits $\mathcal{H}_{[n]\setminus\mathcal{N}(i)}$. Thus, we can write $V_i$ as $V_i^{\ast}\otimes \mathcal{H}_{[n]\backslash \mathcal{N}(i)}$ where $V^{\ast}_i$ is some subspace of $\mathcal{H}_{\mathcal{N}(i)}$.
Similarly, we can also define a set of events $\EventSet$ conforms with $G_B$, denoted by $\EventSet\sim G_B$.
In this paper, we usually call $G_B$ the interaction graph.

Given $\mathcal{H}_1,\mathcal{H}_2,\cdots, \mathcal{H}_n$ and a set of subspaces $\vvec{V}=\{V_1,\cdots,V_m\}$ in $\mathcal{H}_{[n]}$,
we will use $\mathbb{R}(\vvec{V})$ to represent the vector $(\mathbb{R}(V_1),\cdots,\mathbb{R}(V_m))$
and use $\dim(\vvec{V})$ to represent the vector $(\dim(V_1),\cdots,\dim(V_m))$. 
We say a set of subspaces $\vvec{V}$ is frustration free if $V_1,\cdots,V_m$ do \emph{not} span $\mathcal{H}_{[n]}$.
We will use boldface type for vectors.
For example, $\vec{V}$ stands for a set of subspaces, $\vec{r}$ stands for a relative dimension vector, $\vec{p}$ stands for a probability vector and $\vec{d}$ stands for a dimension vector. 
For any two vectors of numbers $\vec{q}=(q_1,q_2,\cdots,q_m)$ and $\vec{q}'=(q'_1,q'_2,\cdots,q'_m)$ of the same length, we say $\vec{q} \geq \vec{q}'$ if $q_i\geq q'_i$ holds for each $i\in [m]$. 
We say $\vec{q} > \vec{q}'$ if $\vec{q} \geq \vec{q}'$ and  $q_i > q'_i$ holds for some $i\in [m]$. 

Now we can define the instances and the random instance of some given $G_B,\vec{r}$ and $\vec{d}$. 

\begin{definition}[Instance and random instance]\label{def-instance}
Given an interaction graph $G_B = ([m],[n],E_B)$, a rational vector $\vec{r}=(r_1,\cdots,r_m)\in [0,1]^m$ and an integer vector $\vec{d}=(d_1,\cdots,d_n)$, we say a subspace set $\vec{V}$ of $\mathcal{H}_{[n]}$ is an instance of the setting $(G_B,\vec{r},\vec{d})$, denoted as $\vvec{V} \sim (G_B,\vec{r},\vec{d})$, if $\vvec{V} \sim G_B$, $\mathbb{R}(\vvec{V},\mathcal{H}_{[n]}) = \vec{r}$ and $\ddim(\mathcal{H}_{1},\cdots,\mathcal{H}_{n}) = \vec{d}$.
When we talk about the instances of the setting $(G_B,\vec{r},\vec{d})$,
we always assume that the Hilbert space $\mathcal{H}_{[n]}$ with the dimension vector $\vec{d}$ can admit an instance with the interaction graph $G_B = ([m],[n],E_B)$ and the relative dimension vector $\vec{r}$, i.e., each $r_i$ where $i\in [m]$ satisfies that $r_i\ddim(\mathcal{H}_{\mathcal{N}(i)})$ is an integer. 

We say a subspace set $\vec{V} = \left(V_1,\cdots,V_m\right)$ of $\mathcal{H}_{[n]}$ is a \emph{random} instance of the setting $(G_B,\vec{r},\vec{d})$ if $\vec{V}\sim (G_B,\vec{r},\vec{d})$ and
$V_i=V_i^{\ast}\otimes \mathcal{H}_{[n]\backslash \mathcal{N}(i)}$ for each left vertex $i$, where $V_i^{\ast}$ is a random subspace of $\mathcal{H}_{\mathcal{N}(i)}$ according to the Haar measure with $\mathbb{R}(V_i^{\ast},\mathcal{H}_{\mathcal{N}(i)}) = r_i$ (in particular, $V_i^{\ast} = \mathcal{H}_{\mathcal{N}(i)}$ if $r_i=1$).
Given a random instance $\vec{V}\sim(G_B,\vec{r},\vec{d})$ in $\mathcal{H}_{[n]}$, we say $\vec{V}$ spans the whole space if $\Pr[\mathbb{R}(\bigplus_{V\in\vec{V}} V,\mathcal{H}_{[n]})=1]=1$.
\end{definition}

\section{QLLL: Shearer's Bound Is Tight}\label{sec:ShearerBoundisTight}\label{sec: qlll}
This section aims at proving \Cref{Shearer'sboundistightforQLLL}. 
\Cref{sec-shearer'sboundistight} presents our main results, 
Theorems \ref{shearersboundistight:above} and \ref{Shearer'sboundistightforQLLL}, and illustrate the main idea of our proof.
We prove \Cref{shearersboundistight:above} by induction on the number of left vertices in the interaction graph.
The induced interaction graph, the induced relative dimensions,
the induced dimensions of qudits and the induced subspaces are defined
in Sections \ref{sec-Inducedinteractiongraphandrelativedimensions}, \ref{sec-induceddimensionsof qudits} and \ref{sec-inducedsubspaces}, respectively.
In \Cref{sec-subspaces}, we construct the subspaces in  \Cref{shearersboundistight:above} from the induced subspaces.
Finally, Theorems \ref{shearersboundistight:above} and \ref{Shearer'sboundistightforQLLL} are proved 
in sections \ref{sec-Proof-of-Theorem-shearersboundistight} and \ref{sec-proofShearer'sboundistightforqLLL}, respectively.

\subsection{Shearer's Bound Is Tight for QLLL}\label{sec-shearer'sboundistight}
In this subsection we show that Shearer's bound is tight for QLLL. 
Given an interaction graph $G_B$, 
the base graph of $\GBipartiteGraph$ is defined as $G_D(\GBipartiteGraph)=([m],E_D)$, where $(i_1,i_2) \in E_D$ if and only if the left vertices $i_1,i_2$ are neighbors in $G_B$.
For simplicity, we will use $G_B$ to denote $G_D(G_B)$ if there is no ambiguity.
For example, 
we let $\mathcal{I}(G_B) = \mathcal{I}(G_D(G_B))$, and $\mathsf{Ind}(G_D(G_B)) = \mathsf{Ind}(G_B)$. 
In addition, for any interaction graph $G_B=([m],[n],E_B)$ and $\vec{r} \in {(0,1]}^m$ and $S\subseteq [m]$, we let
$I(G_B,\vec{r},S) = I(G_D(G_B),\vec{r},S)$ and $I(G_B,\vec{r}) = I(G_D(G_B),\vec{r})$.
If $G_B$ and $\vec{r}$ are clear from the context, we will simply denote $I(G_B,\vec{r},S)$ as $I(S)$.
It has been proved that 
Shearer's bound is a lower bound on the relative dimension of the satisfying subspace~\cite{pnas}, more precisely, 
\begin{thm}[Restate of Theorem 1 in \cite{pnas}]\label{thm:pnas}
	Given an interaction graph $G_B = ([m],[n],E_B)$ and rational $\vec{r}=(r_1,\cdots,r_m)\in {(0,1]}^m$, if  $\vec{r}\in \mathcal{I}(G_B)$, then for any subspaces $(V_1,\cdots,V_m)$ of relative dimension $\vec{r}$, $1- \R(\sum_{i=1}^mV_i)\geq I([m])$.
\end{thm}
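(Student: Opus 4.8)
This is the quantum counterpart of Shearer's classical theorem (Theorem~\ref{thm:shearer1985problem}), so the plan is to transplant Shearer's inductive argument to the subspace setting, the only genuinely quantum ingredient being a ``subspace recursion'' that substitutes for the classical step $\Pr(A_i\wedge\bigwedge_{j\in S}\overline{A_j})\le \Pr(A_i)\,\Pr(\bigwedge_{j\in S\setminus\Gamma_i}\overline{A_j})$. Throughout write $q_S:=R\bigl(\bigcap_{j\in S}V_j^\perp\bigr)$ for $S\subseteq[m]$, so that $q_\emptyset=1$ and, since $\bigl(\sum_i V_i\bigr)^\perp=\bigcap_i V_i^\perp$, the assertion to prove is exactly $q_{[m]}\ge I(G_B,\vec{r})$.

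\emph{Step 1: the quantum recursion.} I would first show that for every $i\notin S$,
\[
q_{S\cup\{i\}}\ \ge\ q_S-r_i\,q_{S\setminus\Gamma_i}.
\]
Put $W:=\bigcap_{j\in S}V_j^\perp$. A rank-nullity count applied to the restricted projector $\Pi_i|_W$, whose kernel is $W\cap V_i^\perp$ and whose image is $\Pi_i W$, gives $\dim(W\cap V_i^\perp)=\dim W-\dim(\Pi_i W)$. Since $W\subseteq W_T:=\bigcap_{j\in S\setminus\Gamma_i}V_j^\perp$ we have $\dim(\Pi_i W)\le\dim(\Pi_i W_T)$, and the key point is that $W_T$ is a product across the cut $\mathcal{N}(i)\mid[n]\setminus\mathcal{N}(i)$: every $V_j$ with $j\in S\setminus\Gamma_i$ acts trivially on the qudits $\mathcal{N}(i)$, so $W_T=\mathcal{H}_{\mathcal{N}(i)}\otimes W_T'$ for some $W_T'\subseteq\mathcal{H}_{[n]\setminus\mathcal{N}(i)}$ with $R(W_T')=q_{S\setminus\Gamma_i}$. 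Writing $\Pi_i=\Pi_i^{loc}\otimes I$ with $\Pi_i^{loc}$ the projector onto $V_i^{loc}$ gives $\Pi_i W_T=V_i^{loc}\otimes W_T'$, hence $R(\Pi_i W_T)=R(V_i^{loc})\cdot R(W_T')=r_i\,q_{S\setminus\Gamma_i}$. Combining the two displays yields the recursion. This is the one place where the relative dimension has to behave like a probability, and it is forced by the locality/tensor structure of $G_B$-conforming subspaces.

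\emph{Step 2: Shearer's induction.} The naive target ``$q_S\ge I(G_B(S),\vec{r})$'' cannot be proved by a direct induction, since the recursion also requires an \emph{upper} bound on $q_{S\setminus\Gamma_i}$, which is false in general (e.g.\ when two Hamiltonians coincide). Following Shearer, I would instead induct on $|S|$ to prove the ratio statement: for all $S$ and all $i\in S$,
\[
q_S\ \ge\ q_{S\setminus\{i\}}\cdot\frac{I(G_B(S),\vec{r})}{I(G_B(S\setminus\{i\}),\vec{r})},
\]
where each $I(G_B(\cdot),\vec{r})$ is positive because $\vec{r}$ is in Shearer's bound, and positivity of the $q_S$ then propagates from $q_\emptyset=1$. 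The inductive step at $S$, with $i\in S$ fixed, combines: (i) Step~1 for the pair $(S\setminus\{i\},i)$, i.e.\ $q_S\ge q_{S\setminus\{i\}}-r_i\,q_{S\setminus\Gamma_i^+}$; (ii) Proposition~\ref{le:property of ind by def}(b), i.e.\ $I(G_B(S))=I(G_B(S\setminus\{i\}))-r_i\,I(G_B(S\setminus\Gamma_i^+))$; and (iii) the induction hypothesis telescoped along any chain $S\setminus\Gamma_i^+\subsetneq\cdots\subsetneq S\setminus\{i\}$ obtained by adjoining the vertices of $\Gamma_i\cap S$ one at a time (all these sets avoid $i$, hence have size $<|S|$), which gives $q_{S\setminus\Gamma_i^+}\le q_{S\setminus\{i\}}\cdot I(G_B(S\setminus\Gamma_i^+))/I(G_B(S\setminus\{i\}))$. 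Plugging (iii) into (i) and factoring with (ii) produces precisely the claimed ratio bound. Finally, telescoping this bound from $[m]$ down to $\emptyset$ gives $q_{[m]}\ge q_\emptyset\cdot I(G_B,\vec{r})=I(G_B,\vec{r})$, which is the theorem, and it is $>0$ since $\vec{r}$ is in Shearer's bound.

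The hardest part is Step~2, namely making the induction well-founded: one cannot induct on a notion of ``distance'' because the neighborhood $\Gamma_i$ occurring in the recursion can be as large as $S$ itself, and the device that rescues the argument is the ratio formulation together with the telescoping in ingredient~(iii) — this is a faithful copy of Shearer's classical proof, which in fact uses only the recursion and the boundary data $q_\emptyset=1$. The quantum difficulty is therefore entirely localized in Step~1, where, once the tensor decomposition $W_T=\mathcal{H}_{\mathcal{N}(i)}\otimes W_T'$ is in place, everything reduces to a dimension count.
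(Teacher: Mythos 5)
The paper cites this theorem from \cite{pnas} without reproducing a proof, so there is no in-text argument to compare against. Your proof is correct and reproduces the expected structure of the argument: Step~1 is the quantum ``union bound'' obtained from rank--nullity applied to $\Pi_i|_W$ together with the tensor factorization $W_T=\mathcal{H}_{\mathcal{N}(i)}\otimes W_T'$ (the genuinely quantum ingredient, originating in \cite{ambainis2012quantum}), and Step~2 is precisely Shearer's ratio induction, with the telescoping through $S\setminus\Gamma_i^+\subsetneq\cdots\subsetneq S\setminus\{i\}$ correctly handling the needed upper bound on $q_{S\setminus\Gamma_i^+}$ and with positivity of all $I(G_B(T),\vec{r})$ supplied by $\vec{r}\in\mathcal{I}(G_B)$.
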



In addition to \Cref{thm:pnas}, we further show that Shearer's bound can be achieved on the qudits with proper dimensions.

\begin{thm}\label{shearersboundistight:above}
	Given $G_B = ([m],[n],E_B)$ and rational $\vec{r}=(r_1,\cdots,r_m)
	\in{(0,1]}^m$,
	assume that for each left vertex $i\in [m]$, we have $r_i = \frac{p_i}{q_i}$ where $p_i,q_i$ are mutually prime 
	and if $\mathcal{N}(i) = \emptyset$ then $r_i =1$.
    Let $q = \prod_{i\in [m]}q_i$. 
	Then
	\begin{enumerate}[(a)]
		\item\label{item-a-thmshearersboundistight} if $\vec{r} \not\in \mathcal{I}(G_B)$, then there exists some $\vec{d}=(d_1,\cdots,d_n)$
		such that $d_i\leq q^{4^{m}}$ for each $i\in [m]$ and $\Pr[\mathbb{R}(\bigplus_{V\in \vec{V}} V)=1]=1$ for the random instance $\vec{V}$ of the setting $(G_B,\vec{r},\vec{d})$.
		
		\item\label{item-b-thmshearersboundistight} otherwise, $\vec{r} \in \mathcal{I}(G_B)$,
		then there exists some $\vec{d}=(d_1,\cdots,d_n)$ such that
		$d_i\leq q^{8+4^{m}}$ for each $i\in [n]$ and
		$\Pr[\mathbb{R}(\bigplus_{V\in \vec{V}} V)=1-I([m])]=1$
		for the random instance $\vec{V}$ of the setting $(G_B,\vec{r},\vec{d})$.
	\end{enumerate}
\end{thm}


\begin{rem}
The interaction graph $G_B = ([m],[n],E_B)$ in above theorem can be disconnected.
Furthermore, for any left or right vertex $i$ in $G_B$,
$\mathcal{N}(i)$ can be $\emptyset$.
This is for the sake of convenience of the induction argument.
The dimensions of qudits in \Cref{shearersboundistight:above} is not optimised, but we do not expect
to be able to use the current techniques to improve it substantially. 
Our main point is that Shearer's bound can be achieved on some qudits.
\end{rem}

The following theorem further extends \Cref{shearersboundistight:above} to all large enough qudits. 

\theoremstyle{plain}


\newtheorem*{thmShearersboundistight}{Theorem \ref{Shearer'sboundistightforQLLL} }
\begin{thmShearersboundistight}[Restated]
	Given an interaction graph $G_B = ([m],[n],E_B)$ and a rational $\vec{r}\in\left(0,1\right]^m$, for any $\epsilon \in (0,1]$ 
	and any integers $\vec{d}\geq \left(n \cdot t^{1+8m4^{m}},\cdots,n \cdot t^{1+8m4^{m}} \right)$ where 
	$t = \left\lceil\frac{2m+1}{\epsilon}\right\rceil$,
	the random instance $\vec{V}$ of the setting $(G_B,\vec{r},\vec{d})$ satisfies that 
	\begin{enumerate}[(a)]
	    \item \label{Shearer'sboundistightforQLLL-a} if $\vec{r}\not\in \mathcal{I}(G_B)$, then 
	$$\Pr\left[\R\left(\bigplus_{V\in \vec{V}} V\right)\in[1-\epsilon,1]\right]=1;$$
	    \item \label{Shearer'sboundistightforQLLL-b} otherwise, 
	    $$\Pr\left[\R\left(\bigplus_{V\in \vec{V}} V\right) \in \left[1-I([m])-\epsilon,1-I([m])\right]\right]=1.$$
	\end{enumerate}
\end{thmShearersboundistight}

The core in the above two theorems is \Cref{shearersboundistight:above} (\ref{item-a-thmshearersboundistight}).
In the following, we illustrate the main idea of its proof.
A key tool in our proof is the geometrization theorem established by Laumann et al. \cite{laumann2009phase},
which shows that ``almost all'' is equivalent to ``existence".

\begin{thm}[The geometrization theorem, adapted from ~\cite{laumann2009phase}]
	\label{lem: span}
	Given any interaction graph $G_B=([m],[n],E_B)$, any dimension vector $\vec{d}$, and any relative dimension vector $\vec{r}\in [0,1]^m$, if there exists an instance $\vec{U}$ of the setting $(G_B,\vec{r},\vec{d})$ satisfying
	$\R(\bigplus_{V\in \vec{U}} V)=1$,
	then for the random instance $\vec{V}$ of the setting $(G_B,\vec{r},\vec{d})$, we have $\Pr[\mathbb{R}(\bigplus_{V\in \vec{V}} V)=1]=1$.
\end{thm}

The following lemma is used in the proof of \Cref{shearersboundistight:above}.
Given positive integers $k_1,\cdots,k_n$ and an instance of the setting $(G_B,\vec{r},\vec{d})$ spanning the whole space,
one can construct an instance of the setting $(G_B,\vec{r},(k_1d_1,\cdots,k_nd_n))$ spanning the whole space by combining $\prod_{i\in [n]}k_i$ instances of the setting $(G_B,\vec{r},\vec{d})$.

\begin{lemma}\label{prop:multiple}
    Given any $G_B$, $\vec{r}$, $\vec{d}=(d_1,\cdots,d_n)$ and any positive integers $k_1,\cdots,k_n$,
    \begin{enumerate}[(a)]
		\item \label{prop:multiple-item-a} if there exists an instance of the setting $(G_B,\vec{r},\vec{d})$ spanning the whole space, then there exists some instance of the setting $(G_B,\vec{r},(k_1d_1,\cdots,k_nd_n))$ spanning the whole space as well;
		\item \label{prop:multiple-item-b} if the random instance the setting $(G_B,\vec{r},\vec{d})$ spans the whole space, then
	 the random instance of the setting $(G_B,\vec{r},(k_1d_1,\cdots,k_nd_n))$ spans the whole space as well.
	\end{enumerate}
\end{lemma}

Now we prove \Cref{prop:multiple}. 
\begin{proof}
    We only prove (\ref{prop:multiple-item-a}) here,
    then (\ref{prop:multiple-item-b}) is immmediate by \Cref{lem: span}.
	Let $\mathcal{H}_{[n]}=\bigotimes_{i\in [n]}\mathcal{H}_n$ be a Hilbert space where $\ddim(\mathcal{H}_{i})= k_id_i$ for each $i\in [n]$. We decompose the  qudit $\mathcal{H}_{i}$ to $k_i$ orthogonal subspaces $\mathcal{H}_{i}=\bigoplus_{\ell\in[k_i]}\mathcal{H}_{i}^{\ell}$ where $\ddim(\mathcal{H}_{i}^{\ell})=d_i$ for each $i\in [n],\ell\in[k_i]$. 
	Thus for each $\vec{\ell} = \left(\ell_1,\ell_2,\cdots,\ell_n\right)$ where $\ell_i\in [k_i]$ for all $i\in [n]$, we have $(\mathcal{H}_{1}^{\ell_1},\cdots,\mathcal{H}_{n}^{\ell_n})$ are of dimensions $\vec{d}$, 
	and then can be spanned by some subspace set $\vvec{V}_{\vec{\ell}}\sim G_B$ with relative dimensions $\vec{r}$ with respect with 
	$(\mathcal{H}_{1}^{\ell_1},\cdots,\mathcal{H}_{n}^{\ell_n})$. Let $$\vvec{V}=\bigoplus_{\vec{\ell}\in[k_1]\times\cdots\times [k_n]}\vvec{V}_{\vec{\ell}}.$$ 
	It is easy to verify that $\vvec{V}$  is an instance of the setting $(G_B,\vec{r},(k_1d_1,\cdots,k_nd_n)))$ and spans $\mathcal{H}_{[n]}$.
\end{proof}

Combining \Cref{def:shearerbound} with \Cref{thm:shearer1985problem}, we have the following lemma.
\begin{lemma}\label{lemma-eqinshearersbound}
For each $G_B=([m],[n],E_B)$ and $\vec{r} =(r_1,\cdots,r_m)\in {(0,1]}^m$, the following are equivalent: 
(1) $\vec{r}$ is in Shearer's bound for $G_D(G_B)$;
(2) $I(G_B,\vec{r},S)>0$ for each $S\subseteq [m]$;
(3) $\vec{r}\in \mathcal{I}(G_B)$.
\end{lemma}

With Lemmas \ref{prop:multiple}, \ref{lemma-eqinshearersbound} and \Cref{lem: span}, our main idea can be illustrated with following example.

\begin{example}\label{example1}
	Let $G_B=([4],[4],E)$ where $E=\{(i,i),(i,i+1 \pmod 4), i\in[4]\}$ as in \Cref{pic1} (A).
	Let $\vec{r}\triangleq(r_1,r_2,r_3,r_4)=(\frac{1}{3},\frac{1}{3},\frac{1}{4},\frac{1}{4})$.
	Note that $G_D(G_B)$ is a cycle of length 4, hence by \Cref{def: indpoly} we have $I(G_B,\vec{r})=1-\sum_{i=1}^4r_i+r_1\cdot r_3+r_2\cdot r_4=0$. 
    Then one can verify that $\vec{r}\not\in \mathcal{I}(G_B)$ by \Cref{lemma-eqinshearersbound}.
    Moreover, one can verify that for the given $G_B$ and $\vec{r}$,
    the parameter $q$ in \Cref{shearersboundistight:above} is $3\times 3\times 4\times 4 = 144$. 

\begin{figure}
\centering  
\subfloat[$G_B$]{
\begin{tikzpicture}[
roundnode/.style={circle, draw=black!100, fill=white!100, thick, minimum size=5mm},
squarednode/.style={rectangle, draw=black!100, fill=white!100, thick, minimum size=5mm},
node distance=7.5mm
]
\node[squarednode]      (H1)       at(2.7,-0.5)    {$\mathcal{H}_{1}$};
\node[squarednode]      (H2)       at(2.7,-2.5)    {$\mathcal{H}_{2}$};
\node[squarednode]      (H3)       at(2.7,-4.5)    {$\mathcal{H}_{3}$};
\node[squarednode]      (H4)       at(2.7,-6.5)    {$\mathcal{H}_{4}$};
\node[roundnode]        (V1)       at(0,-0.5)    {$V_1$};
\node[roundnode]        (V2)       at(0,-2.5)    {$V_2$};
\node[roundnode]        (V3)       at(0,-4.5)    {$V_{3}$};
\node[roundnode]        (V4)       at(0,-6.5)    {$V_{4}$};

\draw[thick,-] (V1.east) -- (H1.west);
\draw[thick,-] (V1.east) -- (H2.west);
\draw[thick,-] (V2.east) -- (H2.west);
\draw[thick,-] (V2.east) -- (H3.west);
\draw[thick,-] (V3.east) -- (H3.west);
\draw[thick,-] (V3.east) -- (H4.west);
\draw[thick,-] (V4.east) -- (H1.west);
\draw[thick,-] (V4.east) -- (H4.west);

\end{tikzpicture}
}
\hspace{30pt}
\subfloat[$G_1$]{
\begin{tikzpicture}[
roundnode/.style={circle, draw=black!100, fill=white!100, thick, minimum size=5mm},
squarednode/.style={rectangle, draw=black!100, fill=white!100, thick, minimum size=5mm},
node distance=7.5mm
]

\node[squarednode]      (H1U1)       at(8.1,-0.5)    {$\mathcal{H}_{1}$};
\node[squarednode]      (H4U1)       at(8.1,-3.25)    {$\mathcal{H}_{4}$};
\node[squarednode]      (H3U1)       at(8.1,-6)    {$\mathcal{H}_{3}$};
\node[roundnode]        (V1U1)       at(5.4,-0.5)    {$V'_1$};
\node[roundnode]        (V4U1)       at(5.4,-3.25)   {$V'_{4}$};
\node[roundnode]        (V3U1)       at(5.4,-6)    {$V'_{3}$};

\draw[thick,-] (V1U1.east) -- (H1U1.west);
\draw[thick,-] (V3U1.east) -- (H3U1.west);
\draw[thick,-] (V3U1.east) -- (H4U1.west);
\draw[thick,-] (V4U1.east) -- (H4U1.west);
\draw[thick,-] (V4U1.east) -- (H1U1.west);

\end{tikzpicture}
}
\hspace{30pt}
\subfloat[$G_2$]{
\begin{tikzpicture}[
roundnode/.style={circle, draw=black!100, fill=white!100, thick, minimum size=5mm},
squarednode/.style={rectangle, draw=black!100, fill=white!100, thick, minimum size=5mm},
node distance=7.5mm
]

\node[squarednode]      (H3U2)       at(13.5,-0.5)    {$\mathcal{H}_{3}$};
\node[squarednode]      (H4U2)       at(13.5,-3.25)    {$\mathcal{H}_{4}$};
\node[squarednode]      (H1U2)       at(13.5,-6)    {$\mathcal{H}_{1}$};
\node[roundnode]        (V2U2)       at(10.8,-0.5)    {$V'_2$};
\node[roundnode]        (V3U2)       at(10.8,-3.25)    {$V'_{3}$};
\node[roundnode]        (V4U2)       at(10.8,-6)    {$V'_{4}$};

\draw[thick,-] (V2U2.east) -- (H3U2.west);
\draw[thick,-] (V3U2.east) -- (H3U2.west);
\draw[thick,-] (V3U2.east) -- (H4U2.west);
\draw[thick,-] (V4U2.east) -- (H4U2.west);
\draw[thick,-] (V4U2.east) -- (H1U2.west);

\end{tikzpicture}
}
\caption{The interaction graphs in the example}\label{pic1}
\end{figure}
    
    The proof of \Cref{shearersboundistight:above} (\ref{item-a-thmshearersboundistight}) is by induction on the number of left vertices in the interaction graph.
    To illustrate the proof, we will verify 
	\Cref{shearersboundistight:above} (\ref{item-a-thmshearersboundistight}) on the given $G_B$ and $\vec{r}$ based on the induction hypothesis that it has already been proved for each interaction graph where the number of left vertices is no more than $3$.
	By \Cref{lem: span}, it is sufficient to 
	construct an instance $\vec{V}$ spanning the whole space where $\vec{V}\sim \left(G_B,\vec{r},\vec{d}\right)$ 
	for some $\vec{d} \leq \left(144^{4^4},144^{4^4},144^{4^4},144^{4^4}\right)$.
	Our construction is as follows.
	
	Let $ \mathcal{H}_{1},\mathcal{H}_{3},\mathcal{H}_{4}$ be three qudits
	where  $(d_1,d_3,d_4) = (\ddim(\mathcal{H}_{1}),\ddim(\mathcal{H}_{3}),\ddim(\mathcal{H}_{4}))$ will be determined later.
	Let $\mathcal{H}_2\triangleq e_1\oplus e_2\simeq \mathbb{C}^2$
	where $e_1$ and $e_2$ are two orthogonal subspaces of $\mathbb{C}^2$ with dimension 1.
	Define random $V'_1,V'_2,V'_3,V'_4$ as follows.
	\begin{enumerate}
		\item $V'_1=V_1^{\ast}\otimes \mathcal{H}_{\{3,4\}}$ where $V_1^{\ast}$ is a random subspace of $\mathcal{H}_1$ with $\R(V_1^{\ast},\mathcal{H}_1)=\frac{2}{3}$.
		\item $V_2'=V_2^{\ast}\otimes \mathcal{H}_{\{1,4\}}$ where $V_2^{\ast}$ is a random subspace of $\mathcal{H}_3$ with $\R(V_2^{\ast},\mathcal{H}_3)=\frac{2}{3}$.
		\item $V_3'=V_3^{\ast}\otimes \mathcal{H}_1$ where $V_3^{\ast}$ is a random subspace of $\mathcal{H}_3\otimes \mathcal{H}_4$ with $\R(V_3^{\ast},\mathcal{H}_3\otimes \mathcal{H}_4)=\frac{1}{4}$.
		\item $V_4'=V_4^{\ast}\otimes \mathcal{H}_3$ where $V_4^{\ast}$ is a random subspace of $\mathcal{H}_4\otimes \mathcal{H}_1$ with $\R(V_4^{\ast},\mathcal{H}_4\otimes \mathcal{H}_1)=\frac{1}{4}$.
	\end{enumerate}
    Obviously, $V'_1,V'_2,V'_3,V'_4$ are subspaces in $\mathcal{H}_{1,3,4}$.
    Let $\vec{r}' = \left(\frac{2}{3},\frac{1}{4},\frac{1}{4}\right)$.
    One can verify that the interaction graph of $\left(V'_1,V'_3,V'_4\right)$ is $G_1$ as in \Cref{pic1} (B), and $\R\left(\left(V'_1,V'_3,V'_4\right),\mathcal{H}_{1,3,4}\right) = \vec{r}'$.
    Thus, $\left(V'_1,V'_3,V'_4\right)$ is a random instance of $(G_1,\vec{r}',(d_1,d_3,d_4))$.
	Similarly, the interaction graph of $\left(V'_2,V'_3,V'_4\right)$ is $G_2$ as in \Cref{pic1} (C), and $\R\left(\left(V'_2,V'_3,V'_4\right),\mathcal{H}_{1,3,4}\right) = \vec{r}'$.
	We also have 
	$\left(V'_2,V'_3,V'_4\right)$ is a random instance of $(G_2,\vec{r}',(d_1,d_3,d_4))$.
	
	In addition, by \Cref{def: indpoly} we have 
	$$I(G_1,\vec{r}') = 1 - \frac{2}{3} - \frac{1}{4} - \frac{1}{4} + \frac{2}{3}\cdot \frac{1}{4}=0.$$ 
	Combining with \Cref{lemma-eqinshearersbound},
	we have $\vec{r}'\not\in \mathcal{I}(G_1)$.
	Thus by the induction hypothesis, there is some $d_1',d_3'$ and $d_4'$
	where $d'_i\leq (3\times4\times4)^{4^3}$ for each $i\in \{1,3,4\}$
	such that the random instance of the setting $(G_1,\vec{r}',(d_1',d_3',d_4'))$ spans $\mathcal{H}_{1,3,4}$.
	Similarly, by \Cref{def: indpoly} we also have 
	$$I(G_2,\vec{r}') = 1 - \frac{2}{3} - \frac{1}{4} - \frac{1}{4} + \frac{2}{3}\cdot \frac{1}{4}=0.$$ 
	Combining with \Cref{lemma-eqinshearersbound},
	we have $\vec{r}'\not\in \mathcal{I}(G_2)$.
	Thus by the induction hypothesis, there is some $d''_2,d''_3$ and $d''_4$
	where $d'_i\leq (3\times4\times4)^{4^3}$ for each $i\in \{2,3,4\}$
	such that the random instance of the setting $(G_2,\vec{r}'',(d_1'',d_3'',d_4''))$ spans $\mathcal{H}_{1,3,4}$.
	
	Let $d_i = d_i'd_i''$ for each $i\in\{1,3,4\}$.
	By $d'_i,d''_i\leq (3\times4\times4)^{4^3}$ for each $i\in \{1,3,4\}$, we have $d_i \leq 144^{4^4}$.
	Recall that $\left(V'_1,V'_3,V'_4\right)$ is a random instance of $(G_1,\vec{r}',(d_1,d_3,d_4))$ and the random instance of the setting $(G_1,\vec{r}',(d_1',d_3',d_4'))$ spans $\mathcal{H}_{1,3,4}$.
	Thus, by \Cref{prop:multiple} (\ref{prop:multiple-item-b}) and $d_i = d_i' d_i''$ for each $i\in\{1,3,4\}$, 
	we have $\left(V'_1,V'_3,V'_4\right)$ spans the whole space $\mathcal{H}_{1,3,4}$ with probability 1.
	Similarly, one can also verify that $\left(V'_2,V'_3,V'_4\right)$ spans the whole space $\mathcal{H}_{1,3,4}$ with probability 1.
	By the union bound and the definition of $V'_1,V'_2,V'_3,V'_4$, we have there must be some $V^{\star}_1,V^{\star}_2,V^{\star}_3,V^{\star}_4$ such that the following conditions hold:
	\begin{enumerate}
		\item $V_1^{\star}$ is some subspace of $\mathcal{H}_{\{1,3,4\}}$ with $\R(V_1^{\star},\mathcal{H}_{\{1,3,4\}})=\frac{2}{3}$ which acts trivially on $\mathcal{H}_{\{3,4\}}$.
		\item $V_2^{\star}$ is some subspace of $\mathcal{H}_{\{1,3,4\}}$ with $\R(V_2^{\star},\mathcal{H}_{\{1,3,4\}})=\frac{2}{3}$ which acts trivially on $\mathcal{H}_{\{1,4\}}$.
		\item $V_3^{\star}$ is some subspace of $\mathcal{H}_{\{1,3,4\}}$ with $\R(V_3^{\star},\mathcal{H}_{\{1,3,4\}})=\frac{1}{4}$ which acts trivially on $\mathcal{H}_{1}$.
		\item $V_4^{\star}$ is some subspace of $\mathcal{H}_{\{1,3,4\}}$ with $\R(V_4^{\star},\mathcal{H}_{\{1,3,4\}})=\frac{1}{4}$ which acts trivially on $\mathcal{H}_{3}$.
		\item Both $V^{\star}_1,V^{\star}_3,V^{\star}_4$ and $V^{\star}_2,V^{\star}_3,V^{\star}_4$ span the whole space $\mathcal{H}_{\{1,3,4\}}$.
	\end{enumerate}
	
	Let $V_1 = V_1^{\star}\otimes e_1$, $V_2 = V_2^{\star}\otimes e_2$, $V_3 = V_3^{\star}\otimes \mathcal{H}_2$ and $V_4 = V_4^{\star}\otimes \mathcal{H}_2$.
	Let $\vec{V}=(V_1,V_2,V_3,V_4)$.
	One can verify that the interaction graph of $\vec{V}$ is $G_B$ as in \Cref{pic1} (A), and $\R\left(\vec{V},\mathcal{H}_{[4]}\right) =\left(\frac{1}{3},\frac{1}{3},\frac{1}{4},\frac{1}{4}\right) = \vec{r}$.
	In addition, we also have $\ddim(\mathcal{H}_1,\mathcal{H}_2,\mathcal{H}_3,\mathcal{H}_4) = (d_1,2,d_3,d_4) \leq (144^{4^4},144^{4^4},144^{4^4},144^{4^4})$.
	Moreover, by $V^{\star}_1,V^{\star}_3,V^{\star}_4$ span the space $\mathcal{H}_{\{1,3,4\}}$, we have $V_1,V_3,V_4$ span the space $\mathcal{H}_{\{1,3,4\}}\otimes e_1$.
	Similarly, by $V^{\star}_2,V^{\star}_3,V^{\star}_4$ span the space $\mathcal{H}_{\{1,3,4\}}$, we have $V_2,V_3,V_4$ span the space $\mathcal{H}_{\{1,3,4\}}\otimes e_2$.
	Thus, $V_1,V_2,V_3,V_4$ span the whole space $\mathcal{H}_{[4]} = \mathcal{H}_{\{1,3,4\}}\otimes e_1\oplus \mathcal{H}_{\{1,3,4\}}\otimes e_2$.
	In summary, we have $\vec{V}\sim \left(G_B,\vec{r},\vec{d}\right)$ 
	for some $\vec{d} \leq \left(144^{4^4},144^{4^4},144^{4^4},144^{4^4}\right)$ and $\vec{V}$ spans the whole space $\mathcal{H}_{[4]}$.
	Thus, \Cref{shearersboundistight:above} (\ref{item-a-thmshearersboundistight}) holds on the given $G_B$ and $\vec{r}$. 
\end{example}

\subsection{Induced interaction graph and induced relative dimensions}\label{sec-Inducedinteractiongraphandrelativedimensions}
Our proof of \Cref{shearersboundistight:above} (\ref{item-a-thmshearersboundistight}) is by induction on the number of left vertices in the interaction graphs. 
We first define the induced interaction graph and induced relative dimensions.

For any two sets of left vertices $S, T\subseteq [m]$ in $G_B$, define $S \mid T$ in $G_B$ as
$$S\setminus (\{\text{left vertex } i: i\in T \text{ or } \mathcal{N}(G_B,i)\cap \mathcal{N}(G_B,j)\neq \emptyset   \text{ for some } j\in T\}).$$
Intuitively, $S\mid T$ is the set of left vertices in $S$ which share no neighbor with the left vertices in $T$.
For simplicity, we will omit ``in $G_B$'' if $G_B$ is clear from the context.
For each $i\in [m]$ and $S\subseteq [m]$, we will also simply denote the set $S\mid \{i\}$ with $S\mid i$.

\begin{definition}[Induced interaction graph and induced relative dimensions]\label{def-induce-graph}
Given $G_B = ([m],[n],E_B)$ and rational $\vec{r}=(r_1,\cdots,r_m)
	\in{(0,1]}^m$ where $I(G_B,\vec{r},S)>0$ for each $S\subsetneq [m]$
	and $\abs{\mathcal{N}(n)}\geq 2$ for the right vertex $n$, 
assume \emph{w.l.o.g.} $\mathcal{N}(n)=[t]$ where $2 \leq t \leq m$.
Also assume that for each left vertex $i\in [m]$, 
$\mathcal{N}(i) \neq \emptyset$ and $r_i = \frac{p_i}{q_i}<1$ where $p_i,q_i$ are 
mutually prime. 
Let $q = \prod_{i\in [m]}q_i$.

Given a left vertex $i\in [t]$,
let $Y_i = \{i\}\cup \left([m]\setminus [t]\right)$.
Let $G_i$ be the induced graph on the left vertices in $Y_i$
and the right vertices in $[n-1]$.
For each vertex $j$, we will use $\mathcal{N}(j)$ to denote the neighbor of $j$ in $G_B$ and use $\mathcal{N}(G_i,j)$ to denote the neighbor of $j$ in $G_i$ where we specify $G_i$ for clarity.
Similarly, we will also use $I(S)$ to denote 
$I(G_B,\vec{r},S)$ for each subset $S\subseteq [m]$
and use $I(G_i,\vec{\lambda}_i,S)$ to denote the independence polynomial of $G_i,\vec{\lambda}_i$ and $S\subseteq Y_i$ 
where we specify $G_i$ and $\vec{\lambda}_i$ for clarity.
When we use the notation $S \mid T$ for some sets $S,T$, we always means $S \mid T$ in $G_B$.

Let $\vec{\lambda}_i = (\lambda_{ij})_{j\in Y_i}$
where $\lambda_{ij} = r_j$ for each $j\neq i$ and
\begin{align}\label{eq-lambda-ii}
\rho_i = \frac{\sum_{\ell\in[t]} r_{\ell}\cdot I([m] \mid \ell)  }{I([m]\mid i)},\quad \lambda_{ii} = \min \left\{\rho_i,1\right\}.
\end{align}
Assume $\lambda_{ij} = \frac{p_{ij}}{q_{ij}}$ where $p_{ij},q_{ij}$ are mutually prime for each $j\in Y_i$.
\end{definition}
\begin{rem}
In above definition, by $I(G_B,\vec{r},S)>0$ for each $S\subsetneq [m]$,  
we have $\rho_i, \lambda_{ii}$ are properly defined and $0< \lambda_{ii} \leq 1$.
In addition, by $\lambda_{ij} = r_i$ for each $j\neq i$ and $r_i \in (0,1]$, we have $0< \lambda_{ij} \leq 1$ for each $j\neq i$.
Thus, $\vec{\lambda}_i = (\lambda_{ij})_{j\in Y_i}$ is a properly defined relative dimension vector.
We emphasize that if $\mathcal{N}(i) = \{n\}$ for the left vertex $i$,
then $i$ has no neighbor in $G_i$. 
\end{rem}

The following lemma is used in the induction proof of \Cref{shearersboundistight:above} (\ref{item-a-thmshearersboundistight}).
It shows that if $I([m])\leq 0$, then the induced interaction graph $G_i$ and the induced relative dimension vector $\vec{\lambda}_i$ also satisfy the condition in \Cref{shearersboundistight:above} (\ref{item-a-thmshearersboundistight}).

\begin{lemma}\label{lemma-induction}
Given $G_B$, $\vec{r}$, $i$, $G_i$ and $\vec{\lambda}_i$ as in \Cref{def-induce-graph},
assume $I([m])\leq 0$. Then $\vec{\lambda}_i \not \in \mathcal{I}(G_i)$.
Moreover, if $\mathcal{N}(G_i,j) = \emptyset$ for some left vertex $j\in Y_i$ in $G_i$, 
then $\lambda_{ij} = 1$.
\end{lemma}

The following lemma is used in the proof of \Cref{lemma-induction}, which is
immediate by \Cref{def-induce-graph} and \Cref{def: indpoly}.

\begin{lemma}\label{lemma-induction-idpoly}
Given $G_B$, $\vec{r}$, $i$, $Y_i$, $G_i$ and $\vec{\lambda}_i$ as in \Cref{def-induce-graph}
and a left vertex $j\in [t]$, we have
\begin{align*}
I(G_i,\vec{\lambda}_i,[m]\setminus [t]) = I([m]\setminus [t]), \quad I(G_i,\vec{\lambda}_i,[m]\mid j) = I([m]\mid j).
\end{align*}
Thus, we also have
\begin{align*}
\lambda_{ii} = \min \left\{\frac{\sum_{\ell\in[t]} \left(r_{\ell} I(G_i,\vec{\lambda}_i,[m] \mid \ell)\right)  }{I(G_i,\vec{\lambda}_i,[m]\mid i)},1\right\}.
\end{align*}
\end{lemma}

By the definition of independence polynomial, we have the following properties. 
\begin{prop}\label{le:property of ind by def}
	Given $G_B=([m],[n],E_B)$ and $\vec{r} \in {(0,1]}^m$, we have
	\begin{enumerate}[(a)]
		\item \label{item-a-le-propertyindbydef} for each $i\notin S \subsetneq [m]$, $I(S\cup \{i\})=I(S)-r_i\cdot I(S \mid i)$.
		\item \label{item-b-le-propertyindbydef} if $\mathcal{N}(n) = [t]$ for the right vertex $n$, then $I([m]) = I([m]\setminus [t])-\sum_{\ell=1}^t r_{\ell}\cdot I([m]\mid \ell)$.
		\item\label{item-c-le-propertyindbydef} if $I(S)	>0$ for each $S\subsetneq [m]$, then we have   $I(T)\leq 1$ for each $T\subseteq[m]$.
		\item \label{item-d-le-propertyindbydef} if $I(S)	>0$ for each $S\subsetneq [m]$ and $I([m])\leq 0$, then $G_D(G_B)$ is connected.
	\end{enumerate}
\end{prop}
\begin{proof}
By \Cref{def: indpoly},
(\ref{item-a-le-propertyindbydef}) and (\ref{item-b-le-propertyindbydef}) are immediate.
In the following, we prove (\ref{item-c-le-propertyindbydef}) and (\ref{item-d-le-propertyindbydef}).

    \emph{(c).} For each $i$ and $Y$ where 
	$i\notin Y \subsetneq [m]$, by $I(S)
	>0$ for each $S\subsetneq [m]$, we have $I(Y \mid i)>0$.
	Combining with $r_i>0$, we have $r_i I(Y \mid i)>0$.
	In addition, by (\ref{item-a-le-propertyindbydef}) of this proposition, we have 
	$I(Y\cup \{i\})=I(Y)-r_i I(Y \mid i)$.
	Combining with $r_i I(Y \mid i)>0$, we have $I(Y\cup \{i\})<I(Y)$.
	In other words, $I(Y)$ decreases as $Y$ grows. 
	Thus, for any $T \subseteq [m]$, we have
	 $I(T)\leq I(\emptyset)=1$.
	
	\emph{(d).} Assume that $G_D(G_B)$ is not connected for contradiction. 
	Then
	there exists $\emptyset\subsetneq T\subsetneq[m]$ such that $T$ and $[m]\setminus T$ are separate in $G_D(G_B)$.
	In addition, by \Cref{def: indpoly} we have $I([m])=I(T)I([m]\setminus T)\leq 0$. 
	Thus, either $I(T)\leq 0$ or $I([m]\setminus T)\leq 0$, a contradiction with $I(S)
	>0$ for each $S\subsetneq [m]$.
\end{proof}

Now we can prove \Cref{lemma-induction}.
\begin{proof}[Proof of \Cref{lemma-induction}]
At first, we prove that $\vec{\lambda}_i \not \in \mathcal{I}(G_i)$.
By $i \in [t] = \mathcal{N}(n)$ for the right vertex $n$,
we have $([m]\setminus [t]) \mid i = [m] \mid i$.
Thus,
\begin{align*}
    &I(G_i,\vec{\lambda}_i,Y_i) \\
    (\text{by $Y_i =\{i\}\cup \left([m]\setminus[t] \right)$})\quad = &I(G_i,\vec{\lambda}_i,\{i\}\cup \left([m]\setminus[t]\right)) \\
    (\text{by \Cref{le:property of ind by def} (\ref{item-a-le-propertyindbydef})})\quad= &I(G_i,\vec{\lambda}_i,[m]\backslash [t])-\lambda_{ii} I(G_i,\vec{\lambda}_i,([m]\setminus [t])\mid i)\\
    (\text{by \eqref{eq-lambda-ii}})\quad= &I(G_i,\vec{\lambda}_i,[m]\backslash [t])- \frac{\sum_{\ell\in [t]} \left(r_{\ell} I(G_i,\vec{\lambda}_i,[m] \mid \ell)\right)  }{I(G_i,\vec{\lambda}_i,[m]\mid i)}\cdot I(G_i,\vec{\lambda}_i,([m]\setminus [t])\mid i)\\
    (\text{by \Cref{lemma-induction-idpoly}})\quad= &I([m]\backslash [t])-\frac{\sum_{\ell\in [t]} \left(r_{\ell} I([m] \mid \ell)\right)  }{I([m]\mid i)}\cdot I(([m]\setminus [t])\mid i)\\
    (\text{by $([m]\setminus [t]) \mid i = [m] \mid i$})\quad= &I([m]\backslash [t])-\frac{\sum_{\ell\in [t]} \left(r_{\ell} I([m] \mid \ell)\right)  }{I([m]\mid i)}\cdot I([m]\mid i)\\
    = &I([m]\backslash [t])-\sum_{\ell\in [t]}\left(r_{\ell} I([m] \mid \ell)\right)\\
    (\text{by \Cref{le:property of ind by def} (\ref{item-b-le-propertyindbydef})})\quad=&I([m])\\
    \leq&0.
\end{align*}
Combining with \Cref{lemma-eqinshearersbound},
we have $\vec{\lambda}_i \not\in \mathcal{I}(G_i)$.

In the following, we show that if $\mathcal{N}(G_i,j) = \emptyset$ for some left vertex $j\in Y_i$
in $G_i$, then $\lambda_{ij} = 1$.
It is lossless to assume that $j\in [t]$.
Because if $j\in [m]\setminus [t]$, we have 
$\mathcal{N}(G_i,j) = \mathcal{N}(G_B,j)$.
Combining with the assumption $\mathcal{N}(G_B,j) \neq \emptyset$ in \Cref{def-induce-graph},
we also have $\mathcal{N}(G_i,j) \neq \emptyset$.
In  the following, we assume $j\in [t]$.
Combining with $j\in Y_i$,
we have $j= i$.
In addition, if $\mathcal{N}(G_i,i) = \emptyset$ for the left vertex $i$,
we have $\mathcal{N}(G_B,i) = \{n\}$.
Therefore, we have $[m] \mid i = [m]\setminus [t]$ in $G_B$.
Thus, we have 
\begin{align*}
  I([m]\mid i) - \sum_{\ell\in [t]} \left(r_{\ell} I([m] \mid \ell)\right)  = I([m]\backslash [t])-\sum_{\ell\in [t]}\left(r_{\ell} I([m] \mid \ell)\right)=&I([m])\leq0,
\end{align*}
where the last equality is by \Cref{le:property of ind by def} (\ref{item-b-le-propertyindbydef}).
Combining with \eqref{eq-lambda-ii} and $\lambda_{i,i}>0$, we have $\lambda_{i,i} = 1$.
In summary, for each left vertex $j\in Y_i$, 
if $\mathcal{N}(G_i,j) = \emptyset$, then $\lambda_{ij} = 1$.
The lemma is proved.
\end{proof}

\subsection{Induced dimensions of qudits}\label{sec-induceddimensionsof qudits}
In this subsection, we define the induced dimensions of qudits.
Roughly speaking, 
for the induced interaction graph $G_i$ and the induced relative dimension vector $\vec{\lambda}_i$, 
we pick some induced dimension vector $(d_{ij})_{j\in Y_i}$ such that the random instance of the setting $(G_i,\vec{\lambda}_i,(d_{ij})_{j\in Y_i})$
spans the whole space.
The existence of such $(d_{ij})_{j\in Y_i}$ is ensured by applying \Cref{shearersboundistight:above} (\ref{item-a-thmshearersboundistight}) to $G_i$ and $\vec{\lambda}_i$.
The dimension of qudit $d_j$ for $G_B$ and $\vec{r}$ is defined as the product of the induced dimension of corresponding qudit $d_{ij}$ if $j\in [n-1]$, and 
defined by the independence polynomial of $G_B,\vec{r}$ if $j = n$.

\begin{definition}[Dimensions and induced dimensions of qudits]\label{def-induce-dimension}
Given $G_B$, $\vec{r}$, $i$, $Y_i$, $G_i$, $\rho_i$ and $\vec{\lambda}_i$ as in \Cref{def-induce-graph},
let $(a_{ij})_{j\in Y_i}$ be some dimension vector 
such that given any $\mathcal{H}_{1},\cdots,\mathcal{H}_{n-1}$ with $\dim(\mathcal{H}_{1},\cdots,\mathcal{H}_{n-1}) = (a_{ij})_{j\in Y_i}$,
the random subspaces in $\mathcal{H}_{[n-1]}$ of the setting $(G_i,\vec{\lambda}_i,(a_{ij})_{j\in Y_i})$
span the whole space $\mathcal{H}_{[n-1]}$.
Define $(d_{ij})_{j\in Y_i}$ as follows.
If $\rho_i\geq 1$,
let $d_{ij} = q_j$ for each $j\in Y_i$.
Otherwise, let 
$(d_{ij})_{j\in Y_i}$ be $(a_{ij})_{j\in Y_i}$.

Define
\begin{align}\label{eq-def-dj}
d_j = \begin{cases}
\prod_{i\in [t]}d_{ij} \quad & \text{if } j\in [n-1]\\
q^2\sum_{\ell=1}^t \left(r_{\ell} I([m] \mid \ell)\right) & \text{if } j= n.
\end{cases}
\end{align}
\end{definition}
\begin{rem}
We will specify $(a_{ij})_{j\in Y_i}$ for each $i\in[t]$ concretely 
in \Cref{sec-Proof-of-Theorem-shearersboundistight}.
\end{rem}

The following lemma shows that $(d_1,\cdots,d_n)$ is a properly defined dimension vector of qudits and provides an upper bound for the dimension of $d_n$.
\begin{lemma}\label{lemma-d-positive-integer}
Given $d_1,\cdots,d_n$ as in \Cref{def-induce-dimension}, we have $d_1,\cdots,d_n$ are positive integers. 
In addition, we have
$d_n\leq 2q^2$.
\end{lemma}

The following lemma is used in the proof of \Cref{lemma-d-positive-integer}.

\begin{lemma}\label{lem-lower-bound-alphai}
Given $G_B$, $\vec{r}$ and $q$ as in \Cref{def-induce-graph}, for each $S\subsetneq [m]$ and $\ell\in [m]$ we have
\begin{enumerate}[(a)]
    \item \label{lem-lower-bound-alphai-a} $qr_{\ell}$ and $q I(S)$ are positive integers;
    \item \label{lem-lower-bound-alphai-b} $qr_{\ell} I([m] \mid \ell),r_{\ell}q^2 I([m] \mid \ell), q\sum_{j\in S} \left(r_{j} I([m] \mid j)\right),q^2\sum_{j\in S} \left(r_{j} I([m] \mid j)\right)$ are positive integers;
    \item \label{lem-lower-bound-alphai-c} for each $k\in S$, 
there exists some integer $0<\alpha\leq q$ such that
\begin{align}\label{eq-alpha-sumlint}
\alpha(I([m]\mid k))^{-1}\sum_{j\in S} \left(r_{j} I([m] \mid j)\right)
\end{align}
is a positive integer.
\end{enumerate}
\end{lemma}
\begin{proof}
\emph{(a).} By $q>0$ and $r_{\ell}>0$, we have $qr_{\ell}$ is positive.
In addition, by 
$q = \prod_{j\in [m]}q_j$ and $r_{\ell} = \frac{p_{\ell}}{q_{\ell}}$,
we have 
$$qr_{\ell} = p_{\ell}\prod_{j\in [m]\setminus\{\ell\}}q_j.$$
Thus we have $qr_{\ell}$ is an integer by $p_{\ell}$ and $q_j$ for all $j\in [m]$ are integers.
In summary, $qr_{\ell}$ is a positive integer.

For each $S\subsetneq [m]$,
by \Cref{def: indpoly} we have 
\begin{align*}
q I(S) = I(S)\prod_{j\in S }q_j\prod_{k\in [m]
\setminus S}q_k = \sum_{T\in \mathsf{Ind}(G_B)\land T\subseteq S}(-1)^{\abs{T}}\prod_{j\in T}p_j\prod_{\ell\in S\setminus T}q_k\prod_{k\in [m]
\setminus S}q_{\ell}
\end{align*}
is an integer.
In addition, recall that $I(S)>0,q>0$ as in \Cref{def-induce-graph},
we have $qI(S)$ is positive.
Thus, $q I(S) $ is a positive integer.

\emph{(b).} We only show that $qr_{\ell} I([m] \mid \ell)$ is a positive integer for each $\ell \in [m]$.
Then it is immediate that $r_{\ell}q^2 I([m] \mid \ell), q\sum_{j\in S} \left(r_{j} I([m] \mid j)\right)$ and $q^2\sum_{j\in S} \left(r_{j} I([m] \mid j)\right)$ are also positive integers.
Recall that $I(T)>0$ for each $T\subsetneq [m]$ as in \Cref{def-induce-graph},
we have $I([m]\mid \ell)> 0$.
Combining with $r_{\ell}>0$ and $q>0$,
we have $qr_{\ell} I([m] \mid \ell)$ is positive.
In addition, we also have $qr_{\ell} I([m] \mid \ell)$ is an integer.
Let $T \triangleq \{S\mid \left(S\in \mathsf{Ind}(G_B)\right)\land \left(S\subseteq ([m] \mid \ell)\right)\}$. 
By \Cref{def: indpoly}, we have
\begin{align*}
  qr_{\ell} I([m] \mid \ell) &= qr_{\ell}\sum_{S\in T}(-1)^{\abs{S}}\prod_{j\in S}r_j =
    \left(\prod_{i\in [m]}q_i\right)\cdot\frac{p_{\ell}}{q_{\ell}}\cdot\sum_{S\in T}(-1)^{\abs{S}}\prod_{j\in S} \frac{p_{j}}{q_{j}}
    \\&= p_{\ell}\sum_{S\in T}(-1)^{\abs{S}}\prod_{j\in S}p_j\prod_{k\in [m]\setminus (S\cup\{\ell\})}q_k,
\end{align*}
where the last equality is by $\ell\not\in \left([m] \mid \ell\right)$.
Combining with $p_{j},q_j$ are integers for each $i\in [m]$,
we have $qr_{\ell} I([m] \mid \ell)$ is an integer.
In summary, we have $qr_{\ell} I([m] \mid \ell)$ is a positive integer.

\emph{(c).} Let
$\alpha = qI([m]\mid k)$.
Combining with $q I(S)$ is a positive integer for each $S\subsetneq  [m]$ as shown in (\ref{lem-lower-bound-alphai-a}) of this lemma,
we have $\alpha$ is a positive integer.
In addition, by \eqref{eq-alpha-sumlint} we have
\begin{align*}
    &\alpha(I([m]\mid k))^{-1}\sum_{j\in S}\left(r_{j} I([m] \mid j)\right) = \sum_{j\in S} \left(qr_{j} I([m] \mid j)\right).
\end{align*}
Combining with that $qr_{j} I([m] \mid j)$ is a positive integer for each $j \in [m]$ as shown in (\ref{lem-lower-bound-alphai-b}) of this lemma,
we have \eqref{eq-alpha-sumlint} is a positive integer.
\end{proof}

By \Cref{lem-lower-bound-alphai}, we have the following corollary immediately.

\begin{corol}\label{corol-induction-prodq}
$\prod_{j\in Y_{i}} q_{ij} \leq q^2$ for each $i\in [t]$.
\end{corol}
\begin{proof}
By $\lambda_{ij} = r_i$ for each $j\neq i$, we also have 
$q_{ij} = q_j$ for each $j\neq i$.
In addition, 
by combining \eqref{eq-lambda-ii} with \Cref{lem-lower-bound-alphai} (\ref{lem-lower-bound-alphai-c}), 
we have 
there exists some integer $0<\alpha\leq \prod_{j\in [m]}q_j$ such that
$\alpha\lambda_{ii}$
is a positive integer.
Thus, we have $q_{ii}\leq \alpha \leq \prod_{j\in [m]}q_j = q$.
Combining with $q_{ij} = q_j$ for each $j\neq i$,
we have 
\[
\prod_{j\in Y_{i}} q_{ij} \leq q_{ii}\cdot \prod_{j\neq i} q_{ij} \leq q\cdot \prod_{j\neq i} q_{j} \leq  q^2. \qedhere
\]
\end{proof}

Now we can prove \Cref{lemma-d-positive-integer}.

\begin{proof}[Proof of \Cref{lemma-d-positive-integer}.]
At first, we prove that $d_1,\cdots,d_n$ are positive integers.
For each $j\in [n-1]$, $d_j$
is a positive integer is immediate by $t\geq 2$, 
$d_{i,j}$ is a positive integer for each $i\in [t]$ and \eqref{eq-def-dj}.

In the next, we prove that $d_n$ is a positive integer no more than $2q$.
Combining \eqref{eq-def-dj} with \Cref{lem-lower-bound-alphai} (\ref{lem-lower-bound-alphai-b}),
we have
$d_n$ is a positive integer.
Moreover, we have 
\begin{align*} &\sum_{\ell=1}^tr_{\ell}\cdot I([m]\mid \ell)\\
(\text{by \Cref{le:property of ind by def} (\ref{item-b-le-propertyindbydef})})\quad=&I([m]\setminus [t])-I([m])\\      (\text{by \Cref{le:property of ind by def} (\ref{item-a-le-propertyindbydef})}) \quad =&I([m]\setminus [t])-\left(I([m]\backslash\{1\})-r_1\cdot I([m]\mid\{1\})\right)\\
(\text{by $I([m]\backslash\{1\})>0$ and $r_1\leq 1$}) \quad <& I([m]\setminus [t])+ I([m]\mid\{1\})\\
(\text{by \Cref{le:property of ind by def} (\ref{item-c-le-propertyindbydef})}) \quad\leq& 2.
\end{align*}
Thus by \eqref{eq-def-dj} we have
$d_n \leq 2q^2$.
\end{proof}

Given $G_B$, $\vec{r}$ as in \Cref{def-induce-graph}, 
for each $i\in [t]$, we have $r_iq^2  I([m]\mid i)$ is a positive integer by \Cref{lem-lower-bound-alphai} (\ref{lem-lower-bound-alphai-b}). 
Thus, one can decompose a qudit with dimension $d_n$ into $t$ orthogonal subspaces as follows.

\begin{definition}[Induced qudits]\label{def-induce-qudits}
Given $d_1,\cdots,d_n$ as in \Cref{def-induce-dimension},
Let $\mathcal{H}_1,\mathcal{H}_2,\cdots,\mathcal{H}_n$ be qudits where $\ddim(\mathcal{H}_j) = d_j$ for each $j\in [n]$.
Decompose $\mathcal{H}_n$ into $t$ orthogonal subspaces $\mathcal{H}_n^1,\cdots,\mathcal{H}_n^t$ arbitrarily where
\begin{equation}\ \label{eq:dim}
\forall i\in [t], \quad \ddim(\mathcal{H}_n^i) = r_{i} q^2 I([m]\mid i).
\end{equation}
Define
\begin{align*}
T_1 &\triangleq \{\text{left vertex } i\mid (i\in [t])\land \left(r_i\ddim(\mathcal{H}_n) < \ddim(\mathcal{H}_n^i)\right)\land \left(\mathcal{N}(i) \neq \{n\}\right)\},\\
T_2 &\triangleq \{\text{left vertex } i\mid (i\in [t])\land \left(r_i\ddim(\mathcal{H}_n) \geq \ddim(\mathcal{H}_n^i)\right)\}.
\end{align*}
\end{definition}

By \eqref{eq-def-dj} and \eqref{eq:dim} we have $\ddim(\mathcal{H}_n)=d_n=\sum_{\ell=1}^t\ddim(\mathcal{H}_n^\ell)$.
Thus the decomposition in \Cref{def-induce-qudits} is properly defined.
In addition, the following lemma shows that $T_1$, $T_2$ form a partition of the set of left vertices $[t]$.
\begin{lemma}\label{lemma-t1t2t}
$T_1\cap T_2 = \emptyset$ and $T_1\cup T_2 = [t]$.
\end{lemma}
\begin{proof}
$T_1\cap T_2 = \emptyset$ is immediate by their definitions.
In the following, we show that $T_1\cup T_2 = [t]$.
Then the lemma is immediate.
By \Cref{le:property of ind by def} (\ref{item-b-le-propertyindbydef}) we have  
$$I([m]) = I([m]\setminus [t])-\sum_{\ell=1}^t r_{\ell}\cdot I([m]\mid \ell).$$
If $I([m])\leq 0$,
we have 
$$I([m]\setminus [t])\leq\sum_{\ell=1}^t r_{\ell}\cdot I([m]\mid \ell).$$
For each left vertex $i$ where $\mathcal{N}(i) = \{n\}$,
we have 
$$I([m]\mid i) = I([m]\setminus [t])  \leq \sum_{\ell=1}^t r_{\ell}\cdot I([m]\mid \ell).$$
Thus we have 
$$r_{i} q^2 I([m]\mid i)\leq r_iq^2\sum_{\ell=1}^t \left(r_{\ell} I([m] \mid \ell)\right).$$
Combining with \eqref{eq:dim} and \eqref{eq-def-dj}, we have
$$\ddim(\mathcal{H}_n^i) = r_{i} q^2 I([m]\mid i)
    \leq r_iq^2\sum_{\ell=1}^t \left(r_{\ell} I([m] \mid \ell)\right) 
    =r_id_n = r_i\ddim(\mathcal{H}_n).$$
In summary, if $\mathcal{N}(i) = \{n\}$,
we have $\ddim(\mathcal{H}_n^i) \leq r_i\ddim(\mathcal{H}_n)$.
Thus we have 
$\{i\in [t]\mid \mathcal{N}(i) = \{n\}\} \subseteq T_2$.
Thus, we have
\begin{align*}
T_2 &= \{\text{left vertex } i\mid (i\in [t])\land \left(r_i\ddim(\mathcal{H}_n) \geq \ddim(\mathcal{H}_n^i)\right)\}\\
&=\{\text{left vertex } i\mid (i\in [t])\land \left(\left(r_i\ddim(\mathcal{H}_n) \geq \ddim(\mathcal{H}_n^i)\right)\lor \left(\mathcal{N}(i) = \{n\}\right)\right)\}.
\end{align*}
Combining the definition of $T_1$, we have $T_1\cup T_2= [t]$.
\end{proof}

\subsection{Induced subspaces}\label{sec-inducedsubspaces}
In this subsection, we define the induced subspaces.
Given $\mathcal{H}_1,\mathcal{H}_2,\cdots,\mathcal{H}_n$ and $\mathcal{H}_n^1,\cdots,\mathcal{H}_n^t$ as in \Cref{def-induce-qudits},
we define the induced random subspaces 
$V'_1,\cdots,V'_m$ and  
the induced subspaces $V^{\star}_1,\cdots,V^{\star}_m$ as follows.
The intuition is to let $V'_1,\cdots,V'_m$ and $V^{\star}_1,\cdots,V^{\star}_m$
be some random instance and instance of the setting $(G_i,\vec{\lambda}_i,(d_1,\cdots,d_{n-1}))$,
respectively.

\begin{definition}[Induced subspaces]\label{def-induce-subspace}
Given $\mathcal{H}_1,\mathcal{H}_2,\cdots,\mathcal{H}_n$ and $\mathcal{H}_n^1,\cdots,\mathcal{H}_n^t$ as in \Cref{def-induce-qudits}, 
we choose subspaces $V'_1,\cdots,V'_m$ of $\mathcal{H}_{[n-1]}$ randomly for each $i\in [m]$ as follows.
\begin{itemize}
    \item For each $i\in T_1$,
let $V'_i$ be $V_i^{\ast}\otimes \mathcal{H}_{[n-1]\setminus \mathcal{N}(i)}$ where $V_i^{\ast}$ is a random subspace of $\mathcal{H}_{\mathcal{N}(i)\backslash \{n\}}$ with 
$$\mathbb{R}\left(V_i^{\ast},\mathcal{H}_{\mathcal{N}(i)\backslash \{n\}}\right)= \frac{r_i\ddim(\mathcal{H}_n)}{\ddim(\mathcal{H}_n^i)}.$$
\item For each $i\in T_2$, let $V'_i$ be $\mathcal{H}_{[n-1]}$.
\item For each $i\in [m]\setminus [t]$, let $V'_i$ be $V_i^{\ast}\otimes\mathcal{H}_{[n-1]\setminus \mathcal{N}(i)}$ where
$V_i^{\ast}$ is a random subspace of $\mathcal{H}_{\mathcal{N}(i)}$  with $\mathbb{R}(V_i^{\ast},\mathcal{H}_{\mathcal{N}(i)}) = r_i$.
\end{itemize}

Similarly, let $V^{\star}_1,\cdots,V^{\star}_m$ be some subspaces of $\mathcal{H}_{[n-1]}$ such that $\mathbb{R}(V^{\star}_i,\mathcal{H}_{[n-1]}) =\mathbb{R}(V'_i,\mathcal{H}_{[n-1]})$ for each $i\in [m]$ and
\begin{itemize}
    \item For each $i\in T_1$, $V^{\star}_i$ acts trivially on $\mathcal{H}_{[n-1]\setminus \mathcal{N}(i)}$.
    \item For each $i\in T_2$, $V^{\star}_i = \mathcal{H}_{[n-1]}$.
    \item For each $i\in [m]\setminus [t]$, $V^{\star}_i$ acts trivially on $\mathcal{H}_{[n-1]\setminus \mathcal{N}(i)}$.
\end{itemize}
\end{definition}
\begin{rem}
We will specify $V^{\star}_1,\cdots,V^{\star}_m$ concretely 
in \Cref{sec-Proof-of-Theorem-shearersboundistight}.
\end{rem}

The following lemmas show that  $V'_1,\cdots,V'_m$ are properly defined.
Similarly, one can also verify that $V^{\star}_1,\cdots,V^{\star}_m$ are also properly defined.

\begin{lemma}\label{lemma-viuj-integers}
For each $i\in T_1\cup \left([m]\setminus [t]\right)$, 
we have $\ddim(V_i^{\ast})$ is a positive integer.
Thus, $\ddim(V'_i)$ is also a positive integer for each $i\in [m]$.
\end{lemma}

The following two lemmas are used in the proof of \Cref{lemma-viuj-integers}.
\Cref{lemma-ratio-rhoi} is a simple fact by 
\eqref{eq-lambda-ii}, \eqref{eq-def-dj} and \eqref{eq:dim},
and \Cref{lemma-random-instance-dsize} is an easy observation by \Cref{def-instance}.
\begin{lemma}\label{lemma-ratio-rhoi}
For each $i\in [t]$, we have $$\frac{r_i\ddim(\mathcal{H}_n)}{\ddim(\mathcal{H}_n^i)} = \rho_i.$$
\end{lemma}

\begin{lemma}\label{lemma-random-instance-dsize}
Given $i\in [t]$, if $\Pr[\mathbb{R}(\bigplus_{j\in Y_i} V_{ij})=1]=1$ for random subspaces $(V_{ij})_{j\in Y_i}$ of the setting $(G_i,\vec{\lambda}_i,(d_{ij})_{j\in Y_i})$,
then for each $j\in Y_i$, 
$\lambda_{ij}\prod_{k\in \mathcal{N}(G_i,j)}d_{ik}$ is a positive integer.
\end{lemma}

\begin{proof}[Proof of \Cref{lemma-viuj-integers}]
We only prove that $\ddim(V_i^{\ast})$ is a positive integer for each $i\in T_1\cup \left([m]\setminus [t]\right)$.
Then it is immediate that $\ddim(V'_i)$ is also a positive integer for each $i\in [m]$ by its definition.
For each $i\in T_1$, 
by the definition of $T_1$ in \Cref{def-induce-qudits}, we have 
$r_i\ddim(\mathcal{H}_n)< \ddim(\mathcal{H}_n^i)$.
Combining with \Cref{lemma-ratio-rhoi}, we have 
\begin{align}\label{eq-rhoi-le1}
\rho_i = \frac{r_i\ddim(\mathcal{H}_n)}{\ddim(\mathcal{H}_n^i)}<1.
\end{align}
Combining with \Cref{def-induce-dimension}, we have
$\Pr[\mathbb{R}(\bigplus_{j\in Y_i} V_{ij})=1]=1$ for random subspaces $(V_{ij})_{j\in Y_i}$ of the setting $(G_i,\vec{\lambda}_i,(d_{ij})_{j\in Y_i})$.
Combining with \Cref{lemma-random-instance-dsize},
we have $\lambda_{ii}\prod_{j\in \mathcal{N}(G_i,i)}d_{ij}$ is a positive integer.
In addition, by $j\in \mathcal{N}(G_i,i)$ and the definition of $G_i$, we have $j\in [n-1]$.
Combining with \Cref{def-induce-dimension} and $\lambda_{ii}\prod_{j\in \mathcal{N}(G_i,i)}d_{ij}$ is a positive integer,
we have $\lambda_{ii}\prod_{j\in \mathcal{N}(G_i,i)}d_{j}$ is also a positive integer. 
Moreover, by \eqref{eq-rhoi-le1} and \eqref{eq-lambda-ii}, we have
\begin{align}\label{eq-lambdaii-pi}
\lambda_{ii} = \rho_i = \frac{r_i\ddim(\mathcal{H}_n)}{\ddim(\mathcal{H}_n^i)}.
\end{align}
Combining with \Cref{def-induce-subspace},
we have 
$\lambda_{ii} = \mathbb{R}\left(V_i^{\ast},\mathcal{H}_{\mathcal{N}(i)\backslash \{n\}}\right)$.
By $i\in T_1$ and the definition of $G_i$ in \Cref{def-induce-graph}, we have $\mathcal{N}(G_i,i) = \mathcal{N}(i)\setminus \{n\}$.
Thus, we have 
\begin{align*}
&\lambda_{ii}\prod_{j\in \mathcal{N}(G_i,i)}d_{j}  = \mathbb{R}\left(V_i^{\ast},\mathcal{H}_{\mathcal{N}(i)\backslash \{n\}}\right)\prod_{j\in \mathcal{N}(i)\setminus \{n\}}d_{j}
= \mathbb{R}\left(V_i^{\ast},\mathcal{H}_{\mathcal{N}(i)\backslash \{n\}}\right)\ddim\left(\mathcal{H}_{\mathcal{N}(i)\backslash \{n\}}\right)
= \dim\left(V_i^{\ast}\right),
\end{align*}
where the second equality is by \Cref{def-induce-qudits}.
Combining with that $\lambda_{ii}\prod_{j\in \mathcal{N}(G_i,i)}d_{j}$ is a positive integer,
we have $ \dim\left(V_i^{\ast}\right)$ 
is also a positive integer. The lemma is proved.
\end{proof}

The following two lemmas show that $V'_1,\cdots, V'_m$ are an instance of the setting $(G_i,\vec{\lambda}_i,(d_1,\cdots,d_{n-1}))$.

\begin{lemma}\label{lemma-induce-gi}
For each left vertex $i\in [t]$,
$G_i$ is an interaction graph of $(V'_{j})_{j\in Y_i}$.
\end{lemma}

\begin{proof}
Given a left vertex $i\in [t]$, by \Cref{lemma-t1t2t},
we have either $i\in T_1$ or $i\in T_2$.
If $i\in T_1$, we have $\mathcal{N}(G_i,i) = \mathcal{N}(i)\setminus \{n\}$
and 
$$V_i^{'} = V_i^{\ast}\otimes \mathcal{H}_{[n]\setminus \mathcal{N}(i)} = V_i^{\ast}\otimes \mathcal{H}_{[n-1]\setminus \mathcal{N}(G_i,i)}.$$
If $i\in T_2$, we have $\mathcal{N}(G_i,i) = \mathcal{N}(i)\setminus \{n\}$
and $V_i^{'} = \mathcal{H}_{[n-1]}$.
Moreover, for each left vertex $i\in [t]$ and $j\in Y_i\setminus \{i\}$, we have $j\in [m]\setminus [t]$.
Combining with $[t] = \mathcal{N}(n)$ for the right vertex $n$,
we have $j\not\in \mathcal{N}(n)$.
Thus, $n\not \in \mathcal{N}(j)$.
Combining with \Cref{def-induce-graph},
we have $\mathcal{N}(G_i,j) = \mathcal{N}(j)\setminus \{n\} = \mathcal{N}(j)$.
Combining with \Cref{def-induce-subspace}, we have
$$V'_j = V_j^{\ast}\otimes\mathcal{H}_{[n-1]\setminus \mathcal{N}(j)} = V_j^{\ast}\otimes\mathcal{H}_{[n-1]\setminus \mathcal{N}(G_i,j)}.$$
In summary, we always have that 
$\Pi_{V'_j}$ acts trivially on the qudits $\mathcal{H}_{[n-1]\setminus \mathcal{N}(G_i,j)}$ for each $i\in [t]$ and $j\in Y_i$.
Then $G_i$ is an interaction graph of $(V'_{j})_{j\in Y_i}$.     
\end{proof}

\begin{lemma}\label{lemma-induce-rviprime}
For each left vertex $i\in [t]$ and $j\in Y_i$, we have 
\begin{enumerate}[(a)]
    \item $0<\lambda_{ij}\leq 1$;\label{item-lemma-induce-rviprime-lambdaij}
    \item $V_j^{'} = \mathcal{H}_{[n-1]}$ if $\lambda_{ij}=1$;\label{item-lemma-induce-rviprime-Vjprime}
    \item $V_j^{'} = V_j^{\ast}\otimes \mathcal{H}_{[n-1]\setminus \mathcal{N}(G_i,j)}$ where $V_j^{\ast}$ is a random subspace of $\mathcal{H}_{\mathcal{N}(G_i,j)}$ according to the Haar measure with $\mathbb{R}(V_j^{\ast},\mathcal{H}_{\mathcal{N}(G_i,j)}) = \lambda_{ij}$ if $\lambda_{ij}<1$.\label{item-lemma-induce-rviprime-Vjstar}
\end{enumerate}
\end{lemma}
\begin{proof}
By \eqref{eq-lambda-ii}, we have $0<\lambda_{ij}\leq 1$ immediately.
Recall that $Y_i = \{i\}\cup \left([m]\setminus [t]\right)$.
Combining with $j\in Y_i$, we have either $j\in [m]\setminus [t]$ or $j= i$.
In the following, we prove (\ref{item-lemma-induce-rviprime-Vjprime}) and (\ref{item-lemma-induce-rviprime-Vjstar}) of this lemma for the two cases $j\in [m]\setminus [t]$ and $j= i$ separately. Then the lemma is immediate.

At first, we assume $j\in [m]\setminus [t]$.
We have $n\not\in \mathcal{N}(j)$ and then 
$\mathcal{N}(j) = \mathcal{N}(G_i,j)$.
Thus, by $j\in [m]\setminus [t]$ and \Cref{def-induce-subspace}, we have 
$$V'_j = V_j^{\ast}\otimes\mathcal{H}_{[n-1]\setminus \mathcal{N}(j)} = V_j^{\ast}\otimes\mathcal{H}_{[n-1]\setminus \mathcal{N}(G_i,j)}$$ where 
$V_j^{\ast}$ is a random subspace of $\mathcal{H}_{\mathcal{N}(j)} = \mathcal{H}_{\mathcal{N}(G_i,j)}$ according to the Haar measure with 
\begin{align}\label{eq-rviast-hngij}
\mathbb{R}(V_j^{\ast}, \mathcal{H}_{\mathcal{N}(G_i,j)}) = \mathbb{R}(V_j^{\ast}, \mathcal{H}_{\mathcal{N}(j)}) = r_j.
\end{align}
Moreover, by $j\in [m]\setminus [t]$ and $i\in [t]$,
we have $j\neq i$.
Combining with \Cref{def-induce-graph},
we have $\lambda_{ij} = r_j $.
Combining with \eqref{eq-rviast-hngij},
we have 
$\mathbb{R}(V_j^{\ast}, \mathcal{H}_{\mathcal{N}(G_i,j)}) = \lambda_{ij}$.
Then (\ref{item-lemma-induce-rviprime-Vjstar}) of this lemma is proved.
Meanwhile, 
if $r_j = 1$, by \eqref{eq-rviast-hngij} we have 
$\mathbb{R}(V_j^{\ast}, \mathcal{H}_{\mathcal{N}(j)}) = 1$ and then
$V_i^{\ast} = \mathcal{H}_{\mathcal{N}(j)}$.
Combining with \Cref{def-induce-subspace},
we have 
$V'_j =V_j^{\ast}\otimes\mathcal{H}_{[n-1]\setminus \mathcal{N}(j)} = \mathcal{H}_{[n-1]}$.
Then (\ref{item-lemma-induce-rviprime-Vjprime}) of this lemma is proved.
Thus the case $j\in [m]\setminus [t]$ is proved.

In the following, we assume $j = i$.
We have $n\in \mathcal{N}(i) = \mathcal{N}(j)$ and then 
\begin{align}\label{eq-mathcalngij-mathcalni}
\mathcal{N}(G_i,j) = \mathcal{N}(j) \setminus \{n\}.
\end{align}
If $\lambda_{ij} = 1$, we have $\lambda_{ii} = 1$.
Moreover, we claim that $\lambda_{ii} = 1$ if and only if $i\in T_2$.
Thus by $\lambda_{ii} = 1$, we have 
$i\in T_2$.
Combining with \Cref{def-induce-subspace}, we have 
$V_i^{'} = \mathcal{H}_{[n-1]}$.
Combining with $j = i$, we have 
$V_j^{'} = \mathcal{H}_{[n-1]}$.
Then (\ref{item-lemma-induce-rviprime-Vjprime}) of this lemma is proved.
If $\lambda_{ij} < 1$, we have $\lambda_{ii} = \lambda_{ij} < 1$.
Combining with the claim that $\lambda_{ii} = 1$ if and only if $i\in T_2$,
we have $i\not\in T_2$.
Combining with $i\in [t]$ and \Cref{lemma-t1t2t},
we have $i\in T_1$.
Combining with \Cref{def-induce-subspace} and $j=i$, we have
$V'_j = V_j^{\ast}\otimes\mathcal{H}_{[n-1]\setminus \mathcal{N}(j)}$ where 
$V_j^{\ast}$ is a random subspace of $\mathcal{H}_{\mathcal{N}(j)\setminus \{n\}}$
with 
$$\mathbb{R}\left(V_j^{\ast},\mathcal{H}_{\mathcal{N}(j)\backslash \{n\}}\right)= \frac{r_i\ddim(\mathcal{H}_n)}{\ddim(\mathcal{H}_n^i)}.$$
Combining with \eqref{eq-mathcalngij-mathcalni},
we have 
$$V'_j = V_j^{\ast}\otimes\mathcal{H}_{[n-1]\setminus \mathcal{N}(j)} =  V_j^{\ast}\otimes\mathcal{H}_{[n-1]\setminus \mathcal{N}(G_i,j)}$$
where 
$V_j^{\ast}$ is a random subspace of $\mathcal{H}_{\mathcal{N}(G_i,j)}$
with 
\begin{align}\label{eq-definitionrvstar-inproof}
\mathbb{R}\left(V_j^{\ast},\mathcal{H}_{\mathcal{N}(G_i,j)}\right) = \mathbb{R}\left(V_j^{\ast},\mathcal{H}_{\mathcal{N}(j)\backslash \{n\}}\right)= \frac{r_i\ddim(\mathcal{H}_n)}{\ddim(\mathcal{H}_n^i)}.
\end{align}
In addition, by \eqref{eq-def-dj} and $\ddim(\mathcal{H}_n) = d_n$,
we have 
\begin{align}\label{eq-riddimhn}
r_i\ddim(\mathcal{H}_n) = r_id_n = r_iq^2\sum_{\ell=1}^t \left(r_{\ell} I([m] \mid \ell)\right).
\end{align}
Combining with \eqref{eq-definitionrvstar-inproof}, \eqref{eq:dim} and \eqref{eq-riddimhn}, we have 
\begin{equation*}
 \begin{aligned}
\mathbb{R}\left(V_j^{\ast},\mathcal{H}_{\mathcal{N}(G_i,j)}\right)= \frac{r_i\ddim(\mathcal{H}_n)}{\ddim(\mathcal{H}_n^i)} = \frac{r_iq^2\sum_{\ell\in[t]} \left(r_{\ell} I([m] \mid \ell)\right)  }{r_iq^2I([m]\mid i)}=\frac{\sum_{\ell\in[t]} \left(r_{\ell} I([m] \mid \ell)\right)  }{I([m]\mid i)}.  
\end{aligned}  
\end{equation*}
Combining with $\lambda_{ii} = \lambda_{ij} < 1$ and \eqref{eq-lambda-ii}, we have
\begin{equation*}
 \begin{aligned}
\mathbb{R}\left(V_j^{\ast},\mathcal{H}_{\mathcal{N}(G_i,j)}\right)= \frac{\sum_{\ell\in[t]} \left(r_{\ell} I([m] \mid \ell)\right)  }{I([m]\mid i)} = \lambda_{ii} = \lambda_{ij}.  
\end{aligned}  
\end{equation*}
Then (\ref{item-lemma-induce-rviprime-Vjstar}) of this lemma is proved.

At last, we prove the claim that $\lambda_{ii} = 1$ if and only if $i\in T_2$.
Then the case $j=i$ is proved and the lemma is immediate.
If $\lambda_{ii} = 1$,
by \eqref{eq-lambda-ii} we have
\begin{align}\label{eq-immidi}
I([m]\mid i) \leq \sum_{\ell=1}^t \left(r_{\ell} I([m] \mid \ell)\right).
\end{align}
Combining with \eqref{eq:dim} and \eqref{eq-riddimhn}, we have
\begin{equation}
\begin{aligned}\label{eq-ddim-2-ri}
\ddim(\mathcal{H}_n^i) - r_i\ddim(\mathcal{H}_n) &=  r_{i} q^2 I([m]\mid i) - r_iq^2\sum_{\ell=1}^t \left(r_{\ell} I([m] \mid \ell)\right) 
\\&= r_{i} q^2\left(I([m]\mid i)  - \sum_{\ell=1}^t \left(r_{\ell} I([m] \mid \ell)\right)\right).
\end{aligned}   
\end{equation}
Combining with \eqref{eq-immidi}, we have 
$\ddim(\mathcal{H}_n^i) \leq r_i\ddim(\mathcal{H}_n)$.
Combining with \Cref{def-induce-qudits} and $i\in t$,
we have $i\in T_2$.
Conversely,  we also have $\lambda_{ii} = 1$ if $i\in T_2$.
The claim is proved. 
\end{proof}

By Lemmas \ref{lemma-induce-gi}, \ref{lemma-induce-rviprime}, and \Cref{def-instance}, we have the following corollary.
\begin{corol}\label{corol-vprimej-randominstance}
For each left vertex $i\in[t]$,
$(V'_{j})_{j\in Y_i}$ is a random instance of the setting $(G_i,\vec{\lambda}_i,(d_1,\cdots,d_{n-1}))$.
\end{corol}

Similarly, one can also prove the following corollary.
\begin{corol}\label{corol-vstarj-instance}
For each left vertex $i\in[t]$,
$(V^{\star}_{j})_{j\in Y_i}$ is an instance of the setting $(G_i,\vec{\lambda}_i,(d_1,\cdots,d_{n-1}))$.
\end{corol}

\subsection{Subspaces}\label{sec-subspaces}
In this subsection, we define the subspaces.
Given $\mathcal{H}_1,\mathcal{H}_2,\cdots,\mathcal{H}_n,\mathcal{H}_n^1,\cdots,\mathcal{H}_n^t$ as in \Cref{def-induce-qudits} and $V^{\star}_1,\cdots,V^{\star}_m$ as in \Cref{def-induce-subspace},
we define the subspaces $V_1,\cdots,V_m$ as follows.
For each $i\in [m]$, $V_i$ is defined based on the tenser product of $V^{\star}_i$ and the induced qudit $\mathcal{H}_n^i$ or the qudit $\mathcal{H}_n$.
If the relative dimension of the tenser product is less than $r_i$, 
an orthogonal subspace is added to make up the deficit.
The goal is to let $(V_1,\cdots,V_m)$ be an instance of the setting $(G_B,\vec{r},\vec{d})$.

\begin{definition}[Subspaces]\label{def-subspace}
For each $i\in [m]$, define $V_1,\cdots,V_m$ as follows.
\begin{itemize}
    \item If $i\in T_1$, let $V_i =  V^{\star}_i \otimes \mathcal{H}_n^i$.
    \item If $i\in T_2$, let $V_i = \left(V^{\star}_i\otimes\mathcal{H}_n^i\right)\oplus \left(U_i\otimes\mathcal{H}_{[n]\setminus \mathcal{N}(i)}\right)$
    where $U_i$ is an arbitrarily subspace of $\mathcal{H}_{\mathcal{N}(i)\backslash \{n\}}\otimes\left(\bigoplus_{\ell\in[t]\setminus\{i\}}\mathcal{H}_n^{\ell}\right)$ with
	     $$\ddim(U_i) =  \left(r_i\ddim(\mathcal{H}_n) -\ddim(\mathcal{H}_n^{i})\right)\cdot \ddim(\mathcal{H}_{\mathcal{N}(i)\backslash \{n\}}).$$
	\item If $i\in [m]\setminus [t]$, let 
	$V_i = V^{\star}_i\otimes\mathcal{H}_{n}$.
\end{itemize}
\end{definition}

The following two lemmas show that $V_1,\cdots,V_m$ are properly defined.

\begin{lemma}
For each $i\in T_2$, 
$\ddim(U_i)$ is a nonnegative
integer.
In addition, $\ddim(V_i)$ is also a positive integer for each $i\in [m]$.
\end{lemma}

\begin{proof}
We only prove that $\ddim(U_i)$ is a nonnegative
integer for each $i\in T_2$.
Combining with \Cref{def-induce-subspace} and that $\ddim(V^{\star}_i)$ is a positive integer, 
it is immediate that $\ddim(V_i)$ is also a positive integer for each $i\in [m]$. Given $i\in T_2$, we have 
$r_i\ddim(\mathcal{H}_n) \geq  \ddim(\mathcal{H}_n^i)$.
Thus 
$$\ddim(U_i)=\left(r_i\ddim(\mathcal{H}_n) - \ddim(\mathcal{H}_n^i)\right)\cdot \ddim(\mathcal{H}_{\mathcal{N}(i)\backslash \{n\}}) \geq 0.$$
In addition, 
recall that $qr_i$, $q\sum_{\ell=1}^t r_{\ell} I([m] \mid \ell)$ and $qI([m]\mid i)$ are postive integers by \Cref{lem-lower-bound-alphai}.
Thus, by \eqref{eq-def-dj} and \eqref{eq:dim} we have 
\begin{align*}
\ddim(U_i) &= \left(r_i\ddim(\mathcal{H}_n) -\ddim(\mathcal{H}_n^{i})\right)\cdot \ddim(\mathcal{H}_{\mathcal{N}(i)\backslash \{n\}}) \\
&=\left(r_id_n -r_{i} q^2 I([m]\mid i))\right)\cdot \ddim(\mathcal{H}_{\mathcal{N}(i)\backslash \{n\}}) \\
&=  r_iq^2\cdot \ddim(\mathcal{H}_{\mathcal{N}(i)\backslash \{n\}})\cdot \left(- I([m]\mid i) + \sum_{\ell=1}^t \left(r_{\ell} I([m] \mid \ell)\right) \right)\\
&=(qr_i)\cdot \ddim(\mathcal{H}_{\mathcal{N}(i)\backslash \{n\}}) \cdot \left(-q  I([m]\mid i) + q\sum_{\ell=1}^t \left(r_{\ell} I([m] \mid \ell)\right) \right)
\end{align*}
is also an integer.
In summary, we have $\ddim(U_i)$ is a nonnegative integer.
The lemma is proved.
\end{proof}

\begin{lemma}
For each $i\in T_2$,
$$\ddim(U_i)\leq \mathcal{H}_{\mathcal{N}(i)\backslash \{n\}}\otimes\left(\bigoplus_{\ell\in[t]\setminus\{i\}}\mathcal{H}_n^{\ell}\right).$$ 
\end{lemma}
\begin{proof}
Given $i\in T_2$, we have
\begin{align*}
\ddim(U_i) &= \left(r_i\ddim(\mathcal{H}_n) -\ddim(\mathcal{H}_n^{i})\right)\cdot \ddim\left(\mathcal{H}_{\mathcal{N}(i)\setminus \{n\}}\right)\leq  \ddim\left(\mathcal{H}_{\mathcal{N}(i)\setminus \{n\}}\right)\cdot \left(\ddim(\mathcal{H}_n) -\ddim(\mathcal{H}_n^{i})\right)\\
&=\ddim\left(\mathcal{H}_{\mathcal{N}(i)\backslash \{n\}}\right)\cdot\ddim\left(\bigoplus_{\ell\in[t]\setminus\{i\}}\mathcal{H}_n^{\ell}\right) = \ddim\left(\mathcal{H}_{\mathcal{N}(i)\backslash \{n\}}\otimes\left(\bigoplus_{\ell\in[t]\setminus\{i\}}\mathcal{H}_n^{\ell}\right)\right).\qedhere
\end{align*}
\end{proof}

The following two lemmas show that $(V_1,\cdots, V_m)$ is an instance of the setting $(G_B,\vec{r},\vec{d})$.

\begin{lemma}\label{lemma-V12m-GB}
$G_B$ is an interaction graph of $V_1,\cdots,V_m$.
\end{lemma}
\begin{proof}
If $i\in T_1$, we have 
\begin{align*}
V_i =  V'_i \otimes \mathcal{H}_n^i = V_i^{\ast}\otimes \mathcal{H}_{[n]\setminus \mathcal{N}(i)}\otimes \mathcal{H}_n^i.
\end{align*}
If $i\in T_2$, we have 
\begin{align*}
V_i &= V'_i\otimes\mathcal{H}_n^i\oplus U_i\otimes\mathcal{H}_{[n]\setminus \mathcal{N}(i)} = \mathcal{H}_{[n-1]}\otimes\mathcal{H}_n^i\oplus U_i\otimes\mathcal{H}_{[n]\setminus \mathcal{N}(i)} \\&= \mathcal{H}_{[n]\setminus \mathcal{N}(i)}\otimes\left(\mathcal{H}_{\mathcal{N}(i)\setminus \{n\}}\otimes\mathcal{H}_n^i\oplus U_i\right).
\end{align*}
If $i\in [m]\setminus [t]$, we have $V_i = V_i^{\ast}\otimes\mathcal{H}_{[n]\setminus \mathcal{N}(i)}$.
In summary, we always have that 
$\Pi_{V_i}$ acts trivially on the qudits $\mathcal{H}_{[n]\setminus \mathcal{N}(i)}$ for each $i\in [m]$.
Thus we have $G_B$ is one interaction graph of $V_1,\cdots,V_m$.
\end{proof}

\begin{lemma}\label{lemma-V12m-r}
For each left vertex $i\in [m]$, we have $\mathbb{R}(V_i,\mathcal{H}_{[n]}) = r_i$.
\end{lemma}
\begin{proof}
If $i\in T_1$, we have $i\in [t]$.
Then $n\in \mathcal{N}(i)$ and $[n-1]\setminus ([n]\setminus \mathcal{N}(i)) = \mathcal{N}(i)\setminus \{n\}$.
Therefore,
\begin{align*}
\mathbb{R}\left(V_i,\mathcal{H}_{[n]}\right)
&= \mathbb{R}\left(V'_i \otimes \mathcal{H}_n^i,\mathcal{H}_{[n]}\right)
= \mathbb{R}\left(V'_i,\mathcal{H}_{[n - 1]}\right)\times \mathbb{R}\left(\mathcal{H}_n^i,\mathcal{H}_{n}\right)\\
&= \mathbb{R}\left(V_i^{\ast}\otimes \mathcal{H}_{[n]\setminus \mathcal{N}(i)},\mathcal{H}_{[n - 1]}\right)\times \mathbb{R}\left(\mathcal{H}_n^i,\mathcal{H}_{n}\right)\\
&=\mathbb{R}\left(V_i^{\ast},\mathcal{H}_{\mathcal{N}(i)\setminus \{n\}}\right)\times \mathbb{R}\left(\mathcal{H}_n^i,\mathcal{H}_{n}\right)= \frac{r_i\ddim(\mathcal{H}_n)}{\ddim(\mathcal{H}_n^i)}\cdot \frac{\ddim(\mathcal{H}_n^i)}{\ddim(\mathcal{H}_n)} = r_i.
\end{align*}
If $i\in T_2$, by that $U_i$ is a subspace of $\mathcal{H}_{\mathcal{N}(i)\backslash \{n\}}\otimes\left(\bigoplus_{\ell\in[t]\setminus\{i\}}\mathcal{H}_n^{\ell}\right)$, we have 
$\mathcal{H}_{[n-1]}\otimes\mathcal{H}_n^i$ and $U_i\otimes\mathcal{H}_{[n]\setminus \mathcal{N}(i)}$ are  orthogonal.
Thus, we have
\begin{align*}
\mathbb{R}\left(V_i,\mathcal{H}_{[n]}\right) &= \mathbb{R}\left(V'_i\otimes\mathcal{H}_n^i\oplus U_i\otimes\mathcal{H}_{[n]\setminus \mathcal{N}(i)},\mathcal{H}_{[n]}\right)
= \mathbb{R}\left(\mathcal{H}_{[n-1]}\otimes\mathcal{H}_n^i\oplus U_i\otimes\mathcal{H}_{[n]\setminus \mathcal{N}(i)},\mathcal{H}_{[n]}\right)\\
&=\frac{\ddim\left(\mathcal{H}_{[n-1]}\otimes\mathcal{H}_n^i\oplus U_i\otimes\mathcal{H}_{[n]\setminus \mathcal{N}(i)}\right)}{\ddim\left(\mathcal{H}_{[n]}\right)}
=\frac{\ddim\left(\mathcal{H}_{[n-1]}\otimes\mathcal{H}_n^i\right)+\ddim\left( U_i\otimes\mathcal{H}_{[n]\setminus \mathcal{N}(i)}\right)}{\ddim\left(\mathcal{H}_{[n]}\right)}\\
&=\frac{\ddim\left(\mathcal{H}_{[n-1]}\otimes\mathcal{H}_n^i\right)}{\ddim\left(\mathcal{H}_{[n-1]}\otimes\mathcal{H}_{n}\right)}+\frac{\ddim\left( U_i\otimes\mathcal{H}_{[n]\setminus \mathcal{N}(i)}\right)}{\ddim\left(\mathcal{H}_{\mathcal{N}(i)}\otimes \mathcal{H}_{[n]\setminus \mathcal{N}(i)}\right)}
= \frac{\ddim\left(\mathcal{H}_n^i\right)}{\ddim\left(\mathcal{H}_{n}\right)}+\frac{\ddim\left( U_i\right)}{\ddim\left(\mathcal{H}_{\mathcal{N}(i)}\right)}\\
&= \frac{\ddim\left(\mathcal{H}_n^i\right)}{\ddim\left(\mathcal{H}_{n}\right)}+\frac{\left(r_i\ddim(\mathcal{H}_n) -\ddim(\mathcal{H}_n^{i})\right)\cdot \ddim(\mathcal{H}_{\mathcal{N}(i)\backslash \{n\}})}{\ddim\left(\mathcal{H}_{\mathcal{N}(i)}\right)}\\
&=\frac{\ddim\left(\mathcal{H}_n^i\right)}{\ddim\left(\mathcal{H}_{n}\right)}+\frac{\left(r_i\ddim(\mathcal{H}_n) -\ddim(\mathcal{H}_n^{i})\right)}{\ddim\left(\mathcal{H}_{n}\right)} = r_i.
\end{align*}
If $i\in [m]\setminus [t]$,
we also have
\begin{align*}
\mathbb{R}(V_i,\mathcal{H}_{[n]}) =  \frac{\ddim\left(V'_i\otimes\mathcal{H}_{n}\right)}{\ddim(\mathcal{H}_{[n]})} = \frac{\ddim\left(V_i^{\ast}\otimes\mathcal{H}_{[n-1]\setminus \mathcal{N}(i)}\otimes \mathcal{H}_n\right)}{\ddim(\mathcal{H}_{[n]})} = \frac{\ddim\left(V_i^{\ast}\right)}{\ddim(\mathcal{H}_{\mathcal{N}(i)})}  = \mathbb{R}(V_i^{\ast},\mathcal{H}_{\mathcal{N}(i)}) = r_i.
\end{align*}
\end{proof}

By Lemmas \ref{lemma-V12m-GB} and \ref{lemma-V12m-r}, we have the following corollary.
\begin{corol}\label{corol-v1m-setting}
The instance $(V_1,\cdots, V_m)$ is of the setting $(G_B,\vec{r},\vec{d})$.
\end{corol}

The following is an easy property of $V_1,\cdots, V_m$ by \Cref{def-subspace}.

\begin{lemma}\label{lemma-vstar2v-span}
If for each left vertex $i\in[t]$,
$(V^{\star}_{j})_{j\in Y_i}$ span the whole space $\mathcal{H}_{[n-1]}$, then $V_1,\cdots, V_m$ span the whole space $\mathcal{H}_{[n]}$.
\end{lemma}
\begin{proof}
By $(V^{\star}_{j})_{j\in Y_i}$ span the whole space $\mathcal{H}_{[n-1]}$ for each left vertex $i\in[t]$,
we have $(V^{\star}_{j}\otimes \mathcal{H}_n^i)_{j\in Y_i}$
span the whole space $\mathcal{H}_{[n-1]}\otimes \mathcal{H}_n^i$.
In addition, by \Cref{def-subspace} we have 
$V^{\star}_{i}\otimes \mathcal{H}_n^i = V_i$ 
and 
$$V^{\star}_{j}\otimes \mathcal{H}_n^i \subseteq  V^{\star}_{j}\otimes \mathcal{H}_n = V_j$$ 
for each $j\in Y_i\setminus \{i\} = [m]\setminus [t]$. 
Thus, we have $V_1,\cdots, V_m$ span the whole space $\mathcal{H}_{[n]}$.
\end{proof}

\subsection{Proof of Theorem \ref{shearersboundistight:above}}
\label{sec-Proof-of-Theorem-shearersboundistight}
In this subsection, we finish the proof of \Cref{shearersboundistight:above}.

The following lemma will be used in the proof of \Cref{shearersboundistight:above}.
Given $G_B= ([m],[n],E_B)$ and $\vec{r}=(r_1,\cdots,r_m)$ where $\vec{r} \in \mathcal{I}(G_B)$, 
let $G'_B$ be the resulted graph by adding a left vertex $m+1$ where $\mathcal{N}(m+1) = [n]$ to $G_B$ and let $\vec{r}'$ be $(r_1,\cdots,r_m,I(G_B,\vec{r}))$.
With this lemma, one can verify \Cref{shearersboundistight:above} (\ref{item-b-thmshearersboundistight}) on $G_B$ and $\vec{r}$
by applying \Cref{shearersboundistight:above} (\ref{item-a-thmshearersboundistight})
to $G'_B$ and $\vec{r}'$.

\begin{lemma}\label{lem-gbprime-rprime}
Given any $G_B = ([m],[n],E_B)$, and any rational $\vec{r} = (r_1,\cdots,r_m)\in \left[0,1\right]^{m}$,
let $G_B' = ([m+1],[n],E'_B)$ where $E'_B = E_B \cup \{(m+1,1),(m+1,2),\cdots,(m+1,n)\}$, i.e., 
the left vertex $m+1$ has neighbors $\mathcal{N}(m+1) = [n]$. 
If $\vec{r} \in \mathcal{I}(G_B)$, let $\delta = I(G_B,\vec{r})$; otherwise, let $\delta = 0$.
Then for any $r_{m+1}\geq \delta$, 
$\vec{r}' \triangleq\left(r_1,\cdots,r_{m},r_{m+1}\right)\not \in \mathcal{I}(G'_B)$.
In addition, given $\vec{d}=(d_1,\cdots,d_n)$,
if
$\Pr\left[\mathbb{R}\left(\sum_{V\in \vec{U}} V\right)=1\right]=1$
for the random instance $\vec{U}$ of the setting $(G'_B,\vec{r}',\vec{d})$,
then 
$\Pr[\R(\bigplus_{V\in \vec{V}} V) \in [1 - r_{m+1},1-\delta]]=1$
for the random instance $\vec{V}$ of the setting $(G_B,\vec{r},\vec{d})$.
\end{lemma}

\begin{proof}
At first, we prove $\vec{r}' \not \in \mathcal{I}(G'_B)$.
If $\vec{r} \not\in \mathcal{I}(G_B)$,
we have $r_{m+1} \geq \sigma=0$.
Then by $\vec{r} \not\in \mathcal{I}(G_B)$ and the definition of $G'_B$, 
$\vec{r}'=(r_1,\cdots,r_m,r_{m+1}) \not \in \mathcal{I}(G'_B)$ is immediate.
If $\vec{r}\in \mathcal{I}(G_B)$, we have $r_{m+1} \geq \delta = I(G_B,\vec{r})$.
By \Cref{def: indpoly} we have
\begin{align*}
I(G'_B,\vec{r}',[m+1]) &=\sum_{T\in \mathsf{Ind}(G'_B)}(-1)^{\abs{T}}\prod_{i\in T}r_i = -r_{m+1} + \sum_{T\in \mathsf{Ind}(G_B)}(-1)^{\abs{T}}\prod_{i\in T}r_i \\&=  -r_{m+1} + I(G_B,\vec{r})\leq 0.
\end{align*}
Combining with \Cref{lemma-eqinshearersbound},
we have $\vec{r}'\not \in \mathcal{I}(G'_B)$.

In the next, we show that if
$\Pr\left[\mathbb{R}\left(\sum_{V\in \vec{U}} V\right)=1\right]=1$
for the random instance $\vec{U}$ of the setting $(G'_B,\vec{r}',\vec{d})$,
then 
$\Pr[\R(\bigplus_{V\in \vec{V}} V) \in [1 - r_{m+1},1-\delta]]=1$
for the random instance $\vec{V}$ of the setting $(G_B,\vec{r},\vec{d})$.
Then the lemma is immediate.
Given the random instance $V_1,\cdots,V_{m},V_{m+1}$ of the setting $(G'_B,\vec{r}',\vec{d})$ where
\begin{align}\label{eq-prsum}
\Pr\left[\mathbb{R}\left(\sum_{i\in [m+1]} V_i\right)=1\right]=1,
\end{align}
we have
\begin{align*}
    \mathbb{R}\left(\sum_{i\in [m]} V_i\right) \geq \mathbb{R}\left(\sum_{i\in [m+1]} V_i\right) - \mathbb{R}\left( V_{m+1}\right) = \mathbb{R}\left(\sum_{i\in [m+1]} V_i\right) - r_{m+1}.
\end{align*}
Combining with \eqref{eq-prsum},
we have 
\begin{align}\label{eq-prrsumvlowerbound}
\Pr\left[\mathbb{R}\left(\sum_{i\in [m]} V_i\right)\geq 1 - r_{m+1}\right]=1.
\end{align}
In addition, if $\vec{r} \in \mathcal{I}(G_B)$, by \Cref{thm:pnas} we have 
$\mathbb{R}\left(\sum_{i\in [m]} V_i\right)\leq 1 - I(G_B,\vec{r}) = 1 - \delta$.
If $\vec{r} \not\in \mathcal{I}(G_B)$,
we also have 
$\mathbb{R}\left(\sum_{i\in [m]} V_i\right)\leq 1 = 1 - \delta$.
Therefore, we always have 
$\mathbb{R}\left(\sum_{i\in [m]} V_i\right)\leq 1 - \delta$.
Combining with \eqref{eq-prrsumvlowerbound},
we have $\Pr[\R(\bigplus_{V\in \vec{V}} V) \in [1 - r_{m+1},1-\delta]]=1$.
Moreover, by $(V_1,\cdots,V_m,V_{m+1})$ is the random instance of the setting $(G'_B,r',\vec{d})$,
one can verify that $(V_1,V_2,\cdots,V_m)$ is the random instance of the setting $(G_B,\vec{r},\vec{d})$,
The lemma is proved.
\end{proof}

Now we can prove Theorem \ref{shearersboundistight:above},
the main idea of which has been illustrated in \Cref{sec-shearer'sboundistight}.

\begin{proof}[Proof of Theorem \ref{shearersboundistight:above}]
	\emph{(a).} 
	The proof is by induction on the number of left vertices in $G_B$. 
	For the base case that the number of left vertices in $G_B$ is no more than 1, the theorem holds trivially.
	
	For the induction step, given any integer $m\geq 2$,
	we assume that (\ref{item-a-thmshearersboundistight}) of this theorem holds for any interaction graph where the number of left vertices is no more than $m - 1$. In the following, we prove that the theorem also holds for any graph $G_B$ where the number of left vertices is $m$. 

If $\vec{r} \not\in \mathcal{I}(G_B)$, by \Cref{lemma-eqinshearersbound} 
we have there is some 
$R\subseteq[m]$ such that $I(R)\leq0$.
Let $S\subseteq R$ be a set such that $I(S)\leq 0$ and $I(T)>0$ for each $T\subsetneq S$.
In the following, we consider the two possible cases $S\subsetneq [m]$ and $S = [m]$ separately.

At first we assume $S\subsetneq [m]$.
Let $G'$ be the interaction graph obtained from $G_B$ by deleting the left vertices $[m]\setminus S$ and the edges connected to these vertices.
Let $\vec{r}_S$ be $(r_i)_{i\in S}$.
By $I(G',\vec{r}_S,S)=I(S)\leq 0$ and \Cref{lemma-eqinshearersbound},
we have $\vec{r}_S \not\in \mathcal{I}(G')$.
Let $\alpha = \prod_{i\in S}q_i$ and $\beta = \prod_{i\not\in S}q_i$.
Let $\mathcal{N}(i)$ denote $\mathcal{N}(G_B,i)$ for simplification.
For each left vertex $i\in S$,
we have $\mathcal{N}(G',i) = \mathcal{N}(i)$.
Combining with the assumption $r_i =1$ if $\mathcal{N}(i) = \emptyset$,
we have $r_i =1$ if $\mathcal{N}(G',i) = \emptyset$.
Combining with $\vec{r}_S \not\in \mathcal{I}(G')$ and the induction hypothesis,
we have there exists some $\vec{d}=(d_1,d_2,\cdots,d_n)$ where 
$d_i\leq \alpha^{4^{m}}$ for each $i\in [n]$ such that
random subspaces of the setting $(G',\vec{r}_S,\vec{d})$ span the whole space.
Thus, there exists a subspace set of the setting $(G',\vec{r}_S,\vec{d})$ spanning the whole space.
Combining with \Cref{prop:multiple} (\ref{prop:multiple-item-a}), 
we have for any set of qudits
$\mathcal{H}_1,\mathcal{H}_2,\cdots,\mathcal{H}_n$ where $\ddim(\mathcal{H}_{1},\cdots,\mathcal{H}_{n}) = \beta\vec{d}$, there exists a subspace set $(V'_i)_{i\in S}$ of $\mathcal{H}_{[n]}$ such that
$(V'_i)_{i\in S}\sim (G',\vec{r}_S,\beta \vec{d})$ and $(V'_i)_{i\in S}$ spans the whole space $\mathcal{H}_{[n]}$.
Moreover, by $\alpha = \prod_{i\in S}q_i$, $\beta = \prod_{i\not\in S}q_i$ and 
$d_j\leq \alpha^{4^{m}}$ for each $j\in [n]$,
we have 
\begin{align}\label{eq-betadi-uppperbound}
\beta d_j\leq \alpha^{4^{m}}\cdot \beta \leq \prod_{i\in [m]}q_i^{4^{m}} = q^{4^{m}}.
\end{align}

We define a subspace set $(V^{\star}_1,V^{\star}_2,\cdots,V^{\star}_m)$ of $\mathcal{H}_{[n]}$ as follows.
For each $i\in S$, let 
$V^{\star}_i = V'_{i}$.
For each $i\in [m]\setminus S$,
let $V^{\star}_i = V^{\ast}_i\otimes \mathcal{H}_{[n]\setminus \mathcal{N}(i)}$ where $V^{\ast}_i$ is an arbitrary subspace of $\mathcal{H}_{\mathcal{N}(i)}$ with 
$\mathbb{R}\left(V^{\ast}_i,\mathcal{H}_{\mathcal{N}(i)}\right) = r_i$.
First of all, we show that $\ddim(V^{\ast}_i)$ is an integer for each $i\in [m]\setminus S$. Thus $(V^{\star}_1,V^{\star}_2,\cdots,V^{\star}_m)$ is well defined.
Given $i\in [m]\setminus S$, if $r_i = 1$, then by $\mathbb{R}\left(V^{\ast}_i,\mathcal{H}_{\mathcal{N}(i)}\right) = r_i$,
we have $\ddim(V^{\ast}_i) = \ddim(\mathcal{H}_{\mathcal{N}(i)})$ is an integer.
In the following, we assume  $i\in [m]\setminus S$ and $r_i < 1$.
Combining with the assumption if $\mathcal{N}(i) = \emptyset$ then $r_i =1$,
we have $\mathcal{N}(i) \neq \emptyset$.
Assume \emph{w.l.o.g.} $j\in \mathcal{N}(i)$ for some right vertex $j\in [n]$.
By the definitions of $r_i$, $\beta$ and $\mathcal{H}_1,\mathcal{H}_2,\cdots,\mathcal{H}_n$,
we have 
\begin{equation*}
\begin{aligned}
\ddim\left(V^{\ast}_i\right) &= r_i\cdot \ddim\left(\mathcal{H}_{\mathcal{N}(i)}\right) = r_i\cdot\prod_{k\in \mathcal{N}(i)}\ddim\left(\mathcal{H}_{k}\right) =r_i\cdot\prod_{k\in \mathcal{N}(i)}\left(\beta d_k\right) = \frac{p_i}{q_i}\cdot\prod_{j\in \mathcal{N}(i)}\left(\beta d_j\right)\\
&= \frac{p_i\beta d_j}{q_i}\prod_{k\in \mathcal{N}(i)\setminus \{j\}}\left(\beta d_k\right)
=\frac{p_id_j}{q_i}\left(\prod_{\ell \not\in S}q_{\ell}\right)\prod_{k\in \mathcal{N}(i)\setminus \{j\}}\left(\beta d_k\right)
=p_id_j\left(\prod_{\ell \not\in S\cup\{i\}}q_{\ell}\right)\prod_{k\in \mathcal{N}(i)\setminus \{j\}}\left(\beta d_k\right),
\end{aligned}
\end{equation*}
where the last equality is by that $i\in [m]\setminus S$.
Therefore, by $\beta, p_{\ell}, d_{k}, q_{\ell}$ are integers for each $\ell\in [m]$ and $k\in [n]$,
we have $\ddim(V^{\ast}_i)$ is also an integer.

In the next, we show that $(V^{\star}_1,V^{\star}_2,\cdots,V^{\star}_m)$ is an instance of the setting $(G_B,\vec{r},\beta\vec{d})$ spanning the whole space $\mathcal{H}_{[n]}$.
Thus, (\ref{item-a-thmshearersboundistight}) of this theorem is immediate by \Cref{lem: span} and 
\eqref{eq-betadi-uppperbound}.
At first, we have $\ddim(\mathcal{H}_{1},\cdots,\mathcal{H}_{n}) = \beta\vec{d}$.
Meanwhile, given a left vertex $i\in [m]$, 
we have either $i\in S$ or $i\in [m]\setminus S$.
If $i\in S$, we have $V^{\star}_i = V'_{i}$.
Combining with $(V'_i)_{i\in S}\sim (G',\vec{r}_S,\beta \vec{d})$,
we have 
$$V^{\star}_i = V'_{i} = U_i\otimes \mathcal{H}_{[n]\setminus \mathcal{N}(G',i)} = U_i\otimes \mathcal{H}_{[n]\setminus \mathcal{N}(i)},$$
where $U_i$ is some subspace of $\mathcal{H}_{\mathcal{N}(i)}$ with 
$\mathbb{R}\left(U_i,\mathcal{H}_{\mathcal{N}(i)}\right) = r_i$.
Thus $\mathbb{R}\left(V^{\star}_i, \mathcal{H}_{[n]}\right) =\mathbb{R}\left( V'_{i}, \mathcal{H}_{[n]}\right) = r_i$.
Otherwise, $i\in [m]\setminus S$.
We also have $V^{\star}_i = V^{\ast}_i\otimes \mathcal{H}_{[n]\setminus \mathcal{N}(i)}$ and 
$\mathbb{R}\left(V^{\star}_i, \mathcal{H}_{[n]}\right) = \mathbb{R}\left(V^{\ast}_i,\mathcal{H}_{\mathcal{N}(i)}\right) = r_i$.
Thus, $G_B$ is an interaction graph of $(V^{\star}_1,V^{\star}_2,\cdots,V^{\star}_m)$ 
and $\mathbb{R}\left(\vvec{V}, \mathcal{H}_{[n]}\right) = \vec{r}$.
In addition, by that $(V'_i)_{i\in S}$ spans the whole space $\mathcal{H}_{[n]}$ and that $V^{\star}_i = V'_{i}$ for each $i\in S$, we have $(V^{\star}_1,V^{\star}_2,\cdots,V^{\star}_m)$ spans the whole space $\mathcal{H}_{[n]}$ immediately.
The case $S\subsetneq [m]$ is proved.

In the following we assume $S=[m]$. 
Recall that $S$ satisfies $I(S)\leq 0$ and $I(T)>0$ for each $T\subsetneq S$. Combining with $S=[m]$ and \Cref{le:property of ind by def} (\ref{item-d-le-propertyindbydef}),
we have $G_D(G_B)$ is connected. 
Also recall that $m\geq 2$,
we have there must be a right vertex in $G_B$ with at least two neighbors in the left vertices. Without loss of generality, we assume the right vertex $n$ is such a vertex and $\mathcal{N}(n) = [t]$  where $t\geq2$.
In addition, by $m\geq 2$ and $I(T)>0$ for each $T\subsetneq S = [m]$,
we have $r_i <1$ for each left vertex $i$ of $G_B$.
Otherwise, $r_i \geq 1$. 
We have $I(\{i\})= 1 - r_i \leq 0$ by \Cref{def: indpoly}, a contradiction with that $I(T)>0$ for each $T\subsetneq [m]$.
For each left vertex $i\in [m]$,
combining $r_i <1$ with the assumption that $r_i =1$ if $\mathcal{N}(i) = \emptyset$ in the theorem,
we have $\mathcal{N}(i) \neq \emptyset$.
Thus, all the assumptions in \Cref{def-induce-graph} are satisfied.
One can define the induced interaction graphs,
the induced qudits with induced dimensions, and
 the induced subspaces with the induced relative dimensions as in 
Definitions \ref{def-induce-graph}, \ref{def-induce-dimension}, \ref{def-induce-qudits} and 
\ref{def-induce-subspace},
and define the dimensions of qudits $\vec{d}=(d_1,\cdots,d_n)$ and the subspaces $(V_1,\cdots,V_m)$ as in 
Definitions \ref{def-induce-dimension} and \ref{def-subspace}.
By \Cref{corol-v1m-setting}, we have 
$(V_1,\cdots,V_m)$ is of the setting $(G_B,\vec{r},\vec{d})$.
Moreover, 
we claim that $d_i\leq q^{4^{m}}$ for each $i\in [m]$ and  
$V_1,\cdots,V_m$ spans the whole space.
Then (\ref{item-a-thmshearersboundistight}) of this theorem is immediate by \Cref{lem: span}.
In the following, we prove the claim.

At first, we prove that $d_i\leq q^{4^{m}}$ for each $i\in [m]$.
Recall the definitions of $G_i$ and $\vec{\lambda}_i$
in \Cref{def-induce-graph}.
By $I([m]) = I(S)\leq 0$ and \Cref{lemma-induction},
for each $i\in [t]$ we have $\vec{\lambda}_i \not \in \mathcal{I}(G_i)$ and
if $\mathcal{N}(G_i,j) = \emptyset$ for some left vertex $j\in Y_i$ in $G_i$, 
then $\lambda_{ij} = 1$. 
Combining with the induction hypothesis,
we have there exists some $(a_{ij})_{j\in Y_i}$
such that $a_{ij}\leq q^{4^{m-t+1}}$ for each $j\in Y_i$ and the random instance of the setting $(G_i,\vec{\lambda}_i,(a_{ij})_{j\in Y_i})$ spans the whole space.
Combining with the definition of $(d_{ij})_{j\in Y_i}$ in \Cref{def-induce-dimension}, 
we have if $\rho_i<1$,
then $d_{ij} = a_{ij} \leq q^{4^{m-t+1}}$ for each $j\in Y_i$;
otherwise $\rho_i\geq1$, we also have $d_{ij} = q \leq q^{4^{m-t+1}}$
Combining with \eqref{eq-def-dj}, we have for each $j\in [n-1]$,
\begin{align}
    d_{j} = \prod_{i\in t}d_{ij}\leq q^{t\cdot 4^{m-t+1}} \leq q^{4^{m}},
\end{align}
where the last inequality is by $t\geq 2$.
Combining with \Cref{lemma-d-positive-integer}, we have
$d_{j}\leq q^{4^{m}}$ for each $j\in [n]$.

In the following, we show that $V_1,\cdots,V_m$ spans the whole space. Then (\ref{item-a-thmshearersboundistight}) of this theorem is proved.
Recall that for each $i\in [t]$, the random instance of the setting $(G_i,\vec{\lambda}_i,(a_{ij})_{j\in Y_i})$ spans the whole space.
If $\rho_i<1$, we have $d_{ij} = a_{ij}$ by \Cref{def-induce-dimension}, then the random instance of the setting $(G_i,\vec{\lambda}_i,(d_{ij})_{j\in Y_i})$ spans the whole space.
If $\rho_i\geq 1$, we have $r_{ii} = 1$ by \eqref{eq-lambda-ii}.
Thus we also have the random instance of the setting $(G_i,\vec{\lambda}_i,(d_{ij})_{j\in Y_i})$ spans the whole space.
Combining with \eqref{eq-def-dj} and \Cref{prop:multiple} (\ref{prop:multiple-item-b}),
we have the random subspaces of the setting $(G_i,\vec{\lambda}_i,(d_{j})_{j\in Y_i})$ spans the whole space.
Combining with \Cref{corol-vprimej-randominstance},
we have $(V'_{j})_{j\in Y_i}$ as in \Cref{def-induce-subspace}, is a random instance of the setting $(G_i,\vec{\lambda}_i,(d_{j})_{j\in Y_i})$ and spans the whole space.
In other words, $\Pr[\mathbb{R}(\bigplus_{j\in Y_i} V'_{j},\mathcal{H}_{[n-1]})=1]=1$.
By the union bound, we have 
$$\Pr\left(\bigvee_{i\in [t]}\left[\mathbb{R}\left(\bigplus_{j\in Y_i} V'_{j},\mathcal{H}_{[n-1]}\right)=1\right]\right) \geq 1 - \sum_{i\in [t]}\left(1 - \Pr\left[\mathbb{R}\left(\bigplus_{j\in Y_i} V'_{j},\mathcal{H}_{[n-1]}\right)=1\right]\right) = 1.$$
Thus, we have there are some spaces $(V^{\star}_{j})_{j\in [m]}$, as in \Cref{def-induce-subspace}, satisfies that 
$\mathbb{R}(\bigplus_{j\in Y_i} V^{\star}_{j},\mathcal{H}_{[n-1]})=1$.
Combining with \Cref{lemma-vstar2v-span}, we have that
the subspaces $V_1,\cdots,V_m$ span the whole space $\mathcal{H}_{[n]}$, which finishes the proof of (\ref{item-a-thmshearersboundistight}).

\emph{(b).} Given $G_B = ([m],[n],E_B)$ and $\vec{r} = (r_1,\cdots,r_m)$ where
$\vec{r} \in \mathcal{I}(G_B)$,
by \Cref{lemma-eqinshearersbound},
we have $I(G_B,\vec{r},[m])>0$.
Let $G_B' = ([m+1],[n],E'_B)$ where $E'_B = E_B \cup \{(m+1,1),(m+1,2),\cdots,(m+1,n)\}$ and $\vec{r}'$ be $(r_1,\cdots,r_m,r_{m+1})$ where $r_{m+1} =  I(G_B,\vec{r},[m])>0$. 
By \Cref{lem-gbprime-rprime},
we have $\vec{r}'\not \in \mathcal{I}(G'_B)$.
Let $q_{m+1}$ be the minimum positive integer $k$ such that $kr_{m+1}$ is an integer.
Thus, we have 
$q_{m+1}\leq q$ if $qr_{m+1}$ is an integer.
In addition, by \Cref{def: indpoly} we have 
\begin{align*}
qr_{m+1}  = r_{m+1}\prod_{i\in [m]}q_i = 
I(G_B,\vec{r},[m])\prod_{i\in [m]}q_i
= \sum_{T\in \mathsf{Ind}(G_B)}(-1)^{\abs{T}}\prod_{i\in T}p_i\prod_{i\in [m]
\setminus T}q_i.
\end{align*}
Combining with $p_i,q_i$ are integers for each $i\in [m]$,
we have $qr_{m+1}$ is also an integer.
Thus we have 
$q_{m+1} \leq q$.
Therefore, we have 
$$\prod_{i\in [m+1]}q_i \leq q^2.$$
Recall that $\vec{r}'\not \in \mathcal{I}(G'_B)$.
Applying (\ref{item-a-thmshearersboundistight}) of this theorem to $G'_B$ and $\vec{r}'$, we have there is some 
$\vec{d}=(d_1,\cdots,d_n)$ such that 
$$d_i\leq \prod_{i\in [m+1]}q^{4^{m+1}}_i \leq q^{8+4^{m}}$$
for each $i\in [n]$
and
$\Pr\left[\mathbb{R}\left(\sum_{V\in \vec{U}} V\right)=1\right]=1$
for the random instance $\vec{U}$ of the setting $(G'_B,\vec{r}',\vec{d})$.
Combining with \Cref{lem-gbprime-rprime},
we have 
\begin{align}\label{eq-prsumvi-lower}
\Pr\left[\mathbb{R}\left(\sum_{i\in [m]} V_i\right)\in[ 1 - r_{m+1},1-I(G_B,\vec{r},[m])]\right]=1,
\end{align}
for the random instance $(V_1,V_2,\cdots,V_m)$ of the setting $(G_B,\vec{r},\vec{d})$.
Combining with $r_{m+1} = I(G_B,\vec{r},[m])$,
we have 
\begin{align*}
\Pr\left[\mathbb{R}\left(\sum_{i\in [m]} V_i\right)= 1 - I(G_B,\vec{r},[m])\right]=1,
\end{align*}
Recall that $d_i\leq q^{8+4^{m}}$
for each $i\in [n]$.
The conclusion is immediate.
\end{proof}

\subsection{Proof of Theorem \ref{Shearer'sboundistightforQLLL}}\label{sec-proofShearer'sboundistightforqLLL}
In this subsection, we prove Theorem \ref{Shearer'sboundistightforQLLL}. 
Theorem \ref{Shearer'sboundistightforQLLL} provides an interval of $\R\left(\bigplus_{V\in \vec{V}} V\right)$ for the random instance $\vec{V}$ of some given setting.
By \Cref{lem-gbprime-rprime}, it is sufficient to show that the random instance $\vec{V}'$ of another setting spans the whole space.
Given any $G_B$, $\vec{r}=(r_1,\cdots,r_m)$, $\vec{d}=(d_1,\cdots,d_n)$ and any positive integer $t$,
to prove that the random instance of  
the setting $(G_B,\vec{r},\vec{d})$
spans the whole space, 
\Cref{lemreducedimensionqudit} shows that it is sufficient to prove that 
the random instance of  
the setting $(G_B,\vec{r}',\vec{d}')$
spans the whole space for some $\vec{r}' = (r'_1,\cdots,r'_m)$ and $\vec{d}'$,
where $\vec{r}'$ is tailored from $\vec{r}$
such that $r'_i$ is a multiple of $1/t$ for each $i\in [m]$, and $tn\vec{d}'\leq \vec{d}$.
Because $\vec{r}'$ is tailored to multiples of $1/t$,
by \Cref{shearersboundistight:above} (\ref{item-a-thmshearersboundistight}) 
we have there exists some $\vec{d}'$ which is upper bounded by a function of $t$ and $m$ such that the random instance of the setting $(G_B,\vec{r}',\vec{d}')$
spans the whole space.
Thus Theorem \ref{Shearer'sboundistightforQLLL} is proved.

The following lemma is the core of the proof of Theorem \ref{Shearer'sboundistightforQLLL}.
\begin{lemma}\label{lemreducedimensionqudit}
For any interaction graph $G_B = ([m],[n],E_B)$, any positive integer $t$, any rational $\vec{r}= (r_1,\cdots,r_m)\in\left[0,1\right]^m$,
let $\vec{r}' =(r'_1,\cdots,r'_{m})$ 
where 
\begin{align}\label{eq-bound-riti}
    r'_i = \max\left\{\frac{\left\lfloor tr_i - 1\right\rfloor}{t},0\right\}
\end{align}
for each $i\in [m]$.
If there exists some $\vec{d}'$ such that 
the random instance of  
the setting $(G_B,\vec{r}',\vec{d}')$
spans the whole space,
then for any $\vec{d}\geq tn\vec{d}'$, the random instance of  
the setting $(G_B,\vec{r},\vec{d})$
spans the whole space.
\end{lemma}
\begin{proof}
By \Cref{lem: span}, to prove this lemma, it is sufficient to construct an instance $(V_1,V_2,\cdots,V_m)\sim (G_B,\vec{r},\vec{d})$ spanning the whole space 
$\mathcal{H}_{[n]}$ where $\ddim\left(\mathcal{H}_{1},\cdots,\mathcal{H}_{n}\right) = \vec{d}$.
Suppose $\vec{d}' = (d'_1,\cdots,d'_n)$ and $\vec{d} = (d_1,\cdots,d_n)$.
The construction of $(V_1,V_2,\cdots,V_m)$ is as follows.
\begin{itemize}
	\item For each $j\in [n]$, decompose $\mathcal{H}_j$ into two orthogonal subspaces $\mathcal{H}_j=\mathcal{H}_j^a\oplus\mathcal{H}_j^b$ with 
	\begin{align*}
	\dim(\mathcal{H}_j^a)=d'_j \cdot \left\lfloor \frac{d_j}{d'_j}\right\rfloor,\quad \dim(\mathcal{H}_j^b)=d_j - \dim(\mathcal{H}_j^a).
	\end{align*}
	We emphasize that $\dim(\mathcal{H}_j^b)=0$ if $d_j$ is a multiple of $d'_j$.
    Note that $\dim(\mathcal{H}_j^a)$ is a multiple of $d'_j$. 
    Thus, by \Cref{prop:multiple} (\ref{prop:multiple-item-a}) and that the random instance of the setting $(G_B,\vec{r}',\vec{d}')$ spans the whole space, we have there is an instance $\left(V^{\ast}_1\otimes \mathcal{H}_{[n]\setminus \mathcal{N}(1)}^a,\cdots,V^{\ast}_{m}\otimes \mathcal{H}_{[n]\setminus \mathcal{N}(m)}^a\right)\sim G_B$ spanning  $\mathcal{H}_{[n]}^a$ with relative dimension $\vec{r}'$ to $\mathcal{H}_{[n]}^a$.
	\item For any $i\in[m]$ where $r'_i>0$, let 
	\begin{align*}
	U_i = \bigplus_{j \in \mathcal{N}(i)} \left(\mathcal{H}_j^b\otimes\mathcal{H}_{\mathcal{N}(i)\setminus\{j\}}\right),\quad V_i=\left(V^{\ast}_i\oplus U_i\oplus V^{\star}_i\right)\otimes \mathcal{H}_{[n]\setminus \mathcal{N}(i)}
	\end{align*}
	where $V^{\star}_i$
	is an arbitrary subspace of $\mathcal{H}_{\mathcal{N}(i)}$ which is orthogonal with $V^{\ast}_i\oplus U_i$  and of dimension
	\begin{align*}
	\ddim(V^{\star}_i) = r_i \ddim\left(\mathcal{H}_{\mathcal{N}(i)}\right) - \ddim\left(V^{\ast}_i\oplus U_i\right).
	\end{align*}
	We remark that it is possible that $\dim(V^{\star}_i) = 0$.
	\item For any $i\in[m]$ where $r'_i = 0$, let 
	$V_i=V^{\star}_i\otimes \mathcal{H}_{[n]\setminus \mathcal{N}(i)}$
	where $V^{\star}_i$
	is an arbitrary subspace of $\mathcal{H}_{\mathcal{N}(i)}$ of dimension
	$r_i \ddim\left(\mathcal{H}_{\mathcal{N}(i)}\right)$.
\end{itemize}

At first, we show that $\ddim(V^{\star}_i)$ is a nonnegative integer for each $i\in [m]$.
Thus, $(V_1,V_2,\cdots,V_m)$ is properly defined.
Recall that when we talk about the instance of the setting $(G_B,\vec{r},\vec{d})$,
we always assume that the Hilbert space $\mathcal{H}_{[n]}$ with the dimension vector $\vec{d}$ can admit an instance with the interaction graph $G_B$ and the
relative dimension vector $\vec{r}$ as in \Cref{def-instance}.
Thus, we have $r_i \prod_{j\in \mathcal{N}(i)}d_j$ is an integer for each $i\in [m]$,
otherwise, some subspace in the instance must has a fractional dimension, which is a contradiction.
By $r_i \prod_{j\in \mathcal{N}(i)}d_j$ is an integer for each $i\in [m]$, we have $\ddim(V^{\star}_i)$ is also an integer.
In the next, we show that $\ddim(V^{\star}_i)$ is nonnegative for each $i\in [m]$. 
If $r'_i = 0$, this conclusion is immediate.
In the following, we assume $r'_i > 0$.
Recall that $(V^{\ast}_1\otimes \mathcal{H}^a_{[n]\setminus \mathcal{N}(1)},\cdots,V^{\ast}_m\otimes \mathcal{H}^a_{[n]\setminus \mathcal{N}(m)})\sim G_B$ has relative dimension $\vec{r}'$ to $\mathcal{H}_{[n]}^a$.
Thus, we have $V^{\ast}_i$ is a subspace of $\mathcal{H}_{\mathcal{N}(i)}^a$
with relative dimension $r'_i$ to $\mathcal{H}_{\mathcal{N}(i)}^a$.
Therefore, we have 
\begin{align}\label{eq-dim-vasti}
    \ddim(V^{\ast}_i) =r'_i\cdot\ddim(\mathcal{H}_{\mathcal{N}(i)}^a) \leq r'_i\cdot\ddim(\mathcal{H}_{\mathcal{N}(i)}).
\end{align}
Meanwhile, for each $j\in [n]$, by the definition of $\mathcal{H}_j^b$ we have $\dim(\mathcal{H}_j^b) \leq d'_j$. 
Combining with $\ddim(\mathcal{H}_{j})= d_j\geq  tn\cdot d'_j$,
we have 
$\mathbb{R}\left(\mathcal{H}_j^b,\mathcal{H}_j\right) \leq \frac{1}{tn}$.
Thus, by the definition of $U_i$ we have
\begin{align*}
&\mathbb{R}\left(U_i,\mathcal{H}_{\mathcal{N}(i)}\right)
=\bigplus_{j \in \mathcal{N}(i)}\left(\mathbb{R}\left(\mathcal{H}_j^b\otimes\mathcal{H}_{\mathcal{N}(i)\setminus\{j\}},\mathcal{H}_{\mathcal{N}(i)}\right)\right)
= \bigplus_{j \in \mathcal{N}(i)}\mathbb{R}\left(\mathcal{H}_j^b,\mathcal{H}_j\right) 
\leq \frac{n}{tn} = \frac{1}{t}.
\end{align*}
Therefore,
\begin{align}\label{eq-ref-ddimui}
&\quad\ddim\left(U_i\right) = \mathbb{R}\left(U_i,\mathcal{H}_{\mathcal{N}(i)}\right)\cdot\ddim\left(\mathcal{H}_{\mathcal{N}(i)}\right)\leq \frac{\ddim\left(\mathcal{H}_{\mathcal{N}(i)}\right)}{t}.
\end{align}
In addition, by \eqref{eq-bound-riti} we have 
$r_i\geq r'_i + \frac{1}{t}$.
Combining with \eqref{eq-dim-vasti} and \eqref{eq-ref-ddimui}, we have
\begin{align*}
&\ddim\left(V^{\ast}_i\oplus U_i\right) = \ddim(V^{\ast}_i) +\ddim\left(U_i\right) = \left(r'_i + \frac{1}{t}\right)\ddim\left(\mathcal{H}_{\mathcal{N}(i)}\right)
\leq r_i\ddim\left(\mathcal{H}_{\mathcal{N}(i)}\right).
\end{align*}
Combining with $\ddim(V^{\star}_i) = r_i \ddim\left(\mathcal{H}_{\mathcal{N}(i)}\right) - \ddim\left(V^{\ast}_i\oplus U_i\right)$,
we have $\ddim(V^{\star}_i)\geq 0.$
In summary, $\ddim(V^{\star}_i)$ is a nonnegative integer for each $i\in [m]$.

In the next, we show that $(V_1,V_2,\cdots,V_m)\sim (G_B,\vec{r},\vec{d})$. 
By the definition of $(V_1,V_2,\cdots,V_m)$,
we have $G_B$ is an interaction graph of $(V_1,V_2,\cdots,V_m)$ immediately. In addition, 
if $r'_i=0$,
we have 
$$\R(V_i,\mathcal{H}_{[n]}) = \R\left(V^{\star}_i\otimes \mathcal{H}_{[n]\setminus \mathcal{N}(i)},\mathcal{H}_{[n]}\right) = \R\left(V^{\star}_i,\mathcal{N}(i)\right) = r_i.$$
If $r'_i>0$,
by the definitions of $V_i^\ast,U_i$ and $V^{\star}_i$, we have $V_i^\ast,U_i,V^{\star}_i$
are orthogonal each other.
Thus, we have 
\begin{equation*}
\begin{aligned}
\ddim\left(V^{\ast}_i\oplus U_i\oplus V^{\star}_i\right) &= \ddim\left(V^{\ast}_i\right) +  \ddim\left( U_i\right) +  \ddim\left( V^{\star}_i\right) \\&= r_i \ddim\left(\mathcal{H}_{\mathcal{N}(i)}\right) - \ddim\left(V^{\ast}_i\oplus U_i\right) + \ddim\left( U_i\right) +  \ddim\left( V^{\star}_i\right) 
\\&= r_i \ddim\left(\mathcal{H}_{\mathcal{N}(i)}\right).
\end{aligned}
\end{equation*}
Therefore,
\begin{align*}
    \R(V_i,\mathcal{H}_{[n]}) &= \R\left(\left(V^{\ast}_i\oplus U_i\oplus V^{\star}_i\right)\otimes \mathcal{H}_{[n]\setminus \mathcal{N}(i)},\mathcal{H}_{[n]}\right) = \R\left(V^{\ast}_i\oplus U_i\oplus V^{\star}_i,\mathcal{H}_{\mathcal{N}(i)}\right) \\&= \frac{\ddim\left(V^{\ast}_i\oplus U_i\oplus V^{\star}_i\right)}{\ddim\left(\mathcal{H}_{\mathcal{N}(i)}\right)}= \frac{r_i\ddim\left(\mathcal{H}_{\mathcal{N}(i)}\right)}{\ddim\left(\mathcal{H}_{\mathcal{N}(i)}\right)}  = r_i.
\end{align*}
Thus we have $\R((V_1,V_2,\cdots,V_m),\mathcal{H}_{[n]}) = \vec{r}$.
Moreover, one can verify that $\ddim(\mathcal{H}_{1},\cdots,\mathcal{H}_{n}) = \vec{d}$.
In summary, we have $(V_1,V_2,\cdots,V_m)\sim (G_B,\vec{r},\vec{d})$.

At last, we verify that $(V_1,V_2,\cdots,V_m)$ spans the whole space $\mathcal{H}_{[n]}$.
Then this lemma is proved.
Define
$$L = \{i\in [m]\mid r'_i>0\},\quad T = \bigcup_{i\in L}\mathcal{N}(i).$$
Recall that the instance $\left(V^{\ast}_1\otimes \mathcal{H}_{[n]\setminus \mathcal{N}(1)}^a,\cdots,V^{\ast}_{m}\otimes \mathcal{H}_{[n]\setminus \mathcal{N}(m)}^a\right)\sim G_B$ has relative dimension $\vec{r}'$ to $\mathcal{H}_{[n]}^a$ and spans  $\mathcal{H}_{[n]}^a$.
Thus, we have $\ddim(V^{\ast}_i) = r'_i = 0$ for each $i\not\in L$ and
\begin{align*}
\mathcal{H}_{[n]}^a\subseteq \sum_{i\in [m]}\left(V^{\ast}_i\otimes \mathcal{H}_{[n]\setminus \mathcal{N}(i)}^a\right).
\end{align*}
Therefore we have 
\begin{align*}
\mathcal{H}_{[n]}^a\subseteq \sum_{i\in [m]}\left(V^{\ast}_i\otimes \mathcal{H}_{[n]\setminus \mathcal{N}(i)}^a\right) 
= \sum_{i\in L}\left(V^{\ast}_i\otimes \mathcal{H}^a_{[n]\setminus \mathcal{N}(i)}\right) \subseteq \sum_{i\in L}\left(V^{\ast}_i\otimes \mathcal{H}_{[n]\setminus \mathcal{N}(i)}\right).
\end{align*}
Moreover, by the definitions of $L$ and $T$,
we have for each $i\in L$, $V^{\ast}_i\otimes \mathcal{H}_{[n]\setminus \mathcal{N}(i)}$ acts trivially on the qudits $\mathcal{H}_{[n]\setminus T}$.
Thus, we have 
\begin{align}\label{eq-hna}
\mathcal{H}_{T}^a\otimes  \mathcal{H}_{[n]\setminus T}\subseteq \sum_{i\in L}\left(V^{\ast}_i\otimes \mathcal{H}_{[n]\setminus \mathcal{N}(i)}\right).
\end{align}
In addition, by the definitions of $L$, $T$ and $U_i$, we have
\begin{equation}\label{eq-sumjint-hbj}
\begin{aligned}
    &\quad\sum_{j\in T}\left(\mathcal{H}^{b}_{j}\otimes \mathcal{H}_{[n]\setminus \{j\}}\right)
    = \quad\sum_{j\in \bigcup_{i\in L}\mathcal{N}(i)}\left(\mathcal{H}^{b}_{j}\otimes \mathcal{H}_{[n]\setminus \{j\}}\right) =
    \sum_{i\in L}\sum_{j\in \mathcal{N}(i)}\left(\mathcal{H}^{b}_{j}\otimes \mathcal{H}_{[n]\setminus \{j\}}\right)
    \\& =\sum_{i\in L}\sum_{j\in \mathcal{N}(i)}\left(\mathcal{H}^{b}_{j}\otimes \mathcal{H}_{\mathcal{N}(i)\setminus \{j\}}\otimes \mathcal{H}_{[n]\setminus \mathcal{N}(i)}\right)
     =\sum_{i\in L}\left(\mathcal{H}_{[n]\setminus \mathcal{N}(i)}\otimes \sum_{j\in \mathcal{N}(i)}\left(\mathcal{H}^{b}_{j}\otimes \mathcal{H}_{\mathcal{N}(i)\setminus \{j\}} \right)\right)
     \\&=\sum_{i\in L}\left(\mathcal{H}_{[n]\setminus \mathcal{N}(i)}\otimes U_i\right).
\end{aligned}
\end{equation}
Combining with \eqref{eq-hna} and \eqref{eq-sumjint-hbj}, we have
\begin{align*}
    \mathcal{H}_{[n]} &= \bigotimes_{j\in [n]}\mathcal{H}_{j} = 
    \mathcal{H}_{[n]\setminus T} \otimes \left(\bigotimes_{j\in T}\left(\mathcal{H}^{a}_{j}\oplus \mathcal{H}^{b}_{j}\right)\right)\\
    &\subseteq \mathcal{H}_{[n]\setminus T}\otimes \mathcal{H}^{a}_{T} +  \mathcal{H}_{[n]\setminus T} \otimes \sum_{j\in T}\left(\mathcal{H}^{b}_{j}\otimes\left(\bigotimes_{k\in T\setminus \{j\}}\left(\mathcal{H}^{a}_{j}\oplus \mathcal{H}^{b}_{j}\right)\right)\right)
    \\&= \mathcal{H}_{[n]\setminus T}\otimes \mathcal{H}^{a}_{T} +\sum_{j\in T}\left(\mathcal{H}^{b}_{j}\otimes\mathcal{H}_{[n]\setminus \{j\}}\right)
    \\&
    \subseteq 
    \left(\sum_{i\in T}\left(V^{\ast}_i\otimes \mathcal{H}_{[n]\setminus \mathcal{N}(i)}\right)\right) + \left(\sum_{j\in T}\left(\mathcal{H}_{[n]\setminus \mathcal{N}(j)}\otimes U_j\right)\right)
    \\&= \sum_{i\in T}\left((V^{\ast}_i\oplus U_i)\otimes \mathcal{H}_{[n]\setminus \mathcal{N}(i)}\right)\subseteq \sum_{i\in [m]}V_i.
\end{align*}
In other words, $(V_1,V_2,\cdots,V_m)$ spans the whole space $\mathcal{H}_{[n]}$,
which finishes the proof.
\end{proof}

Now we can prove \Cref{Shearer'sboundistightforQLLL}.
\begin{proof}[Proof of Theorem \ref{Shearer'sboundistightforQLLL}]
Suppose $\vec{r} = (r_1,\cdots,r_m)$.
Let $G_B' = ([m+1],[n],E'_B)$ where $E'_B = E_B \cup \{(m+1,1),(m+1,2),\cdots,(m+1,n)\}$.
Define $\vec{r}' =(r'_1,\cdots,r'_m,r'_{m+1}) $ as follows.
Let $r'_i$ be defined as \eqref{eq-bound-riti}
for each $i\in [m]$.
Define
\begin{equation*}
\begin{aligned}
    r'_{m+1} &\triangleq
\begin{cases}
I(G_B,(r'_1,\cdots,r'_m)),  &\text{if } (r'_1,\cdots,r'_m) \in \mathcal{I}(G_B),\\
0, &\text{otherwise. }
\end{cases}
\end{aligned}
\end{equation*}
By \Cref{lemma-eqinshearersbound}, we have $ I(G_B,(r'_1,\cdots,r'_m)) \geq 0$ if $(r'_1,\cdots,r'_m) \in \mathcal{I}(G_B)$.
Thus, we always have $r'_{m+1} \geq 0$.
Define
\begin{equation*}
\begin{aligned}
    \delta &= 
\begin{cases}
I(G_B,\vec{r}), \quad \quad \quad &\text{if } \vec{r} \in \mathcal{I}(G_B),\\
0, &\text{otherwise.}
\end{cases}
\end{aligned}
\end{equation*}
Similarly to $r'_{m+1}$, we also have $\delta \geq 0$.
Let $r_{m+1} = r'_{m+1} + 1/t$.

We claim that $\delta \leq r_{m+1}\leq \delta + \epsilon$ and that 
there is some $\vec{d}' = (d'_1,\cdots,d'_n)$ where 
$d'_j\leq t^{8m\cdot 4^{m}}$
for each $j\in [n]$ such that 
the random instance of  
the setting $(G'_B,\vec{r}',\vec{d}')$
spans the whole space.
Thus, by \Cref{lemreducedimensionqudit} we have
for any 
$$\vec{d}\geq \left(n \cdot t^{1+8m4^{m}},\cdots,n \cdot t^{1+8m4^{m}} \right)\geq tn\vec{d}',$$ the random instance of  
the setting $(G'_B,(r_1,\cdots,r_{m+1}),\vec{d})$
spans the whole space.
By \Cref{lem-gbprime-rprime} and $\delta \leq r_{m+1}$, we have
$\Pr\left[\R\left(\bigplus_{V\in \vec{V}} V\right) \in [1 - r_{m+1},1-\delta]\right]=1$
for the random instance $\vec{V}$ of the setting $(G_B,\vec{r},\vec{d})$.
Combining with $r_{m+1}\leq \delta + \epsilon$,
we have 
\begin{align}\label{eq-prrsumv-largerthan-1minusdeltaepsilon}
    \Pr\left[\R\left(\bigplus_{V\in \vec{V}} V\right) \in [1 - \delta-\epsilon,1-\delta]\right]\geq \Pr\left[\R\left(\bigplus_{V\in \vec{V}} V\right) \in [1 - r_{m+1},1-\delta]\right]=1.
\end{align}
If $\vec{r}\not\in \mathcal{I}(G_B)$, we have $\delta = 0$. Then (\ref{Shearer'sboundistightforQLLL-a}) of this theorem is immediate by \eqref{eq-prrsumv-largerthan-1minusdeltaepsilon}.
If $\vec{r}\in \mathcal{I}(G_B)$, we have $\delta = I(G_B,\vec{r})$. Then (\ref{Shearer'sboundistightforQLLL-b}) of this theorem is also immediate by \eqref{eq-prrsumv-largerthan-1minusdeltaepsilon}.
In the following, we prove the claims.
Then the theorem is proved.

At first, we show the claim $\delta \leq r_{m+1}\leq \delta + \epsilon$.
Firstly, if $\vec{r} \not\in \mathcal{I}(G_B)$,
we have $\delta = 0$.
Then $r_{m+1} = r'_{m+1} + 1/t \geq 0 = \delta$.
If $\vec{r} \in \mathcal{I}(G_B)$,
we have $\delta = I(G_B,\vec{r})$.
By \eqref{eq-bound-riti} we have $r'_i\leq r_i$ for each $i\in [m]$.
Then $(r'_1,\cdots,r'_m) \leq \vec{r}$.
We have 
$$I(G_B,(r'_1,\cdots,r'_m))\geq I(G_B,\vec{r})= \delta,$$
because 
$I(G_D,\vec{p}_1)\leq I(G_D,\vec{p}_2)$ for any $G_D$ and any $\vec{p}_2 \leq \vec{p}_1$ where $\vec{p}_1\in  \mathcal{I}(G_D)$.
In addition, by $(r'_1,\cdots,r'_m) \leq \vec{r}$ and $\vec{r} \in \mathcal{I}(G_B)$,
we also have $(r'_1,\cdots,r'_m) \in \mathcal{I}(G_B)$
and then $r'_{m+1} = I(G_B,(r'_1,\cdots,r'_m))$.
Thus, we have 
$$r_{m+1} = r'_{m+1} + 1/t = I(G_B,(r'_1,\cdots,r'_m)) + 1/t \geq \delta + 1/t \geq \delta.$$
In summary, we always have $r_{m+1}\geq \delta$.
In the following, we prove $r_{m+1}\leq \delta + \epsilon$.
By $r_{m+1} = r'_{m+1} + 1/t$,
it is sufficient to show 
$r'_{m+1} \leq \delta + \epsilon - 1/t$.
If $(r'_1,\cdots,r'_m) \not\in \mathcal{I}(G_B)$,
we have $r'_{m+1} = 0\leq \delta + \epsilon - 1/t$.
In the following, we assume $(r'_1,\cdots,r'_m) \in \mathcal{I}(G_B)$.
Then $r'_{m+1} = I(G_B,(r'_1,\cdots,r'_m))$.
Suppose $r'_{m+1}>\delta + \epsilon - 1/t$ for contradiction.
We have 
\begin{align}\label{eq-rprimemplus1largeepsilon}
   r'_{m+1} = I(G_B,(r'_1,\cdots,r'_m)) > \delta + \epsilon - 1/t.
\end{align}
In addition, let $(A_1,\cdots,A_{m})\sim (G_B,(r_1,\cdots,r_m))$ be a set of events such that 
$\Pr\left(\cap_{i\in [m]}\Neg{A_i}\right)=\delta$.
By \Cref{thm:shearer1985problem},
such events $(A_1,\cdots,A_{m})$ exist.
Let $y_1,\cdots,y_m$ be independent random variables 
drawn from [0,1] uniformly where $\{y_1,\cdots,y_m\}$ are independent of $\{A_1,\cdots,A_{m}\}$.
For each $i\in [m]$,
let $A'_i$ be the event that $A_i$ happens and  $y_i\leq r'_i/r_i$.
Thus, for each $i\in [m]$ we have
\begin{equation}\label{eq-probaprimi-probaminusaprime}
    \Pr(A'_i) = \Pr(A_i)\cdot \frac{r'_i}{r_i} = r'_i, \quad \Pr(A_i\setminus A'_i) = \Pr(A_i) - \Pr(A'_i) = r_i - r'_i.
\end{equation}
In addition, one can verify that $(A'_1,\cdots,A'_{m})\sim(G_B,(r_1,\cdots,r_m))$.
Combining with $(r'_1,\cdots,r'_m) \in \mathcal{I}(G_B)$ and \Cref{thm:shearer1985problem},
we have 
\begin{equation}\label{eq-probcapnotaprimei}
    \Pr\left(\bigcap_{i\in [m]}\Neg{A'_i}\right)\geq I\left(G_B,\left(r'_1,\cdots,r'_m\right)\right).
\end{equation}
Thus, we have 
\begin{align*}
    &\quad\Pr\left(\bigcap_{i\in [m]}\Neg{A_i}\right) \\
(\text{by the inclusion–exclusion principle })\quad  &\geq \Pr\left(\bigcap_{i\in [m]}\Neg{A'_i}\right) - \Pr\left(\bigcup_{i\in [m]}\left(A_i\setminus A'_i\right)\right)\\
(\text{by the union bound }) \quad   &\geq \Pr\left(\bigcap_{i\in [m]}\Neg{A'_i}\right) 
    - \sum_{i\in [m]}\Pr\left(A_i \setminus A'_i\right) \\
(\text{by \eqref{eq-probcapnotaprimei} and  \eqref{eq-probaprimi-probaminusaprime}}) \quad     &\geq I\left(G_B,\left(r'_1,\cdots,r'_m\right)\right) - \sum_{i\in [m]}\left(r_i - r'_i\right) 
    \\
(\text{by \eqref{eq-rprimemplus1largeepsilon} and  \eqref{eq-bound-riti}}) \quad    &> \delta + \epsilon - \frac{1}{t} - \frac{2m}{t}\\
\left(\text{by $t = \left\lceil\frac{2m+1}{\epsilon}\right\rceil$}\right) \quad    &\geq \delta
\end{align*}
That is, $\Pr\left(\cap_{i\in [m]}\Neg{A_i}\right) >\delta$,
which is contradictory with $\Pr\left(\cap_{i\in [m]}\Neg{A_i}\right)=\delta$.
In summary, we always have $\delta \leq r_{m+1}\leq \delta + \epsilon$.

In the next, we show the claim that there is some $\vec{d}' = (d'_1,\cdots,d'_n)$ where $d'_j\leq t^{8m\cdot 4^{m}}$
for each $j\in [n]$ such that 
the random instance of  
the setting $(G'_B,\vec{r}',\vec{d}')$
spans the whole space.
At first, we prove $\vec{r}'\not\in \mathcal{I}(G'_B)$.
If $(r'_1,\cdots,r'_m) \in \mathcal{I}(G_B)$, we have $r'_{m+1} =I(G_B,(r'_1,\cdots,r'_m))$.
Thus, by \Cref{def: indpoly} we have 
\begin{align*}
I(G'_B,\vec{r}') &= \sum_{T\in \mathsf{Ind}(G'_B)}(-1)^{|T|}\prod_{i\in T}r'_i =  -r'_{m+1} + \sum_{T\in \mathsf{Ind}(G_B)}(-1)^{|T|}\prod_{i\in T}r'_i \\
&= -I(G_B,(r'_1,\cdots,r'_m)) + I(G_B,(r'_1,\cdots,r'_m)) = 0.
\end{align*}
Combining with \Cref{lemma-eqinshearersbound}, we have $\vec{r}'\not\in \mathcal{I}(G'_B)$.
If $(r'_1,\cdots,r'_m) \not\in \mathcal{I}(G_B)$,
we have $r'_{m+1} = 0$ and then $\vec{r}'= (r'_1,\cdots,r'_m,0)$. 
Combining with $(r'_1,\cdots,r'_m) \not\in \mathcal{I}(G_B)$ and the definition of $G'_B$,
we also have $\vec{r}'= (r'_1,\cdots,r'_m,0)\not\in \mathcal{I}(G'_B)$.

Let $q_{i}$ be the minimum positive integer $k$ such that $kr'_{i}$ is an integer for each $i\in [m+1]$. We show that
\begin{equation}\label{eq-prod-qi}
\prod_{i\in [m+1]}q_{i} \leq t^{2m}
\end{equation}
By \eqref{eq-bound-riti}, we have $q_{i}\leq t$ for each $i\in [m]$.
In the next, we show $q_{m+1} \leq t^{m}$.
Then \eqref{eq-prod-qi} is immediate.
If $(r'_1,\cdots,r'_m) \in \mathcal{I}(G_B)$, 
we have $r'_{m+1} = 
I(G_B,(r'_1,\cdots,r'_m))$.
Combining with \Cref{def: indpoly} we have 
\begin{align*}
r'_{m+1}t^{m}  = 
I(G_B,(r'_1,\cdots,r'_m))t^{m}
= \sum_{T\in \mathsf{Ind}(G_B)}(-1)^{\abs{T}}\prod_{i\in T}\max\left\{\left\lfloor tr_i -1\right\rfloor,0\right\}\prod_{i\in [m]
\setminus T}t.
\end{align*}
Combining with $t$ is an integer, we have $r'_{m+1}t^{m}$ is also an integer.
If $(r'_1,\cdots,r'_m) \not\in \mathcal{I}(G_B)$, we have $r'_{m+1} = 0$,
then $r'_{m+1}t^{m}$ is also an integer.
Thus, we always have $q_{m+1} \leq t^{m}$.
Combining with $q_{i}\leq t$ for each $i\in [m]$,
\eqref{eq-prod-qi} is proved.

If $(r'_1,\cdots,r'_{m+1}) \in (0,1]^{m+1}$, combining $\vec{r}'\not\in \mathcal{I}(G'_B)$, \eqref{eq-prod-qi} with \Cref{shearersboundistight:above} (\ref{item-a-thmshearersboundistight}),
we have the there is some $\vec{d}' = (d'_1,\cdots,d'_n)$ where 
$$d'_j\leq \prod_{i\in [m+1]}q_{i}^{4^{m+1}} \leq t^{2m\cdot 4^{(m+1)}} = t^{8m\cdot 4^{m}}$$
for each $j\in [n]$ such that 
the random instance of  
the setting $(G'_B,\vec{r}',\vec{d}')$
spans the whole space.
If $r'_i = 0$ for some $i\in [m+1]$,
by applying \Cref{shearersboundistight:above} (\ref{item-a-thmshearersboundistight})
to the subgraph of $G'_B$ induced by the left vertices $\{i\in [m+1]\mid r'_i>0\}$ and the right vertices $[n]$,
one can also verify that such $\vec{d}'$ exists.
Then the claims are proved and the theorem is immediate.
\end{proof}

\section{CLLL: Beyond Shearer's Bound}\label{sec:tightregionforclll}
In this section, we study the interior of commuting LLL. 

\begin{definition}[Commuting Interior]\label{defcommutinginterior}
The \emph{commuting interior} of an interaction bipartite graph $\GBipartiteGraph=([m],[n],E_B)$,
denoted by $\CInterval(\GBipartiteGraph)$, is the set $\{$rational $\vec{r}\in [0,1)^m$: $\RR\big(\bigplus_{\Subspace\in \SubspaceSet} \Subspace \big)<1$  for any commuting subspace set $\SubspaceSet\sim \GBipartiteGraph$ with $\RR(\SubspaceSet) = \vec{r}\}$. 
\end{definition}
The following is a basic property of the commuting interior.
\begin{prop}\label{le:monotonicity} If $\vec{r}\in\CInterval(G_B)$, then $\vec{r}'\in \CInterval(G_B)$ for any rational $\vec{r}'\leq \vec{r}$.
\end{prop}
\begin{proof} 
Without loss of generality, we assume that $\vec{r}' = (r_1-\epsilon,r_2,\cdots,r_m)$. By contradiction, suppose that there exists a commuting subspace set $\SubspaceSet'\sim \GBipartiteGraph$ with $\RR(\SubspaceSet') = \vec{r}'$ such that $\RR\big(\bigplus_{\Subspace'\in \SubspaceSet'} \Subspace' \big) = 1$. In the following, we will construct another commuting subspace set $\SubspaceSet\sim \GBipartiteGraph$ with $\RR(\SubspaceSet) = \vec{r}$ such that $\RR\big(\bigplus_{\Subspace\in \SubspaceSet} \Subspace \big) = 1$, which is contradicted to the condition $\vec{r}\in\CInterval(G_B)$. 

W.l.o.g., we assume the edge $(1,1)$ exists in $G_B$. Since $\vec{r}$ and $\vec{r}'$ are both rational vectors, $\frac{\epsilon}{1-r'_1}$ is a rational number. Suppose $\frac{\epsilon}{1-r'_1} = \frac{a}{b}$ where $a$ and $b$ are integers. Let $\mathcal{H}'_1,\cdots,\mathcal{H}'_n$ denote the qudits that $\SubspaceSet'=\{V'_1,\cdots,V'_m\}$ acts on. Define $\mathcal{H}_1 = \mathcal{H}'_1 \otimes \mathcal{H}_1^c$ where $\ddim{(\mathcal{H}_1^c)} = b$, and $\mathcal{H}_j = \mathcal{H}'_j$ for any $2\leq j\leq n$. We construct a subspace set $\SubspaceSet=\{V_1,\cdots,V_m\}$ acting on the qudits $\mathcal{H}_1,\cdots,\mathcal{H}_n$ as follows. Define $V_1 =V'_1\otimes \mathcal{H}_1^c+\mathcal{H}_{[n]}'\otimes W$ where $W$ is an arbitrary $a$-dimensional subspace of $\mathcal{H}_1^c$. For each $2\leq i\leq m$, let $V_i = V'_i\otimes \mathcal{H}_1^c $. We can easily check that $\SubspaceSet$ is commuting, $\RR(\SubspaceSet)=\vec{r}$, and $\RR\big(\bigplus_{\Subspace\in \SubspaceSet} \Subspace \big) = 1$.
\end{proof}

As shorthand, we write $\Interior(G_B)$ for $\Interior(G_D(G_B))$. According to Theorem 1 in \cite{pnas} (see also Theorem \ref{thm:pnas}), $\CInterval$ contains all rational vectors in the Shearer's bound. That is, for any rational $\vec{r}$, if $\vec{r}\in \Interior(G_B)$, then $\vec{r}\in \CInterval(G_B)$. 
Here, we particularly care about whether $\CInterval(G_B)\subseteq\Interior(G_B)$, i.e., whether Shearer's bound is tight for CLLL.
\begin{definition}[Gap]\label{defgap}
An interaction bipartite graph $\GBipartiteGraph$ is called \emph{gapless for CLLL} if $\CInterval(G_B)\subseteq\Interior(G_B)$, otherwise it is called \emph{gapful}. Similarly, we can also define \emph{gapless/gapful} for VLLL. We do not mention ``for CLLL" or ``for VLLL" if it is clear from context.
\end{definition}

The rest of Section \ref{sec:tightregionforclll} is organized as follows. Section \ref{CLLLeualVLLL} shows that CLLL equals VLLL on a large class of bipartite graphs. Section \ref{sec:gapdecision} provides a sufficient and necessary condition for gap existence. Section \ref{sec:rules} develops a set of reduction rules for inferring gap existence of a bipartite graph from known ones. Finally, based on these results, Section \ref{sec:gapexistence} provides an almost complete characterization of gapless/gapful bipartite graphs.


\subsection{Solitary Qudits Are Classical}\label{CLLLeualVLLL}
The main result of this subsection is Theorem \ref{thm:solitaryisclassical}, which says that solitary qudits can be restricted to be classical without changing the interior. As a corollary, for $G_B$ where all right vertices are solitary, the interior of CLLL equals that of VLLL (Theorem \ref{thm:2-loc}).
\begin{definition}[Solitary Qudits]
Given an interaction bipartite graph $G_B=([m],[n],E_B)$, we say a right vertex $j\in[n]$ is \emph{solitary} if for any $i_1,i_2\in \mathcal{N}(j)$ we have $\mathcal{N}(i_1)\cap\mathcal{N}(i_2)=\{j\}$.
\end{definition}


\begin{definition}[Classical Qudits]
Let $\mathcal{H}$ be a qudit and $\{|\ell \rangle:\ell\in [d]\}$ be its computational basis. $\mathcal{H}$ is said to be classical if every Hamiltonian $H$ commutes with $|\ell\rangle\langle\ell|$ for any $\ell\in[d]$. 
\end{definition}

\begin{thm}\label{thm:solitaryisclassical}
	Given an interaction bipartite graph $G_B=([m],[n],E_B)$, if there exists a commuting subspace set $\SubspaceSet\sim G_B$ spanning the whole space, then there exists another commuting subspace set $\SubspaceSet'\sim G_B$ with $\RR(\SubspaceSet') = \RR(\SubspaceSet)$ spanning the whole space and satisfying that for each solitary $j\in[n]$, $\mathcal{H}_j$ is classical.
\end{thm}

The idea of Theorem \ref{thm:solitaryisclassical} is to dissect the structure of commuting local Hamiltonians by using Bravyi and Vyalyi's Structure Lemma~\cite{bravyi2005commutative}.
\begin{lem}[Structure Lemma, adapted from~\cite{bravyi2005commutative}]\label{lem:structure}
	Suppose $\mathcal{X}$, $\mathcal{Y}$, $\mathcal{Z}$ are complex Euclidean spaces, $\Pi_V$ and $\Pi_W$ are projection operators acting on $\mathcal{X}\otimes \mathcal{Y}$ and $\mathcal{Y}\otimes \mathcal{Z}$ respectively. If $\Pi_{V}$ and $\Pi_{W}$ commute, then $\mathcal{Y}$ can be decomposed to some orthogonal subspaces $\mathcal{Y}=\bigoplus_\ell\mathcal{Y}_\ell=\bigoplus_\ell \mathcal{Y}_{\ell 1}\otimes\mathcal{Y}_{\ell 2}$ such that for any $\ell$:
	\begin{enumerate}
		\item $\Pi_V$ and $\Pi_W$ preserve $\mathcal{Y}_\ell$;
		\item Restricted to $\mathcal{Y}_\ell$, $\Pi_{V}$ and $\Pi_{W}$ act non-trivially only on $\mathcal{Y}_{\ell1}$ and $\mathcal{Y}_{\ell2}$ respectively.
	\end{enumerate}
	In other words, $V$ can be decomposed as $V=\bigoplus_{\ell} V_{\ell}\otimes\mathcal{Y}_{\ell 2}$ where $V_{\ell}\subseteq \mathcal{X}\otimes\mathcal{Y}_{\ell 1}$, and $W$ can be decomposed as $W=\bigoplus_{\ell} W_{\ell}\otimes\mathcal{Y}_{\ell 1}$ where $W_{\ell}\subseteq \mathcal{Y}_{\ell 2}\otimes\mathcal{Z}$.
\end{lem}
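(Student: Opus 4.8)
The plan is to deduce the lemma from the classical structure theory of finite-dimensional $*$-algebras. The object to study is not $\Pi_V$ or $\Pi_W$ directly, but two algebras of operators on the shared factor $\mathcal{Y}$ obtained by ``tracing out'' $\mathcal{X}$ and $\mathcal{Z}$ respectively; the commutation hypothesis forces these two algebras to commute, and the block decomposition of $\mathcal{Y}$ promised by the lemma is precisely the one attached to such a pair of mutually commuting $*$-algebras.

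Concretely, first I would fix orthonormal bases $\{|x\rangle\}$ of $\mathcal{X}$ and $\{|z\rangle\}$ of $\mathcal{Z}$ and set, as operators on $\mathcal{Y}$,
$$a_{x,x'}:=(\langle x|\otimes I_{\mathcal{Y}})\,\Pi_V\,(|x'\rangle\otimes I_{\mathcal{Y}}),\qquad b_{z,z'}:=(I_{\mathcal{Y}}\otimes\langle z|)\,\Pi_W\,(I_{\mathcal{Y}}\otimes|z'\rangle),$$
so that $\Pi_V=\sum_{x,x'}|x\rangle\langle x'|\otimes a_{x,x'}$ and $\Pi_W=\sum_{z,z'}b_{z,z'}\otimes|z\rangle\langle z'|$. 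Let $\mathcal{A}\subseteq\mathcal{L}(\mathcal{Y})$ be the unital $*$-subalgebra generated by $\{a_{x,x'}\}$, and $\mathcal{B}$ the unital $*$-subalgebra generated by $\{b_{z,z'}\}$; these are $*$-closed because Hermiticity of $\Pi_V$ and $\Pi_W$ gives $a_{x,x'}^{\dagger}=a_{x',x}$ and $b_{z,z'}^{\dagger}=b_{z',z}$. The first step is to check $[\mathcal{A},\mathcal{B}]=0$. Working inside $\mathcal{L}(\mathcal{X}\otimes\mathcal{Y}\otimes\mathcal{Z})$, a short tensor-bookkeeping computation shows that $a_{x,x'}b_{z,z'}$ and $b_{z,z'}a_{x,x'}$ are obtained by the same sandwiching $(\langle x|\otimes I_{\mathcal{Y}}\otimes\langle z|)(\cdot)(|x'\rangle\otimes I_{\mathcal{Y}}\otimes|z'\rangle)$ applied to $(\Pi_V\otimes I_{\mathcal{Z}})(I_{\mathcal{X}}\otimes\Pi_W)$ and to $(I_{\mathcal{X}}\otimes\Pi_W)(\Pi_V\otimes I_{\mathcal{Z}})$ respectively; since $[\Pi_V,\Pi_W]=0$ means exactly that $\Pi_V\otimes I_{\mathcal{Z}}$ and $I_{\mathcal{X}}\otimes\Pi_W$ commute, the generators of $\mathcal{A}$ commute with those of $\mathcal{B}$, hence $\mathcal{B}\subseteq\mathcal{A}'$.

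Next I would invoke the structure theorem for finite-dimensional $C^*$-algebras together with the double commutant theorem: a unital $*$-subalgebra $\mathcal{A}\subseteq\mathcal{L}(\mathcal{Y})$ induces an orthogonal decomposition into isotypic components $\mathcal{Y}=\bigoplus_i\mathcal{Y}_i$ with $\mathcal{Y}_i=\mathcal{Y}_{i1}\otimes\mathcal{Y}_{i2}$, each $\mathcal{Y}_i$ invariant under both $\mathcal{A}$ and $\mathcal{A}'$, and on the block $\mathcal{Y}_i$ one has $\mathcal{A}=\mathcal{L}(\mathcal{Y}_{i1})\otimes I_{\mathcal{Y}_{i2}}$ and $\mathcal{A}'=I_{\mathcal{Y}_{i1}}\otimes\mathcal{L}(\mathcal{Y}_{i2})$. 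This is the decomposition claimed in the lemma. For claim (1): each $a_{x,x'}\in\mathcal{A}$ preserves every $\mathcal{Y}_i$, so $\Pi_V=\sum_{x,x'}|x\rangle\langle x'|\otimes a_{x,x'}$ preserves $\mathcal{X}\otimes\mathcal{Y}_i$ and therefore, as it acts trivially on $\mathcal{Z}$, the subspace $\mathcal{X}\otimes\mathcal{Y}_i\otimes\mathcal{Z}$; symmetrically, each $b_{z,z'}\in\mathcal{B}\subseteq\mathcal{A}'$ preserves $\mathcal{Y}_i$, so $\Pi_W$ preserves $\mathcal{Y}_i\otimes\mathcal{Z}$. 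For claim (2): restricted to $\mathcal{Y}_i$ each $a_{x,x'}$ has the form $\tilde a^{(i)}_{x,x'}\otimes I_{\mathcal{Y}_{i2}}$ with $\tilde a^{(i)}_{x,x'}\in\mathcal{L}(\mathcal{Y}_{i1})$, so $\Pi_V$ restricted to $\mathcal{X}\otimes\mathcal{Y}_i$ equals $\widehat\Pi^{(i)}_V\otimes I_{\mathcal{Y}_{i2}}$, where $\widehat\Pi^{(i)}_V:=\sum_{x,x'}|x\rangle\langle x'|\otimes\tilde a^{(i)}_{x,x'}$ is a projector on $\mathcal{X}\otimes\mathcal{Y}_{i1}$; taking $V|_{\mathcal{Y}_{i1}}$ to be its range yields $V=\bigoplus_i V|_{\mathcal{Y}_{i1}}\otimes\mathcal{Y}_{i2}$, and the same argument using $\mathcal{B}\subseteq\mathcal{A}'$ (which acts as $I_{\mathcal{Y}_{i1}}\otimes(\cdot)$ on $\mathcal{Y}_i$) gives $W=\bigoplus_i W|_{\mathcal{Y}_{i2}}\otimes\mathcal{Y}_{i1}$.

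The main obstacle is really just the structure theorem for finite-dimensional $*$-algebras in its commutant form (Wedderburn--Artin over $\mathbb{C}$ plus von Neumann's double commutant theorem, the latter automatic in finite dimensions); this is entirely classical, so I would cite it rather than reprove it, which is also why the lemma is billed as ``adapted from'' Bravyi--Vyalyi. The only genuinely hands-on part is the commutation check $[\mathcal{A},\mathcal{B}]=0$: one must keep the three tensor factors straight, note that commutation of the generators $a_{x,x'},b_{z,z'}$ immediately propagates to the algebras they generate, and observe that $*$-closure is free from the Hermiticity of the projectors. A minor point to get right is to use the decomposition of $\mathcal{Y}$ into \emph{isotypic} components (grouping isomorphic irreducibles together), so that both $\mathcal{A}$ and $\mathcal{A}'$ genuinely preserve each block $\mathcal{Y}_i$.
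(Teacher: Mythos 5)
Your proposal is correct. The paper does not prove this lemma at all --- it imports it verbatim from Bravyi and Vyalyi's work, which is why it is labelled ``adapted from''; your reconstruction via the induced algebras $\mathcal{A}=\langle a_{x,x'}\rangle$, $\mathcal{B}=\langle b_{z,z'}\rangle$ on the shared qudit, the observation that $[\Pi_V,\Pi_W]=0$ forces $\mathcal{B}\subseteq\mathcal{A}'$, and the isotypic (Wedderburn) decomposition of $\mathcal{Y}$ with respect to $\mathcal{A}$ is exactly the standard argument in that reference, with the one genuinely delicate point (using isotypic components so that $\mathcal{A}'$ also preserves each block) correctly identified.
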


The following lemma will be also used in the proof of Theorem \ref{thm:solitaryisclassical}.
\begin{lem}\label{lem:whethercontainproduct}
Suppose $\mathcal{X}_1$, $\mathcal{Y}_1$, $\mathcal{X}_2$, and $\mathcal{Y}_2$ are complex Euclidean spaces. $V_1$ is a subspace of $\mathcal{X}_1\otimes \mathcal{Y}_1$ satisfying that $V_1\not\supseteq W_1\otimes \mathcal{Y}_1$ for any nonempty subspace $W_1\subseteq \mathcal{X}_1$. $V_2$ is a subspace of $\mathcal{X}_2\otimes \mathcal{Y}_2$ satisfying that $V_2\not\supseteq W_2\otimes \mathcal{Y}_2$ for any nonempty subspace $W_2\subseteq \mathcal{X}_2$. Then  $V_1\otimes \mathcal{X}_2\otimes \mathcal{Y}_2+V_2\otimes \mathcal{X}_1\otimes \mathcal{Y}_1\not\supseteq W\otimes \mathcal{Y}_1\otimes \mathcal{Y}_2$ for any nonempty subspace $W\subseteq\mathcal{X}_1\otimes\mathcal{X}_2$.
\end{lem}
\begin{proof}
We first introduce some notations. For a vector $|v\rangle \in \mathcal{X}\otimes\mathcal{Y}$, we define $S_{\mathcal{X}}(|v\rangle)$ as the span of its Schmidt bases for $\mathcal{X}$. In other words, if $|v\rangle=\sum_{i}|i_x\rangle\otimes |i_y\rangle$ is the Schmidt decomposition, then $S_{\mathcal{X}}(|v\rangle)=\text{Span}(|1_x\rangle,|2_x\rangle,\cdots)$. For a subspace $V\subseteq\mathcal{X}\otimes\mathcal{Y}$, we define $S_{\mathcal{X}}(V):=\text{Span}(S_{\mathcal{X}}(|v\rangle):|v\rangle\in V)$ as the span of $S_{\mathcal{X}}(|v\rangle)$ over all $|v\rangle\in V$.
\begin{claim}
$V\not\supseteq W\otimes \mathcal{Y}$ for any nonempty subspace $W\subseteq \mathcal{X}$ if and only if $S_{\mathcal{X}}(V^\perp)=\mathcal{X}$.
\end{claim}
\begin{proof}
$\Longleftarrow:$ Suppose that $V\supseteq W\otimes \mathcal{Y}$ for some nonempty $W\subseteq \mathcal{X}$. Then $V^\perp\subseteq W^{\perp} \otimes \mathcal{Y}$ and $S_{\mathcal{X}}(|v\rangle)\subseteq W^\perp$ for any $|v\rangle\in V^\perp$. So $S_{\mathcal{X}}(V^\perp)\subseteq W^\perp \subsetneq \mathcal{X}$.

$\Longrightarrow:$ Suppose $S_{\mathcal{X}}(V^\perp)\subseteq W^\perp$ for some nonempty $W\subseteq \mathcal{X}$. Then $S_{\mathcal{X}}(|v\rangle)\subseteq W^\perp$ for any $|v\rangle\in V^\perp$. Obviously, we have $V^\perp\subseteq W^{\perp} \otimes \mathcal{Y}$, which implies that  $V\supseteq W\otimes \mathcal{Y}$.
 \end{proof}

Note that $(V_1\otimes \mathcal{X}_2\otimes \mathcal{Y}_2+V_2\otimes \mathcal{X}_1\otimes \mathcal{Y}_1)^\perp=V_1^\perp\otimes V_2^\perp$. Then it is easy to check that $S_{\mathcal{X}_1\otimes\mathcal{X}_2}(V_1^\perp\otimes V_2^\perp)=S_{\mathcal{X}_1}(V_1^\perp)\otimes S_{\mathcal{X}_2}(V_2^\perp)$, which is $\mathcal{X}_1\otimes\mathcal{X}_2$ according to the above claim. Finally, we can conclude the proof by applying the above claim again.
\end{proof}

\begin{proof}[Proof of Theorem \ref{thm:solitaryisclassical}]
Fix an arbitrary solitary $j\in[n]$ and w.l.o.g. assume $\mathcal{N}(j) = [k]$. Let $\mathcal{H}_1,\cdots,\mathcal{H}_n$ denote the qudits that $\SubspaceSet=\{V_1,\cdots,V_m\}$ acts on. By applying Lemma \ref{lem:structure} iteratively, 
we decompose $\mathcal{H}_j$ to some orthogonal subspaces $\mathcal{H}_j=\bigoplus_\ell\mathcal{H}_{j\ell}=\bigoplus_\ell \mathcal{H}_{j\ell 1}\otimes\mathcal{H}_{j\ell 2}\otimes\cdots\otimes\mathcal{H}_{j\ell k}$ such that: for each $i\in[k]$, $V_i$ can be decomposed as 
$$V_i=\bigoplus_\ell V_{i\ell}=\bigoplus_{\ell} V^{\lloc}_{i\ell}\otimes\left(\bigotimes_{i'\neq i}\mathcal{H}_{j\ell i'}\right)\otimes \mathcal{H}_{[n]\setminus \mathcal{N}(i)}$$
where $V^{\lloc}_{i\ell}\subseteq \mathcal{H}_{\mathcal{N}(i)\setminus\{j\}}\otimes\mathcal{H}_{j\ell i}$. Furthermore, we can always decompose $V_{i\ell}^{\lloc}$ to two orthogonal subspaces $V_{i\ell}^{\lloc}=V_{i\ell,1}^{\lloc}\oplus V_{i\ell,2}^{\lloc}$ such that
\begin{itemize}
\item[(i)] $V_{i\ell,1}^{\lloc}=W \otimes\mathcal{H}_{j\ell i} \text{ for some nonempty subspace $W \subseteq \mathcal{H}_{\mathcal{N}(i)\setminus\{j\}}$}$, and
\item[(ii)] $V_{i\ell,2}^{\lloc}\not\supseteq W\otimes\mathcal{H}_{j\ell i}$ for any nonempty subspace $W\subseteq \mathcal{H}_{\mathcal{N}(i)\setminus\{j\}}$.
\end{itemize}  
We define
\begin{align*}
V_{i\ell,1}:=V^{\lloc}_{i\ell,1}\otimes\left(\bigotimes_{i'\neq i}\mathcal{H}_{j\ell i'}\right)\otimes \mathcal{H}_{[n]\setminus \mathcal{N}(i)} \quad
\mbox{ and }\quad  
V_{i\ell,2}:=V_{i\ell,2}^{\lloc}\otimes\left(\bigotimes_{i'\neq i}\mathcal{H}_{j\ell i'}\right)\otimes \mathcal{H}_{[n]\setminus \mathcal{N}(i)}.
\end{align*}

We construct a subspace set $\SubspaceSet'=\{V_1',\cdots,V_m'\}\sim G_B$ as follows. For each $i\in[k]$, define $V_i'=\sum_{\ell} V_{i\ell,1}$. For each $i\notin [k]$, define $V_i'=V_i$. Note that restricted to $\mathcal{H}_{j\ell}$, $V_{i\ell,1}$ acts non-trivially only on the qudit $\mathcal{H}_{\mathcal{N}(i)\setminus\{j\}}$. Thus $\SubspaceSet'$ uses $\mathcal{H}_{j}$ as a classical variable (rotate the computational basis if needed).

\begin{claim} $V_{i_1}'$ and $V_{i_2}'$ commute for any $i_1, i_2\in[m]$.



\end{claim}
\begin{proof} We only need to prove the claim in the following two cases.

{\it Case I: $i_1\neq i_2\in [k]$}. It suffices to show that $V_{i_1\ell_1,1}$ and $V_{i_2\ell_2,1}$ commute for any $\ell_1$ and $\ell_2$. If $\ell_1\neq \ell_2$, then $V_{i_1\ell_1,1}\subseteq \mathcal{H}_{j\ell_1}\otimes \mathcal{H}_{[n]\setminus\{j\}}$ and $V_{i_2\ell_2,1}\subseteq \mathcal{H}_{j\ell_2}\otimes \mathcal{H}_{[n]\setminus\{j\}}$ are orthogonal and thus commute. If $\ell_1=\ell_2=\ell$, then restricted to $\mathcal{H}_{j\ell}$, $V_{i_1\ell,1}$ and $V_{i_2\ell,1}$ act non-trivially on $\mathcal{H}_{\mathcal{N}(i_1)\setminus\{j\}}$ and $\mathcal{H}_{\mathcal{N}(i_2)\setminus\{j\}}$ respectively. Noticing that $\left(\mathcal{N}(i_1)\setminus\{j\}\right)\cap\left(\mathcal{N}(i_1)\setminus\{j\}\right)=\emptyset$ since $j$ is solitary, we have that $V_{i_1\ell,1}$ and $V_{i_2\ell,1}$ commute. 

{\it Case II: $i_1\in[k]$ and $i_2\notin[k]$}. Let $\Pi_{\ell}$, $\Pi_{i_1\ell}$, and $\Pi_{i_1\ell,1}$ denote the projectors on $\mathcal{H}_{j\ell}\otimes\mathcal{H}_{[n]\setminus\{j\}}$, $V_{i_1\ell}$, and $V_{i_1\ell,1}$ respectively. Noting that $ \Pi_{i_1\ell}=\Pi_{V_{i_1}}\cdot\Pi_{\ell}$ and $\Pi_{\ell}$ acts non-trivially only on the qudit $\mathcal{H}_j$, we have
$$\Pi_{i_1\ell}\cdot \Pi_{V_{i_2}}=\Pi_{V_{i_1}}\cdot\Pi_{\ell}\cdot \Pi_{V_{i_2}}=\Pi_{V_{i_1}}\cdot \Pi_{V_{i_2}}\cdot \Pi_{\ell}=\Pi_{V_{i_2}}\cdot\Pi_{V_{i_1}}\cdot \Pi_{\ell}=\Pi_{V_{i_2}}\cdot\Pi_{i_1\ell},$$
i.e., $\Pi_{i_1\ell}$ and $\Pi_{V_{i_2}}$ commute.
Let $\{|b\rangle\}_b$ denote the computational basis of $\mathcal{H}_{j\ell i_1}$. A key observation is that $\Pi_{i_1\ell,1}$ is the product of $\langle b| \Pi_{i_1\ell}|b\rangle \otimes \mathsf{id}$ over all $|b\rangle$'s, where $\mathsf{id}$ denotes the identity operator on $\mathcal{H}_{j\ell i_1}$. Since $\Pi_{i_1\ell}$ and $\Pi_{V_{i_2}}$ commute, every $\langle b| \Pi_{i_1\ell}|b\rangle \otimes \mathsf{id}$ commutes with $\Pi_{V_{i_2}}$. Then $\Pi_{i_1\ell,1}$ also commutes with $\Pi_{V_{i_2}}$. Thus $\Pi_{V_{i_1}'}=\sum_{\ell} \Pi_{i_1\ell,1}$ and $\Pi_{V_{i_2}'}=\Pi_{V_{i_2}}$ commute.
\end{proof}
\begin{claim}
$\SubspaceSet'$ spans the whole space.
\end{claim}
\begin{proof}
Let $W_{i\ell}$ be the subspace of $\mathcal{H}_{\mathcal{N}(i)\setminus\{j\}}$ such that $V_{i\ell,1}^{\lloc}=W_{i\ell}\otimes\mathcal{H}_{j\ell i}$.
For simplicity of notations, we let $$U_\ell=\sum_{i\in[k]}W_{i\ell} \otimes \mathcal{H}_{[n]\setminus \mathcal{N}(i)}+\sum_{i\notin [k]} V_i^{\lloc}\otimes \mathcal{H}_{[n]\setminus(\mathcal{N}(i)\cup\{j\})},$$
Note that $\sum_{i=1}^m V'_i=\bigoplus_{\ell} U_\ell\otimes \mathcal{H}_{j\ell}$ and $\sum_{i=1}^m V_i =\bigoplus_{\ell}\left(U_\ell\otimes \mathcal{H}_{j\ell}+\sum_{i=1}^k V_{i\ell,2}\right)$. 

By contradiction, we assume that $\SubspaceSet'$ does not span the whole space, i.e., there exists some $\ell$ such that $U_\ell\subsetneq \mathcal{H}_{[n]\setminus \{j\}}$. In the next paragraph, we will show that $U_\ell\otimes \mathcal{H}_{j\ell}$ and $\sum_{i=1}^k V_{i\ell,2}$ commute. 
Then, because $U_\ell\otimes \mathcal{H}_{j\ell}+\sum_{i=1}^k V_{i\ell,2}=\mathcal{H}_{[n]\setminus\{j\}}\otimes\mathcal{H}_{j\ell}$, we have $\sum_{i=1}^k V_{i\ell,2}\supseteq U_\ell^\perp\otimes \mathcal{H}_{[n]\setminus \{j\}}$. However, as $\{V_{i\ell,2}\}_{i=1}^k$ act non-trivially on disjoint qudits, by applying Lemma \ref{lem:whethercontainproduct} iteratively, we have that $\sum_{i=1}^k V_{i\ell,2}\not\supseteq U\otimes \mathcal{H}_{j\ell}$ for any nonempty subspace $U\subseteq \mathcal{H}_{[n]\setminus\{j\}}$. A contradiction.

To show that $U_\ell\otimes \mathcal{H}_{j\ell}$ and $\sum_{i=1}^k V_{i\ell,2}$ commute, it suffices to show that for any $i\in[k]$ and $i'\in[m]$,
\begin{itemize}
    \item[(a)] $W_{i'\ell}\otimes \mathcal{H}_{[n]\setminus \mathcal{N}(i')}$ commute with $V_{i\ell,2}$ if $i'\in[k]$; and 
    \item[(b)] $V_{i'}^{\lloc}\otimes \mathcal{H}_{[n]\setminus(\mathcal{N}(i')\cup\{j\})}$ commute with $V_{i\ell,2}$ if $k+1\leq i'\leq m$.
\end{itemize}
Case (a) is obvious because restricted to $\mathcal{H}_{j\ell}$, $W_{i'\ell} \otimes \mathcal{H}_{[n]\setminus \mathcal{N}(i')}$ and $V_{i\ell,2}$ act non-trivially on disjoint set of qudits, namely $\mathcal{H}_{\mathcal{N}(i')\setminus \{j\}}\otimes \mathcal{H}_{j\ell i'}$ and $\mathcal{H}_{\mathcal{N}(i)\setminus \{j\}}\otimes \mathcal{H}_{j\ell i}$ respectively. For Case (b), the proof of the previous claim presents that both $V_{i\ell}$ and $V_{i\ell,1}$ commute with $V_{i'}$ for any $k\leq i'\leq m$. So $V_{i\ell,2}$ also commutes with $V_{i'}$, as $V_{i\ell,2}$ is the orthogonal complement of $V_{i\ell,1}$ in $V_{i\ell}$. Let $\Pi_{\ell}$ denote the projector on $\mathcal{H}_{j\ell}\otimes \mathcal{H}_{[n]\setminus\{j\}}$, then the projector projecting on $V_{i'}^{\lloc}\otimes \mathcal{H}_{[n]\setminus(\mathcal{N}(i')\cup\{j\})}$ is $\Pi_{V_{i'}}\Pi_\ell$. Now it is obvious that $V_{i'}^{\lloc}\otimes \mathcal{H}_{[n]\setminus(\mathcal{N}(i')\cup\{j\})}$ and $V_{i\ell,2}$ commute by noting that $\Pi_\ell$ and $V_{i\ell,2}$ commute. 
\end{proof}

Note that $\RR(\SubspaceSet')$ may be smaller than $\RR(\SubspaceSet)$. We can easily address this problem while keeping $\mathcal{H}_j$ still classical by playing the trick used in the proof of Proposition \ref{le:monotonicity}. 
Finally, we apply the above procedure to each solitary qudits one by one in an arbitrary order. It is straightforward to verify that a classical qudit keeps classical when applying the above procedure to other solitary qudits. Finally, all solitary qudits will be classical. 
\end{proof}

The following theorem is an immediate corollary of Theorem \ref{thm:solitaryisclassical}.
\begin{thm}\label{thm:2-loc} For any interaction bipartite graph $G_B$ where all the right vertices are solitary, $\CInterval(G_B)$ is the set consisting of all rational vectors in $\VInterval(G_B)$.
\end{thm}
In particular, if $G_B$ is a cycle of length at least 6, i.e., $G_B$ is isomorphic to a bipartite graph $([\ell],[\ell],E_\ell)$ where $E_\ell=\{(i,i),(i,i+1):i\in[\ell-1]\}\cup\{(\ell,\ell),(\ell,1)\}$ and $\ell\geq 3$, then $G_B$ is gapful for VLLL \cite{he2017variable}. By Theorem \ref{thm:2-loc}, $G_B$ is also gapful for CLLL.


\subsection{A Sufficient and Necessary Condition for Gap Decision}\label{sec:gapdecision}

The main result of this subsection is Theorem \ref{Conj:GapGeom}, a sufficient and necessary condition for deciding whether Shearer's bound is tight for CLLL on a given interaction bipartite graph. Theorem \ref{Conj:GapGeom} enables to decide whether a gap existence without computing the Shearer's bound. In particular, Theorem \ref{Conj:GapGeom} will be used in the proof of Theorem \ref{thm:reductionrules}, which is about reduction rules.

\begin{definition}[Quantum Exclusiveness]
A subspace set $\SubspaceSet=\{V_1,\cdots,V_m\}$ is called \emph{exclusive} with respect to an interaction bipartite graph $\GBipartiteGraph$, if $\SubspaceSet \sim \GBipartiteGraph$ and $\Subspace_{i}\perp \Subspace_{i'}$ for any edge $(i,i')$ in the base graph. We will omit ``with respect to $\GBipartiteGraph$" if it is clear from the context. Note that an exclusive subspace set must be commuting.
\end{definition}

\begin{thm}\label{Conj:GapGeom}
Given a connected interaction bipartite graph $\GBipartiteGraph$, the following two conditions are equivalent:
\begin{itemize}
\item[(a)] For any rational $\vec{r}\in\CInterval(\GBipartiteGraph)$, there is an exclusive subspace set with interaction bipartite graph $\GBipartiteGraph$ and relative dimension vector $\vec{r}$.
\item[(b)] $\GBipartiteGraph$ is gapless for CLLL.
\end{itemize}
\end{thm}

Theorem \ref{Conj:GapGeom} is in fact a leveraging of Theorem 5 in \cite{he2017variable}, which is for VLLL, to CLLL. The following two lemmas, Lemma \ref{le:exclusionisworst} and \ref{le:WorstPlacement}, about exclusive classical event sets will be used.

\begin{definition}[Classical exclusiveness]
A classical event set $\EventSet$ is said to be exclusive with respect to a graph $G_D$, if $G_D$ is a dependency graph of $\EventSet$ and $\Pr(A_i\cap A_{i'})=0$ for any edge $(i,i')$ in $G_D$. We do not mention ``with respect to $G_D$" if it is clear from context.
\end{definition}

The \emph{abstract boundary} of a graph $\GDependencyGraph$, denoted by $\Boundary(\GDependencyGraph)$, is the set $\{\vec{p}: (1-\epsilon)\vec{p}\in \Interior(\GDependencyGraph) \textrm{ and } (1+\epsilon)\vec{p}\notin \Interior(\GDependencyGraph) \textrm{ for any }\epsilon\in (0,1)\}$.

\begin{lem}[Theorem 1 in \cite{shearer1985problem}]\label{le:exclusionisworst}
Given a graph $\GDependencyGraph$ and $\vec{p}\in\Interior(\GDependencyGraph)\cup\Boundary(\GDependencyGraph)$. Among all event sets $\EventSet\sim\GDependencyGraph$ with $\Pr(\EventSet)=\vec{p}$, there is an exclusive one such that $\Pr(\cup_{\Event\in \EventSet} \Event )$ is maximized.
\end{lem}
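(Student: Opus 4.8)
\noindent The plan is to reduce the statement, using the two facts already on the table---Shearer's theorem (Theorem~\ref{thm:shearer1985problem}) and the deletion recursion for the independence polynomial (Proposition~\ref{le:property of ind by def}(b))---to the bare \emph{existence} of a single exclusive event set with dependency graph $\GDependencyGraph$ and marginals $\vec{p}$. That existence statement is the real content; once it is in hand the extremality is automatic.

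\noindent\textbf{Step 1: every exclusive system attains the value $I(\GDependencyGraph,\vec{p})$.} Let $\EventSet=\{B_v:v\in V\}\sim\GDependencyGraph$ be exclusive with $\Pr(B_v)=p_v$. I would show $\Pr\big(\bigcap_{v\in S}\overline{B_v}\big)=I(\GDependencyGraph(S),\vec{p}|_S)$ for every $S\subseteq V$ by induction on $|S|$: fixing $v\in S$, exclusiveness gives $B_v\subseteq\overline{B_u}$ for every neighbour $u$, so $B_v\cap\bigcap_{u\in S\setminus\{v\}}\overline{B_u}=B_v\cap\bigcap_{u\in S\setminus\Gamma_v^+}\overline{B_u}$, and since $B_v$ is independent of the $\sigma$-algebra of its non-neighbours this probability is $p_v\cdot\Pr\big(\bigcap_{u\in S\setminus\Gamma_v^+}\overline{B_u}\big)$; thus $\Pr\big(\bigcap_{v\in S}\overline{B_v}\big)=\Pr\big(\bigcap_{u\in S\setminus\{v\}}\overline{B_u}\big)-p_v\Pr\big(\bigcap_{u\in S\setminus\Gamma_v^+}\overline{B_u}\big)$, which is exactly the recursion defining $I$. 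Taking $S=V$ gives $\Pr\big(\bigcup_{v\in V}B_v\big)=1-I(\GDependencyGraph,\vec{p})$ for \emph{every} exclusive $\EventSet$.

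\noindent\textbf{Step 2: this value is the maximum.} If $\vec{p}\in\Interior(\GDependencyGraph)$, Theorem~\ref{thm:shearer1985problem} gives $\Pr\big(\overline{\bigcup_v A_v}\big)\ge I(\GDependencyGraph,\vec{p})$ for every $\EventSet\sim\GDependencyGraph$ with $\Pr(A_v)\le p_v$, so $\Pr(\bigcup_v A_v)\le 1-I(\GDependencyGraph,\vec{p})$; by Step~1 an exclusive system meets this bound, hence is a maximiser. If $\vec{p}\in\partial(\GDependencyGraph)$, then by finiteness of the family of induced subgraphs together with continuity (using $(1-\epsilon)\vec{p}\in\Interior$ and $(1+\epsilon)\vec{p}\notin\Interior$ for all small $\epsilon$) there is $V'\subseteq V$ with $I(\GDependencyGraph(V'),\vec{p}|_{V'})=0$, so Step~1 gives $\Pr\big(\bigcup_{v\in V'}B_v\big)=1$ for an exclusive system, and a fortiori $\Pr\big(\bigcup_{v\in V}B_v\big)=1$, trivially maximal. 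So in both cases it remains only to produce one exclusive $\EventSet\sim\GDependencyGraph$ with $\Pr(B_v)=p_v$.

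\noindent\textbf{Step 3: existence of an exclusive system (the main obstacle).} I would build it by induction on $|V|$, following Shearer's construction. First observe that $\vec{p}\in\Interior(\GDependencyGraph)\cup\partial(\GDependencyGraph)$ forces $I(\GDependencyGraph(S),\vec{p}|_S)\ge0$ for every $S$ (the region is open with $I>0$ on it, hence $I\ge0$ on its closure) and likewise $\vec{p}|_S\in\overline{\Interior(\GDependencyGraph(S))}$. Pick a vertex $v$; by induction there is an exclusive system $\{B_u:u\in V\setminus\{v\}\}$ for $\GDependencyGraph(V\setminus\{v\})$ on a space $\Omega$. Adjoin a fresh independent uniform coordinate and insert $B_v$ \emph{inside} $\overline F$, where $F:=\bigcup_{u\in\Gamma_v}B_u$, arranged so that $B_v$ occupies exactly a $p_v$-fraction of every atom of the $\sigma$-algebra generated by $v$'s non-neighbours: this simultaneously makes $B_v$ disjoint from its neighbours (it lies in $\overline F$) and independent of its non-neighbours (the fraction is the same across their atoms), while leaving every relation among the $B_u$ untouched. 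The one nontrivial point is \emph{feasibility}, namely that $\Pr(\overline F\mid a)\ge p_v$ holds for every such atom $a$. Expanding $\Pr(\overline F\mid a)$ by the conditional form of the Step~1 recursion, and combining $I(\GDependencyGraph,\vec{p})=I(\GDependencyGraph(V\setminus\{v\}),\vec{p})-p_v\,I(\GDependencyGraph(V\setminus\Gamma_v^+),\vec{p})\ge0$ with the monotonicity of $I$ along induced subgraphs and its decrease under adding edges between parts (so $I(\GDependencyGraph(V\setminus\{v\}))\le I(\GDependencyGraph(\Gamma_v))\,I(\GDependencyGraph(V\setminus\Gamma_v^+))$, by Propositions~\ref{le:property of ind by def} and~\ref{prop:1}), this reduces to the nonnegativity of the induced independence polynomials---precisely what $\vec{p}\in\overline{\Interior(\GDependencyGraph)}$ supplies. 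This ``enough room in every conditional atom'' verification is the delicate step, and the only place the hypothesis on $\vec{p}$ is really used; the rest is bookkeeping.
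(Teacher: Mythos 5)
The paper cites this result from Shearer without reproducing the proof, so I evaluate your argument on its own merits. The reduction you make is the right one, and Steps 1 and 2 are correct: once one exclusive event set $\{B_v\}$ with dependency graph $\GDependencyGraph$ and marginals $\vec p$ exists, Step 1's identity $\Pr(\bigcap_{v\in S}\overline{B_v})=I(\GDependencyGraph(S),\vec p|_S)$ plus Theorem~\ref{thm:shearer1985problem} (and the $I=0$ subgraph in the boundary case) give maximality. The gap is in Step 3. You state that ``the one nontrivial point is feasibility,'' i.e.\ $\Pr(\overline F\mid a)\ge p_v$, and you argue that the three bulleted facts (disjointness from neighbours, independence of $B_v$ from $\mathcal G_v:=\sigma\{B_x:x\notin\Gamma_v^+\}$, and ``leaving every relation among the $B_u$ untouched'') suffice. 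They do not: you must also check that the \emph{augmented} family has $\GDependencyGraph$ as dependency graph, and the only interesting case is an old vertex $w$ with $v\notin\Gamma_w^+$, for which $B_w$ must now be \emph{jointly} independent of $\sigma(\{B_x:x\notin\Gamma_w^+\})$, a $\sigma$-algebra that freshly contains $B_v$. ``$B_w\perp\mathcal H_w$'' and ``$B_v\perp\mathcal G_v$'' do not in general compose to ``$B_w\perp\sigma(\mathcal H_w,B_v)$,'' and nothing you wrote addresses this; it is the real delicate point of the induction, not the feasibility count.

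The omission is fillable, but the argument uses exclusiveness in an essential way and deserves to be written. Split the generators of $\mathcal H_w:=\sigma\{B_x: x\notin\Gamma_w^+,\,x\neq v\}$ into those in $\Gamma_v$ and those outside $\Gamma_v^+$, so an atom is $h=h_1\cap h_2$ with $h_2\in\mathcal G_v$. If $h_1$ asserts some $B_x$ with $x\in\Gamma_v$ then $h\cap B_v=\emptyset$ and there is nothing to prove; otherwise $B_v\subseteq\overline F\subseteq h_1$, so $h\cap B_v=h_2\cap B_v$, and since both $B_w$ and $h_2$ lie in $\mathcal G_v$ while $B_v\perp\mathcal G_v$, one gets $\Pr(B_w\cap h\cap B_v)=\Pr(B_w\cap h_2)\,p_v=p_w\Pr(h_2)\,p_v=p_w\Pr(h\cap B_v)$, the middle equality using $B_w\perp\mathcal H_w$. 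Without some such verification the construction is not justified. (A smaller quibble: your feasibility reduction via $I(\GDependencyGraph(V\setminus\{v\}))\le I(\GDependencyGraph(\Gamma_v))\,I(\GDependencyGraph(V\setminus\Gamma_v^+))$ is not what the cited propositions give; the clean route is that for the atom $a$ with ``on'' set $T_0$ one has, by Step 1, $\Pr(\overline F\mid a)\ge p_v$ iff $I(\GDependencyGraph(V\setminus\Gamma^+(T_0)),\vec p)\ge0$, which $\vec p\in\overline{\Interior(\GDependencyGraph)}$ gives directly.)
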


\begin{lem}[Lemma 29 in \cite{he2017variable}]\label{le:WorstPlacement}
Suppose that $\GDependencyGraph$ is a dependency graph of event sets $\EventSet$ and $\EventSetB$, $\Pr(\EventSet)=\Pr(\EventSetB)$, and $\EventSet$ is exclusive. Then $\Pr(\cup_{\Event\in \EventSet} \Event)\geq\Pr(\cup_{B\in \EventSetB} B)$. Moreover, when $G_D$ is connected, the equality holds if and only if $\EventSetB$ is exclusive.
\end{lem}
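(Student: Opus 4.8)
The plan is to reduce the whole statement to one identity: every exclusive realization of $(\GDependencyGraph,\vec p)$ has union probability exactly $1-I(\GDependencyGraph,\vec p)$. I would establish this by inclusion--exclusion on $\EventSetB$. If $S\subseteq[m]$ is not an independent set of $\GDependencyGraph$ it contains an edge $(i,j)$, and exclusiveness forces $\Pr(\bigcap_{k\in S}B_k)\le\Pr(B_i\cap B_j)=0$; if $S$ is independent its vertices are pairwise non-adjacent, so the dependency-graph property yields $\Pr(\bigcap_{k\in S}B_k)=\prod_{k\in S}p_k$. Hence, recalling Definition~\ref{def: indpoly},
\[
\Pr\Big(\bigcup_{B\in\EventSetB}B\Big)=\sum_{\emptyset\neq S\in Ind(\GDependencyGraph)}(-1)^{|S|+1}\prod_{k\in S}p_k=1-I(\GDependencyGraph,\vec p).
\]
The same computation on each induced subgraph gives $I(\GDependencyGraph(S),\vec p)=\Pr\big(\overline{\bigcup_{k\in S}B_k}\big)\ge0$ for all $S$, so $\vec p$ lies in Shearer's (closed) bound; in particular $\vec p\in\Interior(\GDependencyGraph)\cup\partial(\GDependencyGraph)$ and Lemma~\ref{le:exclusionisworst} is available.

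For the inequality, Lemma~\ref{le:exclusionisworst} says the maximum of $\Pr(\bigcup_{A\in\EventSet'}A)$ over all event sets $\EventSet'$ with dependency graph $\GDependencyGraph$ and probability vector $\vec p$ is attained by some exclusive event set; by the identity above that maximum equals $1-I(\GDependencyGraph,\vec p)$, so $\Pr(\bigcup_{A\in\EventSet}A)\le 1-I(\GDependencyGraph,\vec p)=\Pr(\bigcup_{B\in\EventSetB}B)$. (One could equally invoke Theorem~\ref{thm:shearer1985problem}, i.e.\ $\Pr(\overline{\cup A})\ge I(\GDependencyGraph,\vec p)$.) The ``if $\EventSet$ is exclusive'' direction of the equality claim is then immediate, since the identity applies verbatim to $\EventSet$.

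The substantive direction is that equality forces $\EventSet$ to be exclusive. I would prove this by induction on $m=|\EventSet|$, peeling off a single vertex $v$ and writing
\[
\Pr\Big(\overline{\textstyle\bigcup_{i\in[m]}A_i}\Big)=\Pr\Big(\overline{A_v}\ \Big|\ \overline{\textstyle\bigcup_{i\neq v}A_i}\Big)\cdot\Pr\Big(\overline{\textstyle\bigcup_{i\neq v}A_i}\Big),
\]
where $\{A_i\}_{i\neq v}$ has dependency graph $\GDependencyGraph([m]\setminus\{v\})$ and the restriction of $\EventSetB$ provides an exclusive witness for the sub-instance. By the inductive hypothesis the second factor is at least $I(\GDependencyGraph([m]\setminus\{v\}),\vec p)$, with equality iff $\{A_i\}_{i\neq v}$ is exclusive. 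For the first factor I would use Shearer's conditional estimate $\Pr(A_v\mid\overline{\cup_{i\neq v}A_i})\le p_v\cdot I(\GDependencyGraph(N),\vec p)\,/\,I(\GDependencyGraph([m]\setminus\{v\}),\vec p)$ with $N=[m]\setminus\Gamma_v^+$, combined with the recursion $I(\GDependencyGraph([m]\setminus\{v\}),\vec p)-p_v\,I(\GDependencyGraph(N),\vec p)=I(\GDependencyGraph,\vec p)$ from Proposition~\ref{le:property of ind by def}(b); tracking the equality case of that estimate shows it is met only if $A_v\perp A_j$ for every $j\in\Gamma_v$. Multiplying, the hypothesis $\Pr(\overline{\cup_iA_i})=I(\GDependencyGraph,\vec p)$ forces both factors to be extremal, so $\{A_i\}_{i\neq v}$ is exclusive and $A_v$ is disjoint from all its neighbours; as $v$ is arbitrary, $\EventSet$ is exclusive.

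The main obstacle is exactly this conditional Shearer estimate together with its sharp case: the bound on $\Pr(A_v\mid\overline{\cup_{i\neq v}A_i})$ is itself obtained by an inner induction along the lines of \cite{shearer1985problem}, during which conditioning on $\overline{A_v}$ perturbs the probabilities of $v$'s neighbours, and---when $\GDependencyGraph([m]\setminus\{v\})$ is disconnected---one needs joint independence of the components, not merely the pairwise independence handed over by the dependency-graph axiom. Should carrying the equality analysis through that inner induction become unwieldy, a cleaner alternative is to extract the equality case directly from the proof of Lemma~\ref{le:exclusionisworst} in \cite{shearer1985problem}: that argument converts an arbitrary $\EventSet$ into an exclusive one by a sequence of local rearrangements of adjacent pairs, each of which can be arranged to \emph{strictly} increase $\Pr(\bigcup A)$ whenever the pair has positive intersection, so a non-exclusive $\EventSet$ cannot already attain the maximum.
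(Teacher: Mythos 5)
The paper does not prove this lemma itself --- it is imported verbatim as Lemma~29 of \cite{he2017variable} --- so there is no in-paper proof to compare against; the closest in-paper material is the appendix proof of Lemma~\ref{le:probabilitygap}, which the authors present as a quantitative form of this statement and which contains exactly the ingredients your sketch leaves unfinished.

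Your first two steps are sound. The inclusion--exclusion identity $\Pr(\cup_i B_i)=1-I(\GDependencyGraph,\vec p)$ for exclusive $\EventSetB$ is correct (non-independent $S$ vanish by exclusiveness; independent $S$ factor by the joint dependency-graph condition), and running it on induced subgraphs to place $\vec p$ in $\Interior(\GDependencyGraph)\cup\partial(\GDependencyGraph)$ so that Lemma~\ref{le:exclusionisworst} applies is exactly the right preliminary. From there the inequality and the ``exclusive implies equality'' half both follow.

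The genuine gap is the converse of the equality claim, and your sketch underestimates it. Writing $\alpha(S)=\Pr(\cap_{i\in S}\overline{A_i})$ and $\beta(S)=\Pr(\cap_{i\in S}\overline{B_i})$, your conditional Shearer estimate is the monotonicity of $\alpha(S)/\beta(S)$ across $N\subset[m]\setminus\{v\}$ together with the bound $\Pr\bigl(A_v\cap\overline{\cup_{i\neq v}A_i}\bigr)\le p_v\,\alpha(N)$. But equality in that last bound only yields $\Pr\bigl(A_v\cap A_j\cap\overline{\cup_{i\in N}A_i}\bigr)=0$ for each $j\in\Gamma_v$, which is strictly weaker than $\Pr(A_v\cap A_j)=0$; upgrading it requires feeding back the inductive conclusion that $\{A_i\}_{i\neq v}$ is exclusive together with the tightness of the $\alpha/\beta$ monotonicity step, and that is not a one-line matter of ``tracking the equality case'' --- it is precisely the inductive chain carried out in the proof of Lemma~\ref{le:probabilitygap} (the displayed claim $\alpha(S_2)-\tfrac{\alpha(S_1)}{\beta(S_1)}\beta(S_2)\ge\Pr(A_j\cap(\cup_{i\in T_2}A_i))\,\mathbb F(\GDependencyGraph,\vec p,|T_1|)$ is its quantitative form). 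Your fallback via strictly-improving local rearrangements would also need an argument of the same depth, since Lemma~\ref{le:exclusionisworst} is stated as a bare existence result rather than a rearrangement procedure whose steps can be made strict.
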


The following basic fact about quantum exclusiveness will also be used.
\begin{lem}\label{le:exclusivemonotonicity}
Given an interaction bipartite graph $\GBipartiteGraph=([m], [n], E_B)$ and a rational vector $\vec{r}\in[0,1)^m$, if there is an exclusive subspace set $\SubspaceSet\sim \GBipartiteGraph$ with $\RR(\SubspaceSet) = \vec{r}$, then for any rational $\vec{r'} \leq \vec{r}$, there is an exclusive subspace set $\SubspaceSet'\sim \GBipartiteGraph$ with $\RR(\SubspaceSet') = \vec{r'}$.
\end{lem}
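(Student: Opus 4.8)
The idea is a single global construction: first enlarge all the qudits so that the target local dimensions become integral, and then shrink each local subspace to a subspace of itself of the correct dimension, which automatically preserves exclusiveness. Write the given exclusive set as $\mathcal{V}=\{V_i\}_{i\in[m]}$ with $V_i=V_i^{loc}\otimes\mathcal{H}_{[n]\setminus\mathcal{N}(i)}$, $V_i^{loc}\subseteq\mathcal{H}_{\mathcal{N}(i)}$, so that $R(V_i)=\dim(V_i^{loc})/\dim(\mathcal{H}_{\mathcal{N}(i)})=r_i$ and $V_i\perp V_j$ for all $j\in\Gamma_i$. (We may assume every left vertex has a neighbour, since for connected $\GBipartiteGraph$ with at least one qudit this is automatic, and the remaining degenerate case is trivial.) The key observation is that, for any subspace $W_i\subseteq V_i^{loc}$, the subspace $W_i\otimes\mathcal{H}_{[n]\setminus\mathcal{N}(i)}$ is still contained in $V_i$, hence still orthogonal to $V_j$ for every $j\in\Gamma_i$; thus exclusiveness is preserved under shrinking local parts. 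The only obstruction to taking $W_i$ of relative dimension exactly $r_i'$ is that $r_i'\cdot\dim(\mathcal{H}_{\mathcal{N}(i)})$ need not be an integer.

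To remove this obstruction, write $r_i'=a_i/b_i$ in lowest terms, set $N=\prod_{i=1}^m b_i$, and replace each qudit $\mathcal{H}_j$ by $\mathcal{H}_j':=\mathcal{H}_j\otimes\mathbb{C}^N$, exactly as in the proofs of Lemma \ref{le:monotonicity} and Proposition \ref{prop:multiple}. In the enlarged space $\mathcal{H}_{[n]}'=\mathcal{H}_{[n]}\otimes\mathbb{C}^{N^n}$, each $V_i$ lifts to $V_i^{new}:=V_i\otimes\mathbb{C}^{N^n}$, which again conforms with $\GBipartiteGraph$ with local part $V_i^{loc,new}:=V_i^{loc}\otimes\mathbb{C}^{N^{|\mathcal{N}(i)|}}\subseteq\mathcal{H}_{\mathcal{N}(i)}'$, still satisfies $R(V_i^{new})=r_i$, and $\{V_i^{new}\}$ is again exclusive since tensoring all subspaces with the common factor $\mathbb{C}^{N^n}$ preserves orthogonality of adjacent pairs. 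Now $\dim(\mathcal{H}_{\mathcal{N}(i)}')=N^{|\mathcal{N}(i)|}\dim(\mathcal{H}_{\mathcal{N}(i)})$ is divisible by $b_i$ (because $|\mathcal{N}(i)|\geq 1$ and $b_i\mid N$), so $\ell_i:=r_i'\cdot\dim(\mathcal{H}_{\mathcal{N}(i)}')$ is a nonnegative integer, and $\ell_i\leq r_i\cdot\dim(\mathcal{H}_{\mathcal{N}(i)}')=\dim(V_i^{loc,new})$ since $r_i'\leq r_i$.

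Finally, for each $i$ pick an arbitrary subspace $W_i\subseteq V_i^{loc,new}$ with $\dim(W_i)=\ell_i$ and set $V_i':=W_i\otimes\mathcal{H}_{[n]\setminus\mathcal{N}(i)}'$. Then $\mathcal{V}'=\{V_i'\}_{i\in[m]}$ conforms with $\GBipartiteGraph$, has $R(V_i')=\ell_i/\dim(\mathcal{H}_{\mathcal{N}(i)}')=r_i'$, and is exclusive: for $j\in\Gamma_i$ one has $V_i'\subseteq V_i^{new}$, $V_j'\subseteq V_j^{new}$ and $V_i^{new}\perp V_j^{new}$, hence $V_i'\perp V_j'$. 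This gives the desired exclusive subspace set with relative dimension vector $\vec{r}'$. The argument is essentially routine; the only substantive point — and the reason a naive ``delete a chunk of $V_1$'' argument does not apply verbatim — is the integrality constraint on relative dimensions in a fixed finite-dimensional Hilbert space, which is precisely the difficulty already circumvented by adding qudits in Lemma \ref{le:monotonicity}. One should also keep in mind the alternative, more hands-on route of proceeding one coordinate at a time (reducing $r_i$ to $r_i'$ by enlarging a single qudit in $\mathcal{N}(i)$ and passing to a subspace of $V_i^{loc}$), which yields the same conclusion; the global version above is simply more compact.
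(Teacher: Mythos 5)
Your proof is correct and uses the same essential mechanism as the paper's: enlarge qudits by tensoring with an auxiliary factor so that the target relative dimensions become realizable as integer dimensions in $\mathcal{H}_{\mathcal{N}(i)}'$, and then observe that passing to any subspace of the lifted local part $V_i^{loc,\mathrm{new}}$ inherits orthogonality with all lifted neighbors, so exclusiveness is preserved. The only difference is presentational: the paper shrinks one coordinate at a time (reducing $r_1$ to $r_1-\epsilon$ by tensoring only $\mathcal{H}_1$ with a $b$-dimensional factor and taking $V_1'=V_1\otimes W$), whereas you do it globally by tensoring every qudit with $\mathbb{C}^N$ and shrinking each local part at once — a route you yourself flag as equivalent to the coordinate-at-a-time variant. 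Both rely on the same lemma-level observations, and your global version is a harmless uniformization of the paper's argument.
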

\begin{proof}
Without loss of generality, we assume that $\vec{r'} = (r_1-\epsilon,r_2,\cdots,r_m)$ and the edge $(1,1)$ exists in $G_B$. Since $\vec{r'}$ and $\vec{r}$ are both rational, so is $\epsilon/r_1$. Suppose $\epsilon/r_1 = a/b$ where $a$ and $b$ are integers. Let $\mathcal{H}_1,\cdots,\mathcal{H}_n$ be the qudits that $\SubspaceSet=\{V_1,\cdots,V_m\}$ acts on. Define $\mathcal{H}'_1 = \mathcal{H}_1 \otimes \mathcal{H}_1^c$ where $\dim{(\mathcal{H}_1^c)} = b$, and $\mathcal{H}'_i = \mathcal{H}_i$ for any $i\geq 2$. We construct a subspace set $\SubspaceSet'=\{V_1',\cdots,V_m'\}$ acting on the qudits $\mathcal{H}_1',\cdots,\mathcal{H}_n'$ as follows. Let $V'_1 =V_1\otimes W$ where $W$ is an arbitrary $(b-a)$-dimensional subspace of $\mathcal{H}_1^c$. For each $i\geq 2$, let $V'_i = V_i\otimes \mathcal{H}_1^c$. One can easily check that $\SubspaceSet'\sim \GBipartiteGraph$, $\RR(\SubspaceSet') = \vec{r'}$, and $\SubspaceSet'$ is exclusive.
\end{proof}

Now, we are ready to prove Theorem \ref{Conj:GapGeom}.

\begin{proof}[Proof of Theorem \ref{Conj:GapGeom}] Let $G_D$ denote the base graph of $G_B$.

\underline{$(a)\Rightarrow(b)$}: Given any rational $\vec{r}\in\CInterval(\GBipartiteGraph)$, suppose there is an exclusive subspace set $\SubspaceSet \sim \GBipartiteGraph$ with $\RR(\SubspaceSet)=\vec{r}$. Because $\vec{r}\in\CInterval(\GBipartiteGraph)$, we have $\RR(\bigplus_{\Subspace\in\SubspaceSet}\Subspace)<1$. In the following, we will construct an exclusive classical event set $\EventSet$ with dependency graph $G_D$ and $\Pr(\EventSet)=\vec{r}$ such that $\Pr(\cup_{A\in\EventSet} A)=\RR(\sum_{\Subspace\in\SubspaceSet}\Subspace)<1$. 
By Lemma \ref{le:WorstPlacement}, we have $\Pr(\cup_{B\in\EventSetB} B)<1$ for every event set $\EventSetB\sim G_D$ with $\Pr(\EventSetB)=\vec{r}$. Thus $\vec{r}\in \Interior(\GBipartiteGraph)$, which concludes the proof of this direction.

The event set $\EventSet=(A_1,\cdots,A_m)$ is constructed as follows. Because $\SubspaceSet$ is commuting, $V_1,\cdots,V_m$ can be diagonalized with respect to the same orthonormal basis, denoted by $\{|e_\ell\rangle:\ell\in[t]\}$. Let $(\Omega,\mathcal{F},\Pr)$ be a probability space where $\Omega = \{|e_\ell\rangle:\ell\in[t]\}$,  $\mathcal{F} = 2^\Omega$, and $\Pr(|e_i\rangle) = 1/t$ for each $i$. Define $A_i=\{|e_\ell\rangle:|e_\ell\rangle\in V_i\}$. Obviously, $\Pr(A_i)=\RR(V_i)$ and $\Pr(\cup_{A\in\EventSet} A)=\RR(\sum_{\Subspace\in\SubspaceSet}\Subspace)<1$. Moreover, for any $i\in[m]$ and $S\subseteq [m]\setminus (\Neighbor(G_D,i)\cup\{i\})$, we have $\Pr(A_i\cap (\cup_{i'\in S} A_{i'}))=\RR(V_i\cap (\sum_{i'\in S} V_{i'}))=\RR(V_i)\cdot \RR(\sum_{i'\in S} V_{i'})=\Pr(A_i)\cdot\Pr(\cup_{i'\in S} A_{i'})$, thus $A_i$ is independent with $\{A_{i'}:i'\notin \Neighbor(G_D,i)\cup\{i\}\}$. So $G_D(G_B)$ is a dependency graph of $\EventSet$. Finally, for any edge $(i,i')$ in $G_D$, $\Pr(A_i\cap A_{i'})=\RR(V_i\cap V_{i'})=0$, thus $\EventSet$ is exclusive.

\vspace{1ex}
\noindent\underline{$(b)\Rightarrow(a)$}: We assume that $G_B$ is gapless, i.e., $\CInterval(G_B)\subset \Interior(G_D)$. Given a rational $\vec{r}\in \CInterval(\GBipartiteGraph)$, choose an arbitrary rational vector $\vec{q}\in\Boundary(G_D)$ satisfying that $\vec{q}\geq \vec{r}$. As $\Boundary(G_D)\cap \Interior(G_D)=\emptyset$ and $\CInterval(G_B)\subset \Interior(G_D)$, it has that $\vec{q}\notin\CInterval(G_B)$. By Definition \ref{defcommutinginterior}, there exists a commuting subspace set $\SubspaceSet\sim G_B$ with $\RR(\SubspaceSet)=\vec{q}$ such that $\RR\left(\bigplus_{V\in\SubspaceSet}V \right)=1$. In the following, we will show that $\SubspaceSet$ is exclusive, which then concludes the proof by Theorem \ref{le:exclusivemonotonicity}. 

We define a classical event set $\EventSet=\{A_1,\cdots,A_m\}$ corresponding to $\SubspaceSet$ as in the proof of the direction $(a)\Rightarrow (b)$. By Lemma \ref{le:exclusionisworst}, there is an exclusive event set $\EventSetB\sim G_D$ with $\Pr(\EventSetB)=\vec{q}$. Because $G_D$ is connected and $\Pr(\cup_{A\in\EventSet} A)=\RR(\sum_{\Subspace\in\SubspaceSet}\Subspace)=1\geq \Pr(\cup_{B\in\EventSetB} B)$, we have that $\Event$ is exclusive by Lemma \ref{le:WorstPlacement}.  Finally, noticing that $V_i\perp V_{i'}$ if and only if $\Pr(A_i\cap A_{i'})=0$, we conclude that $\SubspaceSet$ is exclusive.  
\end{proof}


\subsection{Reduction Rules}\label{sec:rules}
To infer gap existence of a bipartite graph from known ones, a set of reduction rules are established for VLLL ~\cite{he2017variable}. With these reduction rules, various bipartite graphs are shown to be gapful/gapless for VLLL. In this subsection, we leverage these reduction rules to CLLL. 

We consider the following six types of operations on an interaction bipartite graph  $G_B=([m],[n],E_B)$.
\begin{itemize}
\item Delete-$R$-Leaf: Delete a vertex $j\in [n]$ on the right side with $|\mathcal{N}(j)|\leq 1$, and remove the incident edge if any.
\item Duplicate-$L$-Vertex: Given a vertex $i\in[m]$ on the left side, add a vertex $i'$ to the left side, and add edges incident to $i'$ so that $\mathcal{N}(i')=\mathcal{N}(i)$.
\item Duplicate-$R$-Vertex:  Given a vertex $j\in [n]$ on the right side, add a vertex $j'$ to the right side, and add some edges incident to $j'$ so that $\Neighbor(j')\subseteq \Neighbor(j)$.
\item Delete-Edge: Delete an edge from $E_B$ provided that the base graph $G_D$ remains unchanged.
\item Delete-$L$-Vertex: Delete a vertex $i\in[m]$ on the left side, and remove all the incident edges.
\item  Delete-$L$-Leaf: Delete a vertex $i\in [m]$ on the left side with $|\Neighbor(i)|\leq 1$, and remove the incident edge if any.
\end{itemize}


The following theorem summarizes how these operations influence the existence of gaps.
\begin{thm}\label{thm:reductionrules}
Given an interaction bipartite graph $\GBipartiteGraph=([m],[n],E_B)$, we have
\begin{itemize}
\item[(a)] if $G_B$ is gapful for CLLL, then it remains gapful after applying Delete-Edge;
\item[(b)] if $G_B$ is gapless for CLLL, then it remains gapless after applying Delete-$L$-Vertex;
\item[(c)] $G_B$ is gapful if and only it is gapful after applying Delete-$R$-Leaf, Duplicate-$L$-Vertex, Duplicate-$R$-Vertex, or Delete-$L$-Leaf;
\end{itemize}
\end{thm}

\begin{proof} Throughout this proof, we only consider bipartite graphs which are connected. This restriction does not lose
generality for the following reason. The abstract interior (and commuting interior respectively) of a disconnected bipartite graph is exactly the direct product of the abstract interiors (and commuting interior respectively) of its connected components, which are also bipartite graphs. So a bipartite graph is gapless if and only if each of its connected component is gapless.

\underline{Part (a).} Let $G_B'$ be the bipartite graph obtained from $G_B$ after applying Delete-Edge. Because the base graph is unchanged, $\Interior(G_B)=\Interior(G_B')$. Moreover, it is obvious that $\CInterval(G_B')\subseteq\CInterval(G_B)$. So Part (a) is immediate by Definition \ref{defgap}.

\underline{Part (b).} W.l.o.g., we delete the left vertex $m$ from $G_B$ and obtain a bipartite graph $G_B'$. Let $G_D$ and $G_D'$ denote the base graphs of $G_B$ and $G_B'$ respectively. A key observation is that $I(G_D',\vec{x})\equiv I\left(G_D,(\vec{x}^T,0)\right)$. So $\Interior(G_D')=\{\vec{p}\in\Interior(G_D):p_m=0\}$. Moreover, it is easy to see that $\CInterval(G_B')=\{\vec{p}\in\CInterval(G_B):p_m=0\}$ by Definition \ref{defcommutinginterior}. Now, Part (b) is obvious. 

\underline{Part (c).}  {\it Duplicate-R-Vertex:} Trivial, because both the abstract interior and commuting interior remain unchanged after applying  Duplicate-R-Vertex.

{\it Duplicate-L-Vertex}: W.l.o.g., we duplicate the left vertex $m$ and obtain a bipartite graph $G_B'$. That is, $G_B'=([m+1],[n],E_B')$ where $E_B'=E_B\cup \{(m+1,j):j\in\mathcal{N}(m)\}$. Let $G_D$ and $G_D'$ denote the base graphs of $G_B$ and $G_B'$ respectively. It is straightforward to verify that $I(G_D',\vec{x})\equiv I\left(G_D,(x_1,\cdots,x_{m-1},x_{m}+x_{m+1})\right)$. So $\Interior(G_D')=\{\vec{p}:(p_1,\cdots,p_{m-1},p_m+p_{m+1})\in\Interior(G_D)\}$. Moreover, it is easy to see that $\CInterval(G_B')=\{\vec{p}:(p_1,\cdots,p_{m-1},p_m+p_{m+1})\in\CInterval(G_B)\}$ by Definition \ref{defcommutinginterior}. Then this case is obvious. 

{\it Delete-$R$-Leaf}: W.l.o.g., suppose that the vertex $n$ on the right side satisfies that $|\mathcal{N}(n)|\leq 1$. The nontrivial case is when $|\mathcal{N}(n)|=1$. W.l.o.g., we assume $\Neighbor(n)=\{m\}$. Furthermore, as $G_B$ is connected, the left vertex $m$ has at least two incident edges, and we assume the edge $(m,n-1)$ exists in $G_B$ without loss of generality. Let $G_B'=([m],[n-1],E_B'\})$ denote the bipartite graph obtained from $G_B$ by deleting the vertex $n$ and its incident edge $(m,n)$. On one hand, since $G_B'$ can be obtained from $G_B$ by applying Delete-Edge and then deleting the isolated right vertex $n$, so according to Part (a), if $G_B$ is gapful, then so is $G_B'$. On the other hand, let $\GBipartiteGraph''=([m],[n],E_B'')$ be another bipartite graph where $E_B''=E_B\cup \{(i,n):i\in\mathcal{N}(n-1)\}$. Because $G_B$ can be obtained from $G_B''$ by applying Delete-Edge operations, so according to Part (a), if $G_B$ is gapless, then so is $G_B''$. Moreover, $G_B''$ can be obtained from $G_B'$ by applying Duplicate-$R$-Vertex, so if $G_B''$ is gapless, then so is $G_B'$.


{\it Delete-$L$-Leaf}: W.l.o.g., suppose that the vertex $m$ on the left side satisfies that $|\mathcal{N}(m)|\leq 1$. The nontrivial case is when $|\mathcal{N}(m)|=1$. W.l.o.g., we assume $\Neighbor(m)=\{n\}$ and let $G_B'=([m-1],[n],E_B'\})$ denote the resulted bipartite graph after removing the vertex $m$ and its incident edge $(m,n)$. Besides, we assume $\Neighbor(G_B,n) = \{1,2,\cdots,k,m\}$. As $G_B$ is connected, we have $k\geq 1$.

$\GBipartiteGraph'$ is gapless $\Rightarrow$ $\GBipartiteGraph$ is gapless: Given any rational $\vec{r}= (r_1,\cdots,r_{m}) \in\CInterval(\GBipartiteGraph)$, we will construct an exclusive $\SubspaceSet\sim G_B$, which implies $\GBipartiteGraph$ is gapless by Theorem \ref{Conj:GapGeom}. Define 
\[
\vec{r}' = \left(\frac{r_1}{1 - r_{m}},\frac{r_2}{1 - r_{m}}, \cdots, \frac{r_k}{1 - r_{m}}, r_{k+1},\cdots,r_{m-1}\right).
\] 

We claim that $\vec{r}' \in \CInterval(\GBipartiteGraph')$. By contradiction, suppose that $\vec{r}' \notin \CInterval(\GBipartiteGraph')$. As $\vec{r}'$ is rational, there is a commuting subspace set $\SubspaceSet'\sim\GBipartiteGraph'$ with $\RR(\SubspaceSet')=\vec{r}'$ satisfying $\RR\left(\bigplus_{i\in[m-1]} V_i' \right)=1$. Let $\mathcal{H}_1'\otimes\cdots\otimes\mathcal{H}_n'$ denote the qudits that $\SubspaceSet'$ acts on. We construct a commuting subspace set $\SubspaceSet\sim\GBipartiteGraph$ as follows. Suppose $1 - r_{m} = a/b$ where $a$ and $b$ are integers. Let $\mathcal{H}_n = \mathcal{H}'_n \otimes \mathcal{H}_n^c$ where $\dim{(\mathcal{H}_n^c)} = b$, and $\mathcal{H}'_j = \mathcal{H}_j$ for any $j\in [n-1]$. Pick an arbitrary $a$-dimensional subspace $W$ of $\mathcal{H}_n^c$ and let
\begin{itemize}
\item $V_i =V_i' \otimes W$ for each $i\in[k]$,
\item $V_i =V_i'\otimes \mathcal{H}_n^c $ for each $k<i\leq m-1$, and
\item $V_{m}= W^{\perp}\otimes\left(\bigotimes_{i=1}^n \mathcal{H}'_i\right)$.
\end{itemize}
It is easy to verify that $\SubspaceSet$ is commuting, $\SubspaceSet \sim \GBipartiteGraph$,  $
\RR\left(\sum_{i\in[m]} V_i \right)=1$, and $
\RR(\SubspaceSet) = \vec{r}$. So $\vec{r}\notin\CInterval(G_B)$. A contradiction.

Because $\vec{r}'\in \CInterval(\GBipartiteGraph')$ and $\GBipartiteGraph'$ is gapless, according to Theorem \ref{Conj:GapGeom}, there is an exclusive subspace set $\SubspaceSet'\sim \GBipartiteGraph'$ where $\RR(\SubspaceSet')=\vec{r}'$. We construct $\SubspaceSet$ from $\SubspaceSet'$ as above. It is easy to verify that $\SubspaceSet\sim G_B$, $\RR(\SubspaceSet) = \vec{r}$, and $\SubspaceSet$ is exclusive.

$\GBipartiteGraph$ is gapless $\Rightarrow$ $\GBipartiteGraph'$ is gapless: Note that $G_B'$ can be obtained from $G_B$ by applying the Delete-L-Vertex (deleting the left vertex $m$). So $G_B'$ is gapless according to Part (b).
\end{proof}
With these reduction rules, it is easy to see all trees are gapless, including 1-D chains \cite{movassagh2010unfrustrated} and regular trees \cite{shearer1985problem,heilmann1972theory,coudron2012unfrustration,pnas}.

\begin{thm}\label{thm:treesgapless}
	An interaction bipartite graph $\GBipartiteGraph$ is gapless for CLLL if it is a tree.
\end{thm}

\begin{proof} Suppose $G_B=([m],[n],E_B)$ is a tree. We can obtain the bipartite graph $([1],[1], \{(1,1)\})$ from $G_B$ by applying Delete-$L$-Leaf or Delete-$R$-Leaf repeatedly. Obviously, $G_B'$ is gapless. So $G_B$ is also gapless according to Theorem \ref{thm:reductionrules}.
\end{proof}

\subsection{An Almost Complete Characterization of Gap Existence}\label{sec:gapexistence}

In this subsection, we prove Theorem \ref{cor:basegraphCyclegapful}, which gives an almost complete characterization of gap existence except when the base graph has only 3-cliques.
\begin{thm}\label{cor:basegraphCyclegapful}
	Given an interaction bipartite graph $G_B$, $G_B$ is gapless for CLLL if its base graph is a tree, and gapful for CLLL if its base graph has an induced cycle of length at least 4.
\end{thm}
\begin{proof}
Let $G_D$ denote the base graph of $G_B$. If $G_D$ is a tree, then $G_B$ is gapless for VLLL by Theorem 6 in~\cite{he2017variable}, so it is also gapless for CLLL.

Suppose $G_D$ has an induced cycle of length at least 4. W.l.o.g., we assume the induced subgraph of $G_D$ on $[\ell]$, where $\ell\geq 4$, is a cycle $([\ell],E_C)$ where $E_C=\{(i,i+1):i\in[\ell-1]\}\cup \{(\ell,1)\}$. We apply Delete-$L$-Vertex on $G_B$ to delete all left vertices not in $[\ell]$ and then apply Delete-$R$-Leaf to delete all right vertices whose degree are at most 1, then we obtain a bipartite graph $G_B'=([\ell],S,E_B')$ such that for any $j\in S$, either $\mathcal{N}(G_B',j)=\{i,i+1\}$ for some $i\in[\ell-1]$ or $\mathcal{N}(G_B',j)=\{\ell,1\}$. So all $j\in S$ are solitary. Moreover, by Theorem 7 in \cite{he2017variable}, such $G_B'$ is gapful for VLLL, so $G_B'$ is also gapful for CLLL according to Theorem \ref{thm:2-loc}. Finally, we can conclude $G_B$ is gapful for CLLL according to Theorem \ref{thm:reductionrules} (b) and (c).  
\end{proof}

\section*{Acknowledgement}
We are grateful to Andr{\'a}s Gily{\'e}n for his manuscript about VLLL and telling us the problem whether Shearer's bound is tight for QLLL. We also thank the anonymous referees for their valuable comments.


\bibliographystyle{plain}

\begin{thebibliography}{10}

\bibitem{aharonov2011complexity}
Dorit Aharonov and Lior Eldar.
\newblock On the complexity of commuting local hamiltonians, and tight
  conditions for topological order in such systems.
\newblock In {\em Foundations of Computer Science (FOCS), 2011 IEEE 52nd Annual
  Symposium on}, pages 334--343. IEEE, 2011.

\bibitem{aharonov2018complexity}
Dorit Aharonov, Oded Kenneth, and Itamar Vigdorovich.
\newblock On the complexity of two dimensional commuting local hamiltonians.
\newblock In {\em 13th Conference on the Theory of Quantum Computation,
  Communication and Cryptography, {TQC} 2018, July 16-18, 2018, Sydney,
  Australia}, volume 111, pages 2:1--2:21, 2018.

\bibitem{ambainis2012quantum}
Andris Ambainis, Julia Kempe, and Or~Sattath.
\newblock A quantum {L}ov{\'a}sz local lemma.
\newblock {\em Journal of the ACM (JACM)}, 59(5):24, 2012.

\bibitem{bravyi2011efficient}
Sergey Bravyi.
\newblock Efficient algorithm for a quantum analogue of 2-sat.
\newblock {\em Contemporary Mathematics}, 536:33--48, 2011.

\bibitem{Bravyi2010}
Sergey Bravyi, Cristopher Moore, and Alexander Russell.
\newblock Bounds on the quantum satisfiability threshold.
\newblock In {\em Innovations of Computer Science}, pages 482--489, 2010.

\bibitem{bravyi2005commutative}
Sergey Bravyi and Mikhail Vyalyi.
\newblock Commutative version of the local hamiltonian problem and common
  eigenspace problem.
\newblock {\em Quantum Information \& Computation}, 5(3):187--215, 2005.

\bibitem{coudron2012unfrustration}
Matthew Coudron and Ramis Movassagh.
\newblock Unfrustration condition and degeneracy of qudits on trees.
\newblock {\em arXiv preprint arXiv:1209.4395}, 2012.

\bibitem{Cubitt2012A}
Toby~S. Cubitt and Martin Schwarz.
\newblock A constructive commutative quantum {L}ov{\'a}sz local lemma, and
  beyond.
\newblock {\em Eprint Arxiv}, 2012.

\bibitem{erdos1975problems}
Paul Erd{\H{o}}s and L{\'a}szl{\'o} {L}ov{\'a}sz.
\newblock Problems and results on 3-chromatic hypergraphs and some related
  questions.
\newblock {\em Infinite and finite sets}, 10(2):609--627, 1975.

\bibitem{gebauer2009lovasz}
Heidi Gebauer, Robin~A Moser, Dominik Scheder, and Emo Welzl.
\newblock The {L}ov{\'a}sz local lemma and satisfiability.
\newblock In {\em Efficient Algorithms}, pages 30--54. Springer, 2009.

\bibitem{gebauer2016local}
Heidi Gebauer, Tibor Szab{\'o}, and G{\'a}bor Tardos.
\newblock The local lemma is asymptotically tight for {SAT}.
\newblock {\em Journal of the ACM (JACM)}, 63(5):43, 2016.

\bibitem{gilyen2016preparing}
Andr{\'a}s Gily{\'e}n and Or~Sattath.
\newblock On preparing ground states of gapped hamiltonians: An efficient
  quantum lovász local lemma.
\newblock In {\em 2017 IEEE 58th Annual Symposium on Foundations of Computer
  Science (FOCS)}, pages 439--450, 2017.

\bibitem{giotis2017acyclic}
Ioannis Giotis, Lefteris Kirousis, Kostas~I Psaromiligkos, and Dimitrios~M
  Thilikos.
\newblock Acyclic edge coloring through the {L}ov{\'a}sz local lemma.
\newblock {\em Theoretical Computer Science}, 665:40--50, 2017.

\bibitem{gottesman2009quantum}
Daniel Gottesman and Sandy Irani.
\newblock The quantum and classical complexity of translationally invariant
  tiling and hamiltonian problems.
\newblock In {\em Foundations of Computer Science, 2009. FOCS'09. 50th Annual
  IEEE Symposium on}, pages 95--104. IEEE, 2009.

\bibitem{he2017variable}
Kun He, Liang Li, Xingwu Liu, Yuyi Wang, and Mingji Xia.
\newblock Variable-version {L}ov{\'{a}}sz local lemma: Beyond shearer's bound.
\newblock In {\em 58th {IEEE} Annual Symposium on Foundations of Computer
  Science, {FOCS} 2017, Berkeley, CA, USA, October 15-17, 2017}, pages
  451--462, 2017.

\bibitem{soda23mt}
Kun He, Qian Li, and Xiaoming Sun.
\newblock Moser-tardos algorithm: Beyond shearer's bound.
\newblock {\em CoRR}, abs/2111.06527, 2021.

\bibitem{heilmann1972theory}
Ole~J. Heilmann and Elliott~H. Lieb.
\newblock Theory of monomer-dimer systems.
\newblock {\em Communications in Mathematical Physics}, 25(3):190--232, 1972.

\bibitem{kitaev2003fault}
A~Yu Kitaev.
\newblock Fault-tolerant quantum computation by anyons.
\newblock {\em Annals of Physics}, 303(1):2--30, 2003.

\bibitem{kolipaka2011moser}
Kashyap Babu~Rao Kolipaka and Mario Szegedy.
\newblock Moser and tardos meet lov{\'a}sz.
\newblock In {\em Proceedings of the forty-third annual ACM symposium on Theory
  of computing}, pages 235--244. ACM, 2011.

\bibitem{Laumann2010On}
C.~R. Laumann, A.~M. L\"auchli, R.~Moessner, A.~Scardicchio, and S.~L. Sondhi.
\newblock On product, generic and random generic quantum satisfiability.
\newblock {\em Physical Review A}, 81(6):359--366, 2010.

\bibitem{laumann2009phase}
Chris Laumann, Roderich Moessner, Antonello Scardicchio, and Shivaji Sondhi.
\newblock Phase transitions in random quantum satisfiability.
\newblock {\em Bulletin of the American Physical Society}, 54, 2009.

\bibitem{mcdiarmid1997hypergraph}
Colin McDiarmid.
\newblock Hypergraph colouring and the {L}ov{\'a}sz local lemma.
\newblock {\em Discrete Mathematics}, 167:481--486, 1997.

\bibitem{moitra2016approximate}
Ankur Moitra.
\newblock Approximate counting, the {L}ov{\'a}sz local lemma, and inference in
  graphical models.
\newblock In {\em Proceedings of the 49th Annual ACM SIGACT Symposium on Theory
  of Computing}, pages 356--369. ACM, 2017.

\bibitem{Morampudi2018Many}
Siddhardh~C. Morampudi and Chris~R. Laumann.
\newblock Many-body systems with random spatially local interactions.
\newblock {\em Physical Review B}, 2019.

\bibitem{moser2010constructive}
Robin~A Moser and G{\'a}bor Tardos.
\newblock A constructive proof of the general {L}ov{\'a}sz local lemma.
\newblock {\em Journal of the ACM (JACM)}, 57(2):11, 2010.

\bibitem{movassagh2010unfrustrated}
Ramis Movassagh, Edward Farhi, Jeffrey Goldstone, Daniel Nagaj, Tobias~J.
  Osborne, and Peter~W. Shor.
\newblock Unfrustrated qudit chains and their ground states.
\newblock {\em Physical Review A}, 82(1):16279--16288, 2010.

\bibitem{pegden2014extension}
Wesley Pegden.
\newblock An extension of the moser--tardos algorithmic local lemma.
\newblock {\em SIAM Journal on Discrete Mathematics}, 28(2):911--917, 2014.

\bibitem{sattath2015constructive}
Or~Sattath and Itai Arad.
\newblock A constructive quantum {L}ov{\'a}sz local lemma for commuting
  projectors.
\newblock {\em Quantum Information \& Computation}, 15(11-12):987--996, 2015.

\bibitem{pnas}
Or~Sattath, Siddhardh~C. Morampudi, Chris~R. Laumann, and Roderich Moessner.
\newblock When a local hamiltonian must be frustration-free.
\newblock {\em Proceedings of the National Academy of Sciences},
  113(23):6433--6437, 2016.

\bibitem{schuch2011complexity}
Norbert Schuch.
\newblock Complexity of commuting hamiltonians on a square lattice of qubits.
\newblock {\em Quantum Information \& Computation}, 11(11-12):901--912, 2011.

\bibitem{schwarz2013information}
Martin Schwarz, Toby~S Cubitt, and Frank Verstraete.
\newblock An information-theoretic proof of the constructive commutative
  quantum {L}ov{\'a}sz local lemma.
\newblock {\em arXiv preprint arXiv:1311.6474}, 2013.

\bibitem{shearer1985problem}
James~B Shearer.
\newblock On a problem of spencer.
\newblock {\em Combinatorica}, 5(3):241--245, 1985.

\bibitem{spencer1977asymptotic}
Joel Spencer.
\newblock Asymptotic lower bounds for {R}amsey functions.
\newblock {\em Discrete Mathematics}, 20:69--76, 1977.

\end{thebibliography}

\end{document}